\documentclass[letterpaper, 11pt]{article}
\usepackage[utf8]{inputenc}
\usepackage{amsmath,amsthm}
\usepackage{amsfonts}
\usepackage{amssymb}
\usepackage{graphicx}
\usepackage[margin=1in]{geometry}
\usepackage{algorithmic}
\usepackage{mathtools}
\usepackage[ruled, linesnumbered]{algorithm2e}

\SetKwComment{Comment}{$\triangleright$\ }{}
\SetKwFor{Loop}{Loop}{}{}

\usepackage{hyperref}
\makeatletter
\def\namedlabel#1#2{\begingroup
    #2%
    \def\@currentlabel{#2}%
    \phantomsection\label{#1}\endgroup
}
\makeatother

\newtheorem{theorem}{Theorem}[]
\newtheorem{corollary}{Corollary}[section]
\newtheorem{claim}{Claim}[section]
\newtheorem{definition}{Definition}[section]
\newtheorem{lemma}{Lemma}[section]

\newcommand{\eps}{\varepsilon}
\newcommand{\lis}{\mathsf{LIS}}
\newcommand{\lcs}{\mathsf{LCS}}
\newcommand{\ed}{\mathsf{ED}}
\newcommand{\lns}{\mathsf{LNS}}
\newcommand{\lnst}{\mathsf{LNST}}
\newcommand{\N}{\mathbb{N}}
\newcommand{\dis}{\mathsf{DIS}}
\newcommand{\opt}{\mathsf{OPT}}

\newcommand{\tB}{\tilde{B}}
\newcommand{\bB}{\bar{B}}

\newcommand{\polylog}{\operatorname{polylog}}

\usepackage{graphicx}

\begin{document}
\begin{titlepage}
\def\thepage{}

\title{Lower Bounds and Improved Algorithms for Asymmetric Streaming Edit Distance and Longest Common Subsequence}


\author{Xin Li \thanks{Supported by NSF CAREER Award CCF-1845349.}   \\  Department of Computer Science, \\
	 Johns Hopkins University. \\lixints@cs.jhu.edu
	
	\and 
	
	Yu Zheng \thanks{Supported by NSF CAREER Award CCF-1845349.} \\
	Department of Computer Science \\
	Johns Hopkins University. \\
	yuzheng@cs.jhu.edu }

\date{}

\maketitle \thispagestyle{empty}
\begin{abstract}
In this paper, we study \emph{edit distance} ($\ed$) and \emph{longest common subsequence} ($\lcs$) in the asymmetric streaming model, introduced by Saks and Seshadhri \cite{saks2013space}. As an intermediate model between the random access model and the streaming model, this model allows one to have streaming access to one string and random access to the other string. Meanwhile, $\ed$ and $\lcs$ are both fundamental problems that are often studied on large strings, thus the (asymmetric) streaming model is ideal for studying these problems.  

Our first main contribution is a systematic study of space lower bounds for $\ed$ and $\lcs$ in the asymmetric streaming model. Previously, there are no explicitly stated results in this context, although some lower bounds about $\lcs$ can be inferred from the lower bounds for \emph{longest increasing subsequence} ($\lis$) in \cite{sun2007communication, gal2010lower, ergun2008distance}. Yet these bounds only work for large alphabet size. In this paper, we develop several new techniques to handle $\ed$ in general and $\lcs$ for small alphabet size, thus establishing strong lower bounds for both problems. In particular, our lower bound for $\ed$ provides an \emph{exponential} separation between edit distance and Hamming distance in the asymmetric streaming model. Our lower bounds also extend to $\lis$ and \emph{longest non-decreasing subsequence} (\emph{$\lns$}) in the standard streaming model. Together with previous results, our bounds provide an almost complete picture for these two problems.

As our second main contribution, we give improved algorithms for $\ed$ and $\lcs$ in the asymmetric streaming model.\ For $\ed$, we improve the space complexity of the constant factor approximation algorithms in \cite{farhadi2020streaming, cheng2020space} from $\tilde{O}(\frac{n^\delta}{\delta})$ to $O(\frac{d^\delta}{\delta} \; \polylog(n))$, where $n$ is the length of each string and $d$ is the edit distance between the two strings. For $\lcs$, we give the first $1/2+\eps$ approximation algorithm with space $n^{\delta}$ for any constant $\delta>0$, over a binary alphabet. Our work leaves a plethora of intriguing open questions, including establishing lower bounds and designing algorithms for a natural generalization of $\lis$ and $\lns$, which we call \emph{longest non-decreasing subsequence with threshold} ($\lnst$).
\end{abstract}

\if

In this paper, we study edit distance (ED) and longest common subsequence (LCS) in the asymmetric streaming model, introduced by Saks and Seshadhri [SS13]. As an intermediate model between the random access model and the streaming model, this model allows one to have streaming access to one string and random access to the other string. ED and LCS are both fundamental problems that are often studied on large strings, thus the (asymmetric) streaming model is ideal for studying these problems. 
Our first main contribution is a systematic study of space lower bounds for ED and LCS in the asymmetric streaming model. Previously, there are no explicitly stated results in this context, although some lower bounds about LCS can be inferred from the lower bounds for longest increasing subsequence (LIS) in [SW07][GG10][EJ08]. Yet these bounds only work for large alphabet size. In this paper, we develop several new techniques to handle ED in general and LCS for small alphabet size, thus establishing strong lower bounds for both problems. In particular, our lower bound for ED provides an exponential separation between edit distance and Hamming distance in the asymmetric streaming model. Our lower bounds also extend to LIS and longest non-decreasing sequence (LNS) in the standard streaming model. Together with previous results, our bounds provide an almost complete picture for these two problems.
As our second main contribution, we give improved algorithms for ED and LCS in the asymmetric streaming model. For ED, we improve the space complexity of the constant factor approximation algorithms in [FHRS20][CJLZ20] from $\tilde{O}(\frac{n^\delta}{\delta})$ to $O(\frac{d^\delta}{\delta}\;\polylog(n))$, where $n$ is the length of each string and $d$ is the edit distance between the two strings. For LCS, we give the first $1/2+\epsilon$ approximation algorithm with space $n^{\delta}$ for any constant $\delta>0$, over a binary alphabet. Our work leaves a plethora of intriguing open questions, including establishing lower bounds and designing algorithms for a natural generalization of $\lis$ and $\lns$, which we call longest non-decreasing subsequence with threshold (LNST).

\fi
\end{titlepage}


\section{Introduction}
Edit distance ($\ed$) and longest common subsequence ($\lcs$) are two classical problems studied in the context of measuring similarities between two strings. Edit distance is defined as the smallest number of edit operations (insertions, deletions, and substitutions) to transform one string to the other, while longest common subsequence is defined as the longest string that appears as a subsequence in both strings. These two problems have found wide applications in areas such as  bioinformatics, text and speech processing, compiler design, data analysis, image analysis and so on. In turn, these applications have led to an extensive study of both problems.

With the era of information explosion, nowadays these two problems are often studied on very large strings. For example, in bioinformatics a human genome can be represented as a string with $3$ billion letters (base pairs). Such data provides a huge challenge to the algorithms for $\ed$ and $\lcs$, as the standard algorithms for these two problems using dynamic programming need $\Theta(n^2)$ time and $\Theta(n)$ space where $n$ is the length of each string. These bounds quickly become infeasible or too costly as $n$ becomes large, such as in the human genome example. Especially, some less powerful computers may not even have enough memory to store the data, let alone processing it.

One appealing approach to dealing with big data is designing \emph{streaming algorithms}, which are algorithms that process the input as a data stream. Typically, the goal is to compute or approximate the solution by using sublinear space (e.g., $n^{\alpha}$ for some constant $0<\alpha<1$ or even $\polylog(n)$) and a few (ideally one) passes of the data stream. These algorithms have become increasingly popular, and attracted a lot of research activities recently.

Designing  streaming algorithms for $\ed$ and $\lcs$, however, is not an easy task. For $\ed$, only a couple of positive results are known. In particular, assuming that the edit distance between the two strings is bounded by some parameter $k$, \cite{Chakraborty2015LowDE} gives a randomized one pass algorithm achieving an $O(k)$ approximation of $\ed$, using linear time and $O(\log n)$ space, in a variant of the streaming model where one can scan the two  strings simultaneously in  a  coordinated way. In the same model  \cite{Chakraborty2015LowDE} also give randomized  one pass algorithms computing $\ed$ exactly, using space $O(k^6)$  and  time $O(n+k^6)$. This was later  improved to space $O(k)$  and  time $O(n+k^2)$ in \cite{ChakrabortyGK2016, BelazzouguiZ16}. Furthermore, \cite{BelazzouguiZ16} give a randomized one pass algorithm computing $\ed$ exactly, using space $\tilde{O}(k^8)$  and  time $\tilde{O}(k^2 n)$, in the standard streaming model. We note that all of these algorithms are only interesting if $k$ is small, e.g., $k \leq n^{\alpha}$ where $\alpha$ is some small constant, otherwise the space complexity can be as large as $n$. For $\lcs$, strong lower bounds are given in \cite{liben2006finding, sun2007communication}, which show that for exact computation, even constant pass randomized algorithms need space $\Omega(n)$; while any constant pass deterministic algorithm achieving a $\frac{2}{\sqrt{n}}$ approximation of $\lcs$ also needs space $\Omega(n)$, if the alphabet size is at least $n$.

Motivated by this situation and inspired by the work of \cite{AKO10}, Saks and Seshadhri \cite{saks2013space} studied the asymmetric data streaming model. This model is a relaxation of the standard streaming model, where one has streaming access to one string (say $x$), and random access to the other string (say $y$). In this model, \cite{saks2013space} gives a deterministic one pass algorithm achieving a $1+\eps$ approximation of $n-\lcs$ using space $O(\sqrt{(n \log n)/\eps})$, as well as a randomized one pass algorithm algorithm achieving an $\eps n$ \emph{additive} approximation of $\lcs$  using space $O(k \log^2 n/\eps)$ where $k$ is the maximum number of times any symbol appears in $y$. Another work by Saha \cite{saha17} also gives an algorithm in this model that achieves an $\eps n$ \emph{additive} approximation of $\ed$ using space $O(\frac{\sqrt{n}}{\epsilon})$.

The asymmetric streaming model is interesting for several reasons. First, it still inherits the spirit of streaming algorithms, and is particularly suitable for a distributed setting. For example, a local, less powerful computer can use the streaming access to process the string $x$, while sending queries to a remote, more powerful server which has access to $y$. Second, because it is a relaxation of the standard streaming model, one can hope to design better algorithms for $\ed$ or to beat the strong lower bounds for $\lcs$ in this model. The latter point is indeed verified by two recent works \cite{farhadi2020streaming, cheng2020space} (recently accepted to ICALP as a combined paper \cite{ChengFHJLRSZ}), which give a deterministic one pass algorithm achieving a $O(2^{1/\delta})$ approximation of $\ed$, using space $\tilde{O}(n^{\delta}/\delta)$ and time $\tilde{O}_\delta(n^4)$ for any constant $\delta>0$, as well as deterministic one pass algorithms achieving $1 \pm \eps$ approximation of $\ed$ and $\lcs$, using space $\tilde{O}(\frac{\sqrt{n}}{\eps})$ and time $\tilde{O}_{\eps}(n^{2})$.

A natural question is how much we can improve these results. Towards answering this question, we study both lower bounds and upper bounds for the space complexity of $\ed$ and $\lcs$ in the asymmetric streaming model, and we obtain several new, non-trivial results.

\subparagraph*{Related work.}  On a different topic, there are many works that study the time complexity of $\ed$ and $\lcs$.\ In particular, while \cite{BI15, ABW15} showed that $\ed$ and $\lcs$ cannot be computed exactly in truly sub-quadratic time unless the strong Exponential time hypothesis \cite{IPZ01} is false, a successful line of work \cite{CDGKS18, brakensiek2019constant, koucky2019constant, AndoniN20, hajiaghayi2019lcs, RSSS19, rubinstein2020reducing} has led to randomized algorithms that achieve constant approximation of $\ed$ in near linear time, and randomized algorithms that provide various non-trivial approximation of $\lcs$ in linear or sub-quadratic time. Another related work is \cite{AKO10}, where the authors proved a lower bound on the \emph{query complexity} for computing $\ed$ in the \emph{asymmetric query} model, where one have random access to one string but only limited number of queries to the other string.   

\subsection{Our Contribution}
We initiate a systematic study on lower bounds for computing or approximating $\ed$ and $\lcs$ in the asymmetric streaming model.\ To simplify notation we always use $1+\eps$ approximation for some $\eps>0$, i.e., outputting an $\lambda$ with $\opt \leq \lambda \leq (1+\eps)\opt$, where $\opt$ is either $\ed(x, y)$ or $\lcs(x, y)$. We note that for $\lcs$, this is equivalent to a $1/(1+\eps)$ approximation in the standard notation.

Previously, there are no explicitly stated space lower bounds in this model, although as we will discuss later, some lower bounds about $\lcs$ can be inferred from the lower bounds for \emph{longest increasing subsequence} $\lis$ in \cite{sun2007communication, gal2010lower, ergun2008distance}. As our first contribution, we prove strong lower bounds for $\ed$ in the asymmetric streaming model. 

\begin{theorem}\label{thm:ED1}
There is a constant $c>1$ such that for any $k, n \in \N$ with $n \geq ck$, given an alphabet $\Sigma$, any $R$-pass randomized algorithm in the asymmetric streaming model that decides if $\ed(x, y) \geq k$ for two strings $x, y \in \Sigma^n$ with success probability $\geq 2/3$ must use space $\Omega(\mathsf{min}(k, |\Sigma|)/R)$.
\end{theorem}

This theorem implies the following corollary. 

\begin{corollary}\label{cor:ED2}
Given an alphabet $\Sigma$, the following space lower bounds hold for any constant pass randomized algorithm with success probability $\geq 2/3$ in the asymmetric streaming model.
\begin{enumerate}
\item $\Omega(n)$ for computing $\ed(x, y)$ of two strings $x, y \in \Sigma^n$ if $|\Sigma| \geq n$.
\item $\Omega(\frac{1}{\eps})$ for $1+\eps$ approximation of $\ed(x, y)$ for two strings $x, y \in \Sigma^n$ if $|\Sigma| \geq 1/\eps$.
\end{enumerate}
\end{corollary}

Our theorems thus provide a justification for the study of \emph{approximating} $\ed$ in the asymmetric streaming model. Furthermore, we note that previously, unconditional lower bounds for $\ed$ in various computational models are either weak, or almost identical to the bounds for Hamming distance. For example, a simple reduction from the equality function implies the deterministic two party communication complexity (and hence also the space lower bound in the standard streaming model) for computing or even approximating $\ed$ is $\Omega(n)$.\footnote{We include this bound in the appendix for completeness, as we cannot find any explicit statement in the literature.}  However the same bound holds for Hamming distance. Thus it has been an intriguing question to prove a rigorous, unconditional separation of the complexity of $\ed$ and Hamming distance. To the best of our knowledge the only previous example achieving this is the work of \cite{AndoniK2010} and \cite{AndoniJP2010}, which showed that the randomized two party communication complexity of achieving a $1+\eps$ approximation of $\ed$ is $\Omega(\frac{\log n}{(1+\eps)\log \log n})$, while the same problem for Hamming distance has an upper bound of $O(\frac{1}{\eps^2})$. Thus if $\eps$ is a constant, this provides a separation of $\Omega(\frac{\log n}{\log \log n})$ vs. a constant. However, this result also has some disadvantages: (1) It only works in the randomized setting; (2) The separation becomes obsolete when $\eps$ is small, e.g., $\eps=1/\sqrt{\log n}$; and (3) The lower bound for $\ed$ is still weak and thus it does not apply to the streaming setting, as there even recoding the index needs space $\log n$. 

Our result from Corollary~\ref{cor:ED2}, on the other hand, complements the above result in the aforementioned aspects by providing another strong separation of $\ed$ and Hamming distance. Note that even exact computation of the Hamming distance between $x$ and $y$ is easy in the asymmetric streaming model with one pass and space $O(\log n)$. Thus our result provides an \emph{exponential} gap between edit distance and Hamming distance, in terms of the space complexity in the asymmetric streaming model (and also the communication model since our proof uses communication complexity), even for deterministic exact computation. 

Next we turn to $\lcs$, which can be viewed as a generalization of $\lis$. For example, if the alphabet $\Sigma=[n]$, then we can fix the string $y$ to be the concatenation from $1$ to $n$, and it's easy to see that $\lcs(x, y)=\lis(x)$. Therefore, the lower bound of computing $\lis$ for randomized streaming in \cite{sun2007communication} with $|\Sigma| \geq n$ also implies a similar bound for $\lcs$ in the asymmetric streaming model. However, the bound in \cite{sun2007communication} does not apply to the harder case where $x$ is a \emph{permutation} of $y$, and their lower bound where $|\Sigma| < n$ is actually for \emph{longest non-decreasing subsequence}, which does not give a similar bound for $\lcs$ in the asymmetric streaming model. \footnote{One can get a similar reduction to $\lcs$, but now $y$ needs to be the sorted version of $x$, which gives additional information about $x$ in the asymmetric streaming model since we have random access to $y$.} Therefore, we first prove a strong lower bound for $\lcs$ in general.

\begin{theorem}\label{thm:LCS1}
	There is a constant $c>1$ such that for any $k, n\in \N$ with $n \geq ck$, given an alphabet $\Sigma$, any $R$-pass randomized algorithm in the asymmetric streaming model that decides if $\lcs(x,y)\geq k$ for two strings $x, y \in \Sigma^n$ with success probability $\geq 2/3$ must use space $\Omega\big(\mathsf{min}(k, |\Sigma|)/R\big) $.\ Moreover, this holds even if $x$ is a permutation of $y$ when $|\Sigma| \geq n$ or $|\Sigma| \leq k$.
\end{theorem}

Similar to the case of $\ed$, this theorem also implies the following corollary.

\begin{corollary}\label{cor:LCS2}
Given an alphabet $\Sigma$, the following space lower bounds hold for any constant pass randomized algorithm with success probability $\geq 2/3$ in the asymmetric streaming model.
\begin{enumerate}
\item $\Omega(n)$ for computing $\lcs(x, y)$ of two strings $x, y \in \Sigma^n$ if $|\Sigma| \geq n$.
\item $\Omega(\frac{1}{\eps})$ for $1+\eps$ approximation of $\lcs(x, y)$ for two strings $x, y \in \Sigma^n$ if $|\Sigma| \geq 1/\eps$.
\end{enumerate}
\end{corollary}

We then consider \emph{deterministic} approximation of $\lcs$. Here, the work of \cite{gal2010lower, ergun2008distance} gives a lower bound of $\Omega\left (\frac{1}{R}\sqrt{\frac{n}{\eps}}\log \left ( \frac{|\Sigma|}{\eps n}\right ) \right )$ for any $R$ pass streaming algorithm achieving a $1+\eps$ approximation of $\lis$, which also implies a lower bound of $\Omega\left (\frac{1}{R}\sqrt{\frac{n}{\eps}}\log \left ( \frac{1}{\eps}\right ) \right )$ for asymmetric streaming $\lcs$ when $|\Sigma| \geq n$. These bounds match the upper bound in \cite{gopalan2007estimating} for $\lis$ and $\lns$, and in \cite{farhadi2020streaming, cheng2020space} for $\lcs$. However, a major drawback of this bound is that it gives nothing when $|\Sigma|$ is small (e.g., $|\Sigma| \leq \eps n$). For even smaller alphabet size, the bound does not even give anything for exact computation. For example, in the case of a binary alphabet, we know that $\lis(x) \leq 2$ and thus taking $\eps=1/2$ corresponds to exact computation. Yet the bound gives a negative number.

This is somewhat disappointing as in most applications of $\ed$ and $\lcs$, the alphabet size is actually a fixed constant. These include for example the English language and the  human DNA sequence (where the alphabet size is $4$ for the $4$ bases). Therefore, in this paper we focus on the case where the alphabet size is small, and we have the following theorem.

\begin{theorem}\label{thm:LCS3}
	Given an alphabet $\Sigma$, for any $\eps>0$ where $\frac{|\Sigma|^2}{\eps} =O(n)$, any $R$-pass deterministic algorithm in the asymmetric streaming model that computes a $1+\eps$ approximation of $\lcs(x,y)$   for two strings $x, y \in \Sigma^n$ must use space $\Omega\left (\frac{|\Sigma|}{\eps}/R \right )$.
\end{theorem}

Thus, even for a binary alphabet, achieving $1+\eps$ approximation for small $\eps$ (e.g., $\eps=1/n$ which corresponds to exact computation) can take space as large as $\Omega(n)$ for any constant pass algorithm. Further note that by taking $|\Sigma|=\sqrt{\eps n}$, we recover the $\Omega\left (\frac{\sqrt{n}}{\eps}/R \right )$ bound with a much smaller alphabet.

Finally, we turn to $\lis$  and longest non-decreasing subsequence ($\lns$), as well as a natural generalization of $\lis$  and $\lns$  which we call \emph{longest non-decreasing subsequence with threshold} ($\lnst$). Given a string $x \in \Sigma^n$ and a threshold $t \leq n$, $\lnst(x, t)$ denotes the length of the longest non-decreasing subsequence in $x$ such that each symbol appears at most $t$ times. It is easy to see that the case of $t=1$ corresponds to $\lis$ and the case of $t=n$ corresponds to $\lns$. Thus $\lnst$ is indeed a generalization of both $\lis$  and $\lns$. It is also a special case of $\lcs$ when $|\Sigma|t \leq n$ as we can take $y$ to be the concatenation of $t$ copies of each symbol, in the ascending order (and possibly padding some symbols not in $x$). How hard is $\lnst$? We note that in the case of $t=1$ ($\lis$) and $t=n$ ($\lns$) a simple dynamic programming can solve the problem in one pass with space $O(|\Sigma| \log n)$, and $1+\eps$ approximation can be achieved in one pass with space $\tilde{O}(\sqrt{\frac{n}{\eps} })$ by \cite{gopalan2007estimating}. Thus one can ask what is the situation for other $t$. Again we focus on the case of a small alphabet and have the following theorem.

\begin{theorem}
	\label{thm:LNSTmain}
	Given an alphabet $\Sigma$, for deterministic $(1+\eps)$ approximation of $\lnst(x, t)$ for a string $x\in \Sigma^n$ in the streaming model with $R$ passes, we have the following space lower bounds:
	\begin{enumerate}
		
		\item $\Omega(\min(\sqrt{n}, |\Sigma|)/R)$ for any constant $t$ (this includes $\lis$), when $\eps$ is any constant.
		
		\item $\Omega(|\Sigma|\log(1/\eps)/R)$ for $t\geq n/|\Sigma|$ (this includes $\lns$), when $|\Sigma|^2/\eps = O(n)$.
		
		\item $\Omega\left (\frac{\sqrt{|\Sigma|}}{\eps}/R \right ) $ for $t = \Theta(1/\eps)$, when $|\Sigma|/\eps = O(n)$.
\end{enumerate}
\end{theorem}

Thus, case 1 and 2 show that even for any constant approximation, any constant pass streaming algorithm for $\lis$ and $\lns$ needs space $\Omega(|\Sigma|)$ when $|\Sigma| \leq \sqrt{n}$, matching the $O(|\Sigma| \log n)$ upper bound up to a logarithmic factor. Taking $\eps=1/\sqrt[3]{n}$ and $|\Sigma| \leq \sqrt[3]{n}$ for example, we further get a lower bound of $\Omega(|\Sigma|\log n)$ for approximating $\lns$ using any constant pass streaming algorithm. This matches the $O(|\Sigma| \log n)$ upper bound. These results complement the bounds in \cite{gal2010lower, ergun2008distance, gopalan2007estimating} for the important case of small alphabet, and together they provide an almost complete picture for $\lis$ and $\lns$. Case 3 shows that for certain choices of $t$ and $\eps$, the space we need for $\lnst$ can be significantly larger than those for $\lis$ and $\lns$. It is an intriguing question to completely characterize the behavior of $\lnst$ for all regimes of parameters.
	
We also give improved algorithms for asymmetric streaming $\ed$ and $\lcs$. For $\ed$, \cite{farhadi2020streaming, cheng2020space} gives a $O(2^{1/\delta})$-approximation algorithm with $\tilde{O}(n^\delta)$ space for any constant $\delta\in(0,1)$. We further reduced the space needed from $\tilde{O}(\frac{n^\delta}{\delta})$ to $O(\frac{d^\delta}{\delta}\;\polylog(n)) $ where $d=\ed(x, y)$. Specifically, we have the following theorem.

\begin{theorem}
	\label{thm:ED_Algo}
	Assume $\ed(x,y) = d$, in the asymmetric streaming model, there are one-pass deterministic algorithms in polynomial time with the following parameters:
	\begin{enumerate}
		\item A $(3+\eps)$-approximation of $\ed(x,y)$ using $O(\sqrt{d}\; \polylog(n))$ space.
		\item For any constant $\delta\in (0,1/2)$, a $2^{O(\frac{1}{\delta})}$-approximation of $\ed(x,y)$ using $O(\frac{d^\delta}{\delta} \polylog(n))$ space.
	\end{enumerate}
\end{theorem}

For $\lcs$ over a large alphabet, the upper bounds in \cite{farhadi2020streaming, cheng2020space} match the lower bounds implied by \cite{gal2010lower, ergun2008distance}. We thus again focus on small alphabet. Note that our Theorem~\ref{thm:LCS3} does not give anything useful if $|\Sigma|$ is small and $\eps$ is large (e.g., both are constants). Thus a natural question is whether one can get better bounds. In particular, is the dependence on $1/\eps$ linear as in our theorem, or is there a threshold beyond which the space jumps to say for example $\Omega(n)$? We note that there is a trivial one pass, $O(\log n)$ space algorithm even in the standard streaming model that gives a $|\Sigma|$ approximation of $\lcs$ (or $1/|\Sigma|$ approximation in standard notation), and no better approximation using sublinear space is known even in the asymmetric streaming model. Thus one may wonder whether this is the threshold. We show that this is not the case, by giving a one pass algorithm in the asymmetric streaming model over the binary alphabet that achieves a $2-\eps$ approximation of $\lcs$ (or $1/2+\eps$ approximation in standard notation), using space $n^{\delta}$ for any constant $\delta>0$. 

\begin{theorem}
	\label{thm:LCS_Algo}
	 For any constant $\delta\in(0,1/2)$, there exists a constant $\eps>0$ and a one-pass deterministic algorithm that outputs a $2-\eps$ approximation of $\lcs(x,y)$ for any two strings $x,y\in \{0,1\}^n$, with $\tilde{O}(n^\delta/\delta)$ space and polynomial time in the asymmetric streaming model.
\end{theorem}

Finally, as mentioned before, we now have an almost complete picture for $\lis$ and $\lns$, but for the more general $\lnst$ the situation is still far from clear. Since $\lnst$ is a special case of $\lcs$, if $|\Sigma|t=O(n)$ then the upper bound of $\tilde{O}(\frac{\sqrt{n}}{\eps})$ in \cite{farhadi2020streaming, cheng2020space} still applies and this matches our lower bound in case 3, Theorem~\ref{thm:LNSTmain} by taking $|\Sigma|=\eps n$. One can then ask the natural question of whether we can get a matching upper bound for the case of small alphabet. We are not able to achieve this, but we provide a simple algorithm that can use much smaller space for certain regimes of parameters in this case.

\begin{theorem}
	\label{thm:lnstupper}
	Given an alphabet $\Sigma$ with $\lvert \Sigma \rvert = r$.\ For any $\eps>0$ and $t \geq 1$, there is a one-pass streaming algorithm that computes a $(1+\eps)$ approximation of $\lnst(x, t)$ for any $x\in \Sigma^n$ with $\tilde{O}\Big(\big(\min(t, r/\eps )+1\big)^r\Big)$ space. 
\end{theorem}

\subsection{Overview of our Techniques}
Here we provide an informal overview of the techniques used in this paper. 

\subsubsection{Lower Bounds}
Our lower bounds use the general framework of communication complexity. To limit the power of random access to the string $y$, we always fix $y$ to be a specific string, and consider different strings $x$. In turn, we divide $x$ into several blocks and consider the two party/multi party communication complexity of $\ed(x, y)$ or $\lcs(x, y)$, where each party holds one block of $x$. However, we need to develop several new techniques to handle edit distance and small alphabets.

\subparagraph*{Edit distance.} We start with edit distance. One difficulty here is to handle substitutions, as with substitutions edit distance becomes similar to Hamming distance, and this is exactly one of the reasons why strong complexity results separating edit distance and Hamming distance are rare. Indeed, if we define $\ed(x, y)$ to be the smallest number of insertions and deletions (without substitutions) to transform $x$ into $y$, then $\ed(x, y)=2n-2\lcs(x, y)$ and thus a lower bound for exactly computing $\lcs$ (e.g., those implied from  \cite{gal2010lower, ergun2008distance}) would translate directly into the same bound for exactly computing $\ed$. On the other hand, with substitutions things become more complicated: if $\lcs(x, y)$ is small (e.g., $\lcs(x, y) \leq n/2$) then in many cases (such as examples obtained by reducing from  \cite{gal2010lower, ergun2008distance}) the best option to transform $x$ into $y$ is just replacing each symbol in $x$ by the corresponding symbol in $y$ if they are different, which makes $\ed(x, y)$ exactly the same as their Hamming distance.

To get around this, we need to ensure that $\lcs(x, y)$ is large. We demonstrate our ideas by first describing an $\Omega(n)$ lower bound for the deterministic two party communication complexity of $\ed(x, y)$, using a reduction from the equality function which is well known to have an $\Omega(n)$ communication complexity bound. Towards this, fix $\Sigma=[3n] \cup \{a\}$ where $a$ is a special symbol, and fix $y=1 \circ 2 \circ \cdots \circ 3n$. We divide $x$ into two parts $x=(x_1, x_2)$ such that $x_1$ is obtained from the string $(1, 2, 4, 5, \cdots, 3i-2, 3i-1, \cdots, 3n-2, 3n-1)$ by replacing some symbols of the form $3j-1$ by $a$, while $x_2$ is obtained from the string $(2, 3, 5, 6, \cdots, 3i-1, 3i, \cdots, 3n-1, 3n)$ by replacing some symbols of the form $3j-1$ by $a$. Note that the way we choose $(x_1, x_2)$ ensures that $\lcs(x, y) \geq 2n$ before replacing any symbol by $a$.

Intuitively, we want to argue that the best way to transform $x$ into $y$, is to delete a substring at the end of $x_1$ and a substring at the beginning of $x_2$, so that the resulted string becomes an increasing subsequence as long as possible. Then, we insert symbols into this string to make it match $y$ except for those $a$ symbols. Finally, we replace the $a$ symbols by substitutions. If this is true then we can finish the argument as follows. Let $T_1, T_2 \subset [n]$ be two subsets with size $t=\Omega(n)$, where for any $i \in \{1, 2\}$, all symbols of the form $3j-1$ in $x_i$ with $j \in T_i$ are replaced by $a$. Now if $T_1=T_2$ then it doesn't matter where we choose to delete the substrings in $x_1$ and $x_2$, the number of edit operations is always $3n-2+t$ by a direct calculation. On the other hand if $T_1 \neq T_2$ and assume for simplicity that the smallest element they differ is an element in $T_2$, then there is a way to save one substitution, and the the number of edit operations becomes $3n-3+t$.

The key part is now proving our intuition.\ For this, we consider all possible $r \in [3n]$ such that $x_1$ is transformed into $y[1:r]$ and $x_2$ is transformed into $y[r+1:3n]$, and compute the two edit distances respectively. To analyze the edit distance, we first show by a greedy argument that without loss of generality, we can assume that we apply deletions first, followed by insertions, and substitutions at last. This reduces the edit distance problem to the following problem: for a fixed number of deletions and insertions, what is the best way to minimize the Hamming distance (or maximize the number of agreements of symbols at the same indices) in the end. Now we break the analysis of $\ed(x_1, y[1:r])$ into two cases. Case 1 is where the number of deletions (say $d_d$) is large. In this case, the number of insertions (say $d_i$) must also be large, and we argue that the number of agreements is at most $\lcs(x_1, y[1:r])+d_i$. Case 2 is where $d_d$ is small. In this case, $d_i$ must also be small. Now we crucially use the structure of $x_1$ and $y$, and argue that symbols in $x_1$ larger than $3d_i$ (or original index beyond $2d_i$) are guaranteed to be out of agreement. Thus the number of agreements is at most $\lcs(x_1[1: 2d_i], y[1:r])+d_i$. In each case combining the bounds gives us a lower bound on the total number of operations. The situation for $x_2$ and $y[r+1:3n]$ is completely symmetric and this proves our intuition.

In the above construction, $x$ and $y$ have different lengths ($|x|=4n$ while $|y|=3n$). We can fix this by adding a long enough string $z$ with distinct symbols than those in $\{x, y\}$ to the end of both $x$ and $y$, and then add $n$ symbols of $a$ at the end of $z$ for $y$. We argue that the best way to do the transformation is to transform $x$ into $y$, and then insert $n$ symbols of $a$. To show this, we first argue that at least one symbol in $z$ must be kept, for otherwise the number of operations is already larger than the previous transformation. Then, using a greedy argument we show that the entire $z$ must be kept, and thus the natural transformation is the optimal.

To extend the bound to randomized algorithms, we modify the above construction and reduce from \emph{Set Disjointness} ($\dis$), which is known to have randomized communication complexity $\Omega(n)$. Given two strings $\alpha, \beta \in \{0, 1\}^n$ representing the characteristic vectors of two sets $A, B \subseteq [n]$, $\dis(\alpha, \beta)=0$ if and only if $A \cap B \neq \emptyset$, or equivalently, $\exists j \in [n], \alpha_j= \beta_j=1$. For the reduction, we first create two new strings $\alpha' , \beta' \in \{0, 1\}^{2n}$ which are ``balanced" versions of $\alpha, \beta$. Formally, $\forall j \in [n], \alpha'_{2j-1}=\alpha_j$ and $ \alpha'_{2j}=1-\alpha_j$. We create $ \beta' $ slightly differently, i.e., $\forall j \in [n], \beta'_{2j-1}=1-\beta_j$ and $ \beta'_{2j}=\beta_j$. Now both $\alpha'$ and $\beta'$ have $n$ $1$'s, we can use them as  the characteristic vectors of the two sets $T_1, T_2$ in the previous construction. A similar argument now leads to the bound for randomized algorithms.

\subparagraph*{Longest common subsequence.} Our lower bounds for randomized algorithms computing $\lcs$ exactly are obtained by a similar and simpler reduction from $\dis$: we still fix $y$ to be an increasing sequence of length $8n$ and divide $y$ evenly into $4n$ blocks of constant size. Now $x_1$ consists of the blocks with an odd index, while $x_2$ consists of the blocks with an even index. Thus $x$ is a permutation of $y$. Next, from $\alpha, \beta \in \{0, 1\}^n$ we create $\alpha' , \beta' \in \{0, 1\}^{2n}$ in a slightly different way and use $\alpha' , \beta'$ to modify the $2n$ blocks in $x_1$ and $x_2$ respectively. If a bit is $1$ then we arrange the corresponding block in the increasing order, otherwise we arrange the corresponding block in the decreasing order. A similar argument as before now gives the desired $\Omega(n)$ bound. We note that \cite{sun2007communication} has similar results for LIS by reducing from $\dis$. However, our reduction and analysis are different from theirs. Thus we can handle $\lcs$, and even the harder case where $x$ is a permutation of $y$.

We now turn to $\lcs$ over a small alphabet. To illustrate our ideas, let's first consider $\Sigma=\{0, 1\}$ and choose $y=0^{n/2}1^{n/2}$. It is easy to see that $\lcs(x, y)=\lnst(x, n/2)$. We now represent each string $x \in \{0, 1\}^n$ as  follows: at any index $i \in [n] \cup \{0\}$, we record a pair $(p, q)$ where $p=\mathsf{min}(\text{the number of 0's in } x[1:i], n/2)$ and $q=\mathsf{min}(\text{the number of 1's in } x[i+1: n], n/2)$. Thus, if we read $x$ from left to right, then upon reading a $0$, $p$ may increase by $1$ and $q$ does not change; while upon reading a $1$, $p$ does not change and $q$ may decrease by $1$. Hence if we use the horizontal axis to stand for $p$ and the vertical axis to stand for $q$, then these points $(p, q)$ form a polygonal chain. We call $p+q$ the \emph{value} at point $(p, q)$ and it is easy to see that $\lcs(x, y)$ must be the value of an endpoint of some chain segment.

Using the above representation, we now fix $\Sigma=\{0, 1, 2\}$ and choose $y=0^{n/3}1^{n/3}2^{n/3}$, so $\lcs(x, y)=\lnst(x, n/3)$. We let $x=(x_1, x_2)$ such that $x_1 \in \{0, 1\}^{n/2}$ and $x_2 \in \{1, 2\}^{n/2}$. Since any common subsequence between $x$ and $y$ must be of the form $0^a 1^b 2^c$ it suffices to consider common subsequence between $x_1$ and $0^{n/3}1^{n/3}$, and that between $x_2$ and $1^{n/3}2^{n/3}$, and combine them together. Towards that, we impose the following properties on $x_1, x_2$: (1) The number of $0$'s,  $1$'s, and $2$'s in each string is at most $n/3$; (2) In the polygonal chain representation of each string, the values of the endpoints strictly increase when the number of $1$'s increases; and (3) For any endpoint in $x_1$ where the number of $1$'s is some $r$, there is a corresponding endpoint in $x_2$ where the number of $1$'s is $n/3-r$, and the values of these two endpoints sum up to a fixed number $t=\Omega(n)$. Note that property (2) implies that $\lcs(x, y)$ must be the sum of the values of an endpoint in $x_1$ where the number of $1$'s is some $r$, and an endpoint in $x_2$ where the number of $1$'s is $n/3-r$, while property (3) implies that for any string $x_1$, there is a unique corresponding string $x_2$, and $\lcs(x, y)=t$ (regardless of the choice of $r$). 

We show that under these properties, all possible strings $x=(x_1, x_2)$ form a set $S$ with $|S|=2^{\Omega(n)}$, and this set gives a \emph{fooling set} for the two party communication problem of computing $\lcs(x, y)$. Indeed, for any $x=(x_1, x_2) \in S$, we have $\lcs(x, y)=t$. On the other hand, for any $(x_1, x_2) \neq (x'_1, x'_2) \in S$, the values must differ at some point for $x_1$ and $x'_1$. Hence by switching, either $(x_1, x'_2)$ or $(x'_1, x_2)$ will have a $\lcs$ with $y$ that has length at least $t+1$.\ Standard arguments now imply an $\Omega(n)$ communication complexity lower bound.\ A more careful analysis shows that we can even replace the symbol $2$ by $0$, thus resulting in a binary alphabet.

The above argument can be easily modified to give a $\Omega(1/\eps)$ bound for $1+\eps$ approximation of $\lcs$ when $\eps<1$, by taking the string length to be some $n'=\Theta(1/\eps)$. To get a better bound, we combine our technique with the technique in \cite{ergun2008distance} and consider the following direct sum problem: we create $r$ copies of strings $\{x^i, i \in [r]\}$ and $\{y^i, i \in [r]\}$ where each copy uses distinct alphabets with size $2$. Assume for $x^i$ and $y^i$ the alphabet is $\{a_i, b_i\}$, now $x^i$ again consists of $r$ copies of $(x^i_{j1}, x^i_{j2}), j  \in [r]$, where each $x^i_{j\ell} \in \{a_i, b_i\}^{n'/2}$ for $\ell \in [2]$; while $y^i$ consists of $r$ copies $y^i_j=a_i^{n'/3}b_i^{n'/3}a_i^{n'/3}, j \in [r]$. The direct sum problem is to decide between the following two cases for some $t=\Omega(n')$: (1) $\exists i$ such that there are $\Omega(r)$ copies $(x^i_{j1}, x^i_{j2})$ in  $x^i$ with $\lcs((x^i_{j1} \circ x^i_{j2}), y^i_j) \geq t+1$, and (2) $\forall i$ and $\forall j$,  $\lcs((x^i_{j1} \circ x^i_{j2}), y^i_j) \leq t$. We do this by arranging the $x^i$'s row by row into an $r \times 2r$ matrix (each entry is a length $n'/2$ string) and letting $x$ be the concatenation of the \emph{columns}. We call these strings the \emph{contents} of the matrix, and let $y$ be the concatenation of the $y^i$'s. Now intuitively, case (1) and case (2) correspond to deciding whether $\lcs(x, y) \geq 2rt+\Omega(r) $ or $\lcs(x, y) \leq 2rt $, which implies a $1+\Omega(1/t)=1+\eps$ approximation.\ The lower bound follows by analyzing the $2r$-party communication complexity of this problem, where each party holds a column  of the matrix.

However, unlike the constructions in \cite{gal2010lower, ergun2008distance} which are relatively easy to analyze because all symbols in $x$ (respectively $y$) are \emph{distinct}, the repeated symbols in our construction make the analysis of $\lcs$ much more complicated (we can also use distinct symbols but that will only give us a bound of $\frac{\sqrt{|\Sigma|}}{\eps}$ instead of $\frac{|\Sigma|}{\eps}$).\ To ensure that the $\lcs$ is to match each $(x^i_{j1}, x^i_{j2})$ to the corresponding $y^i_j$, we use another $r$ symbols $\{c_i, i \in [r]\}$ and add \emph{buffers} of large size (e.g., size $n'$) between adjacent copies of $(x^i_{j1}, x^i_{j2})$. We do the same thing for $y^i_j$ correspondingly. Moreover, it turns out we need to arrange the buffers carefully to avoid unwanted issues: in each row $x^i$, between each copy of $(x^i_{j1}, x^i_{j2})$ we use a buffer of new symbol. Thus the buffers added to each row  $x^i$ are $c_1^{n'}, c_2^{n'}, \cdots, c_r^{n'}$ sequentially and this is the same for every row. That is, in each row the contents use the same alphabet $\{a_i, b_i\}$ but the buffers use different alphabets $\{c_i, i \in [r]\}$. Now we have a $r \times 3r$ matrix and we again let $x$ be the concatenation of the \emph{columns} while let $y$ be the concatenation of the $y^i$'s. Note that we are using an alphabet of size $|\Sigma|=3r$.\ We use a careful analysis to argue that case (1) and case (2) now correspond to deciding whether $\lcs(x, y) \geq 2rn'+rt+\Omega(r) $ or $\lcs(x, y) \leq 2rn'+rt $, which implies a $1+\eps$ approximation.\ The lower bound follows by analyzing the $3r$-party communication complexity of this problem, and we show a lower bound of $\Omega(r/\eps)=\Omega(|\Sigma|/\eps)$ by generalizing our previous fooling set construction to the multi-party case, where we use a good error correcting code to create the $\Omega(r)$ gap. 

The above technique works for $\eps<1$. For the case of $\eps \geq 1$ our bound for $\lcs$ can be derived directly from our bound for $\lis$, which we describe next.


\subparagraph*{Longest increasing/non-decreasing subsequence.} Our $\Omega(|\Sigma|)$ lower bound over small alphabet is achieved by modifying the construction in \cite{ergun2008distance} and providing a better analysis. Similar as before, we consider a matrix $B\in \{0,1\}^{\frac{r}{c}\times r}$ where $c$ is a large constant and $r=|\Sigma|$. We now consider the $r$-party communication problem where each party holds one column of $ B$, and the problem is to decide between the following two cases for a large enough constant $l$: (1) for each row in $B$, there are at least $l$ $0$'s between any two $1$'s, and (2) there exists a row in $B$ which has more than $\alpha r$ $1$'s, where $\alpha \in (1/2,1)$ is a constant. We can use a similar argument as in  \cite{ergun2008distance} to show that the total communication complexity of this problem is $\Omega(r^2)$ and hence at least one party needs $\Omega(r)$. The difference is that \cite{ergun2008distance} sets $l=1$ while we need to pick $l$ to be a larger constant to handle the case $\eps \geq 1$. For this we use the Lov\'asz Local Lemma with a probabilistic argument to show the existence of a large fooling set. To reduce to $\lis$, we define another matrix $\tB$ such that $\tB_{i,j} = (i-1)\frac{r}{c}+j$  if $B_{i,j} = 1$ and $\tB_{i,j} = 0$ otherwise. Now let $x$ be the concatenation of all columns of $\tB$. We show that case (2) implies $\lis(x)\geq \alpha r$ and case (1) implies $\lis(x)\leq (1/c+1/l)r$. This implies a $1+\eps$ approximation for any constant $\eps > 0$ by setting $c$ and $l$ appropriately.

The construction is slightly different for $\lns$. This is because if we keep the $0$'s in $\tB$, they will already form a very long non-decreasing subsequence and we will not get any gap. Thus, we now let the matrix $B$ have size $r\times cr$ where $c$ can be any constant. We replace all $0$'s in column $i$ with a symbol $b_i$ for $i \in [cr]$, such that $b_1>b_2>\cdots >b_{cr}$. Similarly we replace all $1$'s in row $j$ with a symbol $a_{j}$ for $j \in [r]$, such that $a_1<a_2<\cdots<a_r$. Also, we let $a_1>b_1$. We can show that the two cases now correspond to $\lns(x)> \alpha c r$ and $\lns(x)\leq (2+c/l)r$.

We further prove an $\Omega(|\Sigma|\log(1/\eps))$ lower bound for $1+\eps$ approximation of $\lns$ when $\eps< 1$. This is similar to our previous construction for $\lcs$, except we don't need buffers here, and we only need to record the number of some symbols. More specifically, let $l = \Theta(1/\eps)$ and $S$ be the set of all strings $x = (x_1,x_2)$ over alphabet $\{a,b\}$ with length $2l$ such that $x_1 = a^{\frac{3}{4}l+t}b^{\frac{1}{4}l-t}$ and $x_2 = a^{\frac{3}{4}l-t}b^{\frac{1}{4}l+t}$ for any $t \in [\frac{l}{4}]$. Thus $S$ has size $\frac{l}{4}=\Omega(1/\eps)$ and $\forall x \in S$, the number of $a$'s in $x$ is exactly $\frac{3}{2}l$. Further, for any $(x_1,x_2)\neq (x'_1, x'_2)\in S$, either $(x_1,x'_2)$ or $(x'_1, x_2)$ has more than $\frac{3}{2}l$ $a$'s. We now consider the $r \times 2r$ matrix where each row $i$ consists of $\{(x^i_{j1}, x^i_{j2}), j \in [r]\}$ such that each $x^i_{j{\ell}}$ has length $l$ for $\ell \in [2]$, and for the same row $i$ all  $\{(x^i_{j1}, x^i_{j2})\}$ use the same alphabet $\{a_i, b_i\}$ while for different rows the alphabets are disjoint. To make sure the $\lns$ of the concatenation of the columns is roughly the sum of the number of $a_i$'s, we require that $b_r<b_{r-1}<\cdots<b_1<a_1<a_2<\cdots<a_r$. Now we analyze the $2r$ party communication problem of deciding whether the concatenation of the columns has $\lns \geq c rl +\Omega(r)$ or $\lns \leq c rl $ for some constant $c$, which implies a $1+\eps$ approximation. The lower bound is again achieved by generalizing the set $S$ to a fooling set for the $2r$ party communication problem using an error correcting code based approach. 




In Theorem~\ref{thm:LNSTmain}, we give three lower bounds for $\lnst$. The first two lower bounds are adapted from our lower bounds for $\lis$ and $\lns$, while the last lower bound is adapted from our lower bound for $\lcs$ by ensuring all symbols in different rows or columns of the matrix there are different.  

\subsubsection{Improved Algorithms}

We now turn to our improved algorithms for $\ed$ and $\lcs$ in the asymmetric streaming model. 



\subparagraph*{Edit distance.} Our algorithm for edit distance builds on and improves the algorithm in \cite{farhadi2020streaming, cheng2020space}. The key idea of that algorithm is to use triangle inequality. Given a constant $\delta$, the algorithm first divides $x$ evenly into $b = n^\delta$ blocks. Then for each block $x^i$ of $x$, the algorithm recursively finds an $\alpha$-approximation of the closest substring to $x^i$ in $y$. That is, the algorithm finds a substring $y[l_i:r_i]$ and a value $d_i$ such that for any substring $y[l:r]$ of $y$, $\ed(x^i, y{[l_i:r_i]})\leq d_i \leq \alpha\ed(x^i, y{[l:r]})$. Let $\tilde{y}$ be the concatenation of $y[l_i:r_i]$ from $i = 1$ to $b$. Then using triangle inequality, \cite{farhadi2020streaming} showed that $\ed(y,\tilde{y})+\sum_{i = 1}^b d_i$ is a $2\alpha+1$ approximation of $\ed(x,y)$. The $\tilde{O}(n^\delta)$ space is achieved by recursively applying this idea, which results in a $O(2^{1/\delta})$ approximation.

To further reduce the space complexity, our key observation is that, instead of dividing $x$ into blocks of equal length, we can divide it according to the positions of the edit operations that transform $x$ to $y$. More specifically, assume we are given a value $k$ with $\ed(x,y)\leq k \leq c\ed(x,y)$ for some constant $c$, we show how to design an approximation algorithm using space $\tilde{O}(\sqrt{k})$. Towards this, we can divide $x$ and $y$ each into $\sqrt{k}$ blocks $x = x^1\circ \cdots\circ x^{\sqrt{k}}$ and $y = y^1\circ \cdots \circ y^{\sqrt{k}}$ such that $\ed(x^i, y^i)\leq \frac{\ed(x,y)}{\sqrt{k}} \leq \sqrt{k}$ for any $i\in[\sqrt{k}]$. However, such a partition of $x$ and $y$ is not known to us. Instead, we start from the first position of $x$ and find the largest index $l_1$ such that $\ed(x[1:l_1],y[p_1,q_1])\leq \sqrt{k}$ for some substring $y[p_1:q_1]$ of $y$. To do this, we start with $l = \sqrt{k}$ and try all substrings of $y$ with length in $[l-\sqrt{k},l+\sqrt{k}]$. If there is some substring of $y$ within edit distance $\sqrt{k}$ to $x[1:l]$, we set $l_1 = l$ and store all the edit operations that transform $y[p_1:q_1]$ to $x[1:l_1]$ where $y[p_1:q_1]$ is the substring closest to $x[1:l_1]$ in edit distance. We continue doing this with $l = l+1$ until we can not find a substring of $y$ within edit distance $\sqrt{k}$ to $x[1:l]$. 

One problem here is that $l$ can be much larger than $\sqrt{k}$ and we cannot store $x[1:l]$ with $\tilde{O}(\sqrt{k})$ space.\ However, since we have stored some substring $y[p_1:q_1]$ (we only need to store the two indices $p_1, q_1$) and the  at most $\sqrt{k}$ edit operations that transform $y[p_1:q_1]$ to $x[1:l-1]$, we can still query every bit of $x[1:l]$ using $\tilde{O}(\sqrt{k})$ space. 

After we find the largest possible index $l_1$, we store $l_1$, $(p_1, q_1)$ and $d_1 = \ed(x[1:l_1], y[p_1:q_1])$. We then start from the $(l_1+1)$-th position of $x$ and do the same thing again to find the largest $l_2$ such that there is a substring of $y$ within edit distance $\sqrt{k}$ to $x[l_1+1:l_1+l_2]$.\ We continue doing this until we have processed the entire string $x$. Assume this gives us $T$  pairs of indices $(p_i,q_i)$ and integers $l_i$, $d_i$ from $i=1$ to $T$, we can use $O(T\log n)$ space to store them. We show by induction that  $x^1\circ\cdots\circ x^i$ is a substring of $x[1:\sum_{j = 1}^i l_j]$ for $i\in[T-1]$. Recall that $x = x^1\circ \cdots\circ x^{\sqrt{k}}$ and each $l_i>0, i \in [T-1]$. Thus, the process must end within $\sqrt{k}$ steps and we have $T\leq \sqrt{k}$. Then, let $\tilde{y}$ be the concatenation of $y[p_i:q_i]$ from $i=1$ to $T$. Using techniques developed in \cite{farhadi2020streaming}, we can show $\ed(y,\tilde{y})+\sum_{i = 1}^T d_i$ is a $3$ approximation of $\ed(x,y)$. For any small constant $\eps>0$, we can compute a $1+\eps$ approximation of $\ed(y,\tilde{y})$ with $\polylog(n)$ space using the algorithm in \cite{cheng2020space}. This gives us a $3+\eps$ approximation algorithm with $O(\sqrt{\ed(x,y)}\;\polylog(n))$ space. 


Similar to \cite{farhadi2020streaming}, we can use recursion to further reduce the space. Let $\delta$ be a small constant and a value $k= \Theta(\ed(x,y))$ be given as before. There is a way to partition $x$ and $y$ each into $k^\delta$ blocks such that $\ed(x^i, y^i)\leq \frac{\ed(x,y)}{k^\delta}\leq k^{1-\delta}$.\ Now similarly, we want to find the largest index $l^0$ such that there is a substring of $y$ within edit distance $k^{1-\delta}$ to $x[1:l^0]$. However naively this would require $\Theta(k^{1-\delta})$ space to compute the edit distance.\ Thus again we turn to approximation. 

We introduce a recursive algorithm called $\mathsf{FindLongestSubstring}$. It takes two additional parameters as inputs: an integer $u$ and a parameter $s$ for the amount of space we can use. It outputs a three tuple: an index $l$, a pair of indices $(p,q)$ and an integer $d$. Let $l^0$ be the largest index such that there is a substring of $y$ within edit distance $u$ to $x[1:l^0]$. 

We show the following two properties of $\mathsf{FindLongestSubstring}$: (1) $l\geq l^0$, and (2) for any substring $y[p^*:q^*]$, $\ed(x[1:l], y[p:q])\leq d \leq c(u,s) \ed(x[1:l], y[p^*:q^*])$.\ Here, $c(u,s) $ is a function of $(u, s)$ that measures the approximation factor. If $u\leq s$, $\mathsf{FindLongestSubstring}$ outputs $l=l^0$ and the substring of $y$ that is closest to $x[1:l]$ using $O(s\log n)$ space by doing exact computation. In this case we set $c(u,s) = 1$. Otherwise, it calls $\mathsf{FindLongestSubstring}$ itself up to $s$ times with parameters $ u/s$ and $s$. This gives us $T\leq s$ outputs $\{l_i, (p_i,q_i), d_i\}$ for $i\in[T]$. Let $\tilde{y}$ be the concatenation of $y[p_i:q_i]$ for $i = 1$ to $T$. We find the pair of indices $(p,q)$ such that $y[p:q]$ is the substring that minimizes $\ed(\tilde{y}, y[p:q])$. We output $l = \sum_{j = 1}^{T}l_j$, $(p,q)$, and $d = \ed(\tilde{y}, y[p:q])+\sum_{i = 1}^T d_i$. We then use induction to show property (1) and (2) hold for these outputs, where $c(u,s) = 2(c(u/s,s)+1)$ if $u>s$ and $c(u,s) = 1$ if $u\leq s$. Thus we have $c(u,s) = 2^{O(\log_s{u})}$.

This gives an $O(k^{\delta}/\delta \; \polylog(n))$ space algorithm as follows. We run algorithm $\mathsf{FindLongestSubstring}$ with $u = k^{1-\delta}$ and $s = k^\delta$ to find $T$ tuples: $\{l_i, (p_i,q_i), d_i\}$. Again, let $\tilde{y}$ be the concatenation of $y[p_i:q_i]$ from $i = 1$ to $T$. Similar to the $O(\sqrt{k}\; \polylog(n))$ space algorithm, we can show $T\leq k^\delta$ and $\ed(y,\tilde{y})+\sum_{i = 1}^{T}d_i$ is a $2c(k^{1-\delta},k^{\delta}) + 1 = 2^{O(1/\delta)}$ approximation of $\ed(x,y)$. Since the depth of recursion is at most $1/\delta$ and each level of recursion needs $O(k^{\delta}\; \polylog(n))$ space, $\mathsf{FindLongestSubstring}$ uses $O(k^{\delta}/\delta \;\polylog(n))$ space. 

The two algorithms above both require a given value $k$. To remove this constraint, our observation is that the two previous algorithms actually only need the number $k$ to satisfy the following relaxed condition: there is a partition of $x$ into $k^\delta$ blocks such that for each block $x^i$, there is a substring of $y$ within edit distance $k^{1-\delta}$ to $x^i$. Thus, when such a $k$ is not given, we can do the following. We first set $k$ to be a large constant $k_0$. While the algorithm reads $x$ from left to right, let $T'$ be the number of $\{l_i, (p_i,q_i),d_i\}$ we have stored so far. Each time we run $\mathsf{FindLongestSubstring}$ at this level, we increase $T'$ by $1$. If the current $k$ satisfies the relaxed condition, then by a similar argument as before $T'$ should never exceed $ k^\delta$.\ Thus whenever $T'  = k^\delta$, we increase $k$ by a $2^{1/\delta}$ factor. Assume that $k$ is updated $m$ times in total and after the $i$-th update, $k$ becomes $k_i$. We show that $k_m  = O(\ed(x,y))$ (but $k_m$ may be much smaller than $\ed(x, y)$).\ To see this, suppose $k_j > 2^{1/\delta}\ed(x,y)$ for some $j\leq m$. Let $t_j$ be  the position of $x$ where $k_{j-1}$ is updated to $k_j$. We know it is possible to divide $x[t_j:n]$ into $\ed(x,y)^\delta$ blocks such that for each part, there is a substring of $y$ within edit distance $ \ed(x,y)^{1-\delta} \leq k_j^{1-\delta}$ to it. By property (1) and a similar argument as before,  we will run $\mathsf{FindLongestSubstring}$ at most $\ed(x,y)^\delta$ times until we reach the end of $x$. Since $k_j^\delta-k^\delta_{j-1} > \ed(x,y)^\delta$, $T'$ must be always smaller than $k^\delta_j$ and hence $k_j$ will not be updated.\ Therefore we must have $j = m$. This shows $k_{m-1}\leq 2^{1/\delta}\ed(x,y)$ and $k_m \leq  2^{2/\delta}\ed(x,y)$. Running $\mathsf{FindLongestSubstring}$ with $k \leq k_m$ takes $O(k_m^{\delta}/\delta\; \polylog(n))= O(\ed(x,y)^\delta/\delta\;\polylog(n))$ space and the number of intermediate results ($(p_i,q_i)$ and $d_i$'s) is $O(k_m^\delta) = O(\ed(x,y)^\delta)$. This gives us a $2^{O(1/\delta)}$ approximation algorithm with space complexity $ O(\ed(x,y)^\delta/\delta \; \polylog(n))$. 

\subparagraph*{$\lcs$ and $\lnst$.} We show that the reduction from $\lcs$ to $\ed$ discovered in $\cite{rubinstein2020reducing}$ can work in the asymmetric streaming model with a slight modification. Combined with our algorithm for $\ed$, this gives a $n^{\delta}$ space algorithm for $\lcs$ that achieves a $1/2+\eps$ approximation for binary strings. We stress that the original reduction in $\cite{rubinstein2020reducing}$ is not in the asymmetric streaming model and hence our result does not follow directly from previous works.

We also provide a simple algorithm to approximate $\lnst(x,t)$. Assume the alphabet is $[r]$, the idea is to create a set $\mathcal{D}\subset [t]^r$ and for each $d = (d_1,\dots, d_r)\in \mathcal{D}$, we check whether $\sigma(d) = 1^{d_1}2^{d_2}\cdots r^{d_r}$ is a subsequence of $x$. We show that in a properly chosen $\mathcal{D}$ with size $O\big((\min(t,r/\eps))^r\big)$ obtained by sparsifying $[t]^r$, there is some $d\in \mathcal{D}$  such that $\sigma(d)$ is a subsequence of $x$ and $|\sigma(d)|$ is a $1-\eps$ approximation of $\lnst(x,t).$  

\nopagebreak[4]
\subsection{Open Problems} Our work leaves a plethora of intriguing open problems. The main one is to close the gap between our lower bounds and the upper bounds of known algorithms, especially for the case of small alphabets and large (say constant) approximation. We believe that in this case it is possible to improve both the lower bounds and the upper bounds. Another interesting problem is to  completely characterize the space complexity of $\lnst$.

\section{Preliminaries}
\label{prelim}

We use the following conventional notations. Let $x \in \Sigma^n$ be a string of length $n$ over alphabet $\Sigma$. By $|x|$, we mean the length of $x$. We denote the $i$-th character of $x$ by $x_i$ and the substring from the $i$-th character to the $j$-th character by $x{[i:j]}$. We denote the concatenation of two strings $x$ and $y$ by $x\circ y$. By $[n]$, we mean the set of positive integers no larger than $n$. 

\textbf{Edit Distance} The \emph{edit distance} (or \emph{Levenshtein distance}) between two strings $x,y\in \Sigma^*$ , denoted by $\ed (x,y)$, is the smallest number of edit operations (insertion, deletion, and substitution) needed to transform one into another. The insertion (deletion) operation adds (removes) a character at some position. The substitution operation replace a character with another character from the alphabet set $\Sigma$.

\textbf{Longest Common Subsequence} We say the string $s\in \Sigma^t$ is a \emph{subsequence} of $x\in \Sigma^n$ if there exists indices $1\leq i_1<i_2<\cdots < i_t\leq n$ such that $s = x_{i_1}x_{i_2}\cdots x_{i_t}$.  A string $s$ is called a \emph{common subsequence} of strings $x$ and $y$ if $s$ is a subsequence of both $x$ and $y$. Given two strings $x$ and $y$, we denote the length of the longest common subsequence (LCS) of $x$ and $y$ by $\lcs(x,y)$.

In the proofs, we sometime consider the matching  between $x,y\in \Sigma^n$. By a matching, we mean a function $m:[n]\rightarrow [n]\cup \emptyset$ such that if $m(i)\neq \emptyset$, we have $x_i = y_{m(i)}$. We require the matching to be non-crossing. That is, for $i< j$, if $m(i)$ and $m(j)$ are both not $\emptyset$, we have $m(i)< m(j)$. The size of a matching is the number of $i\in [n]$ such that $m(i) \neq \emptyset$. We say a matching is a best matching if it achieves the maximum size. Each matching between $x$ and $y$ corresponds to a common subsequence. Thus, the size of a best matching between $x$ and $y$  is equal to $\lcs(x,y)$.

\textbf{Longest Increasing Subsequence} In the longest increasing subsequence problem, we assume there is a given total order on the alphabet set $\Sigma$. We say the string $s\in \Sigma^t$ is an \emph{increasing subsequence} of $x\in \Sigma^n$ if there exists indices $1\leq i_1<i_2<\cdots < i_t\leq n$ such that $s = x_{i_1}x_{i_2}\cdots x_{i_t}$ and $x_{i_1} < x_{i_2} < \cdots < x_{i_t}$. We denote the length of the longest increasing subsequence (LIS) of string $x$ by $\lis(x)$. In our discussion, we let $\infty$ and $-\infty$ to be two imaginary characters such that $-\infty<\alpha<\infty$ for all $\alpha\in \Sigma$.

\textbf{Longest Non-decreasing Subsequence} The longest non-decreasing subsequence is a variant of the longest increasing problem. The difference is that in a non-decreasing subsequence $s= x_{i_1}x_{i_2}\cdots x_{i_t}$, we only require $x_{i_1} \leq x_{i_2} \leq \cdots \leq  x_{i_t}$.

\begin{lemma}[Lov\'asz Local Lemma]
	\label{lem:LLL}
	Let $\mathcal{A} = \{A_1, A_2, \dots, A_n\}$ be a finite set of events. For $A\in \mathcal{A}$, let $\Gamma(A)$ denote the neighbours of $A$ in the dependency graph (In the dependency graph, mutually independent events are not adjacent). If there exist an assignment of reals $x: \mathcal{A}\rightarrow [0,1)$ to the events such that
	\begin{equation*}
		\forall \; A\in \mathcal{A}, \; \Pr(A)\leq x(A)\prod_{A'\in \Gamma(A)}(1-x(A')).
	\end{equation*}
	Then, for the probability that none of the events in $\mathcal{A}$ happens, we have 
	\begin{equation*}
		\Pr(\overline{A_1}\wedge \overline{A_2} \wedge \cdots \wedge \overline{A_{t-l}})\geq \prod_{i = 1}^n(1-x(A_i)).
	\end{equation*}
\end{lemma}

In the  \textbf{Set Disjointness} problem, we consider a two party game. Each party holds a binary string of length $n$, say $x$ and $y$. And the goal is to compute the function $\dis(x,y)$ as defined below

\begin{equation*}
	\dis(x,y) = \begin{cases}
		0, \text{ if } \exists \; i, \text{ s.t. } x_i = y_i \\
		1, \text{ otherwise}
	\end{cases}
\end{equation*}

We define $R^{1/3}(f)$ as the minimum number of bits required to be sent between two parties in any randomized multi-round communication protocol with 2-sided error at most $1/3$. The following is a well-known result.

\begin{lemma}[\cite{kalyanasundaram1992probabilistic}, \cite{razborov1990distributional}]
	\label{lem:dis}
	$R^{1/3}(\dis) = \Omega(n)$.
\end{lemma}

We will consider the one-way $t$-party communication model where $t$ players $P_1, P_2, \dots, P_t$ each holds input $x_1, x_2, \dots, x_t$ respectively. The goal is to compute the function $f(x_1, x_2, \dots, x_t)$. In the one-way communication model, each player speaks in turn and player $P_i$ can only send message to player $P_{i+1}$. We sometimes consider multiple round of communication. In an $R$ round protocol, during round $r\leq R$, each player speaks in turn $P_i$ sends message to $P_{i+1}$. At the end of round $r<R$, player $P_{t}$ sends a message to $P_1$. At the end of round $R$, player $P_{t}$ must output the answer of the protocol. We note that our lower bound also hold for a stronger \emph{blackboard} model. In this model, the players can write messages (in the order of $1,\dots, t$) on a blackboard that is visible to all other players. 

We define the \emph{total communication complexity} of $f$ in the $t$-party one-way communication model, denoted by $CC^{tot}_t(f)$, as the minimum number of bits required to be sent by the players in every deterministic communication protocol that always outputs a correct answer. The total communication complexity in the blackboard model is the total length of the messages written on the blackboard by the players.  

For deterministic protocol $P$ that always outputs the correct answer, we let $M(P)$ be the maximum number of bits required to be sent by some player in protocol $P$. We define $CC^{max}_t(f)$, the maximum communication complexity of $f$ as $\min_P{M(P)}$ where $P$ ranges over all deterministic protocol that outputs a correct answer.  We have $CC^{max}_t(f)\geq \frac{1}{tR}CC^{tot}_t(f)$ where $R$ is the number of rounds.

Let $X  $ be a subset of $U^t$ where $U$ is some finite universe and $t$ is an integer. Define the \textbf{span} of $X$ by $\mathsf{Span}(X) = \{y\in U^t| \forall\  i \in [t], \ \exists \ x \in X  \text{ s. t. } y_i = x_i\}$.  The notion $k$-fooling set introduced in \cite{ergun2008distance} is defined as following.

\begin{definition}[$k$-fooling set]
	Let $f: U^t\rightarrow \{0,1\}$ where $U$ is some finite universe. Let $S\subseteq U^t$. For some integer $k$, we say $S$ is a $k$-fooling set for $f$ iff $f(x) = 0$ for each $x\in S$ and for each subset $S'$ of $S$ with cardinality $k$, the span of $S'$ contains a member $y$ such that $f(y) = 1$.
	
\end{definition}

We have the following.

\begin{lemma}
	[Fact 4.1 from \cite{ergun2008distance}]
	\label{lem:k-fooling}
	Let $S$ be a $k$-fooling set for $f$, we have $CC^{tot}_t(f) \geq \log(\frac{|S|}{k-1})$.
\end{lemma}

\section{Lower Bounds for Edit Distance}
\label{sec:ed}

We show a reduction from the Set Disjointness problem ($\dis$) to computing $\ed$ between two strings in the asymmetric streaming model. For this, we define the following two party communication problem between Alice and Bob.

Given an alphabet $\Sigma$ and three integers $n_1, n_2, n_3$. Suppose Alice has a string $x_1 \in \Sigma^{n_1}$ and Bob has a string $x_2 \in \Sigma^{n_1}$. There is another fixed reference string $y \in \Sigma^{n_3}$ that is known to both Alice and Bob. Alice and Bob now tries to compute $\ed((x_1 \circ x_2), y)$.We call this problem $\ed_{cc}(y)$.
We prove the following theorem.

\begin{theorem}\label{thm:EDmain}
Suppose each input string to $\dis$ has length $n$ and let $\Sigma=[6n] \cup \{a\}$. Fix $y=(1, 2, \cdots, 6n)$. Then $R^{1/3}(\ed_{cc}(y)) \geq R^{1/3}(\dis)$.
\end{theorem}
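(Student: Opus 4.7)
The plan is a reduction from $\dis$ to $\ed_{cc}(y)$: any protocol for the latter with $C$ bits and error at most $1/3$ will yield, via the reduction, a protocol for $\dis$ with the same complexity and error. Given a $\dis$ instance in which Alice holds $S_A \subseteq [n]$ and Bob holds $S_B \subseteq [n]$, Alice and Bob will locally compute encoded strings $x_1 = x_1(S_A)$ and $x_2 = x_2(S_B)$ over $\Sigma = [6n] \cup \{a\}$, invoke the assumed $\ed_{cc}(y)$ protocol on $(x_1, x_2)$ to obtain $\ed(x_1 \circ x_2, y)$, and turn the output into a $\dis$ answer by thresholding.

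The encoding I have in mind partitions $y = (1, 2, \ldots, 6n)$ into $n$ blocks of size $6$, where block $i$ is $(6i{-}5, 6i{-}4, \ldots, 6i)$. For each $i \in [n]$, Alice writes block $i$ into $x_1$ if $i \in S_A$ and $a^6$ otherwise; Bob writes block $i$ into $x_2$ if $i \in S_B$ and $a^6$ otherwise. With this rule each $i \in S_A \cap S_B$ forces a duplicate of $y$'s $i$-th block to occur once in $x_1$ and once in $x_2$, while each $i \in S_A \triangle S_B$ contributes block $i$ to only one of the two halves.

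The heart of the argument is a gap lemma: there is a threshold $T$ with $\ed(x_1 \circ x_2, y) \leq T$ when $S_A \cap S_B = \emptyset$ and $\ed(x_1 \circ x_2, y) > T$ otherwise. Since $y$ consists of $6n$ distinct characters from $[6n]$, the longest common subsequence of any string $u$ with $y$ equals the longest increasing subsequence of $u$ restricted to $[6n]$. Combining this with $\ed(u, y) \geq \max(|u|, |y|) - \mathrm{LCS}(u, y)$ reduces the analysis to controlling this increasing subsequence in $x_1 \circ x_2$. In the disjoint case every character from $[6n]$ appears at most once in $x_1 \circ x_2$, and an explicit alignment that deletes the $a$-wildcards and matches the rest against $y$ attains the upper bound $T$. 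In the non-disjoint case every duplicated block prevents the increasing subsequence from using both of its copies, so the LCS strictly decreases and $\ed$ strictly increases by at least one positive unit per intersection element. The reduction is then immediate: Alice and Bob use no extra communication to produce $x_1, x_2$, invoke the $\ed_{cc}(y)$ protocol once, and output \emph{disjoint} iff the returned edit distance is at most $T$.

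The main obstacle will be showing that the gap holds uniformly across $\dis$ inputs: in the naive block encoding above, the LCS also depends on the order structure of $S_A, S_B$ (for example, on whether $\max S_A < \min S_B$) and not only on $|S_A \cap S_B|$, so the clean value $T = 12n - 6(|S_A| + |S_B|)$ for the disjoint case can be too optimistic when $S_A$ and $S_B$ are interleaved. The factor of $6$ in the length of $y$ is present precisely to give enough slack for a block-level padding or reordering argument that recovers a uniform gap; alternatively, the reduction can be routed through the standard promise version of $\dis$ with $|S_A| = |S_B| = n/4$ and $|S_A \cap S_B| \in \{0, 1\}$, whose randomized complexity is still $\Omega(n)$ and for which the gap argument must only be verified on the two promised intersection sizes.
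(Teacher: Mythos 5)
Your high-level framework — encode the $\dis$ inputs into $x_1, x_2$, invoke the $\ed_{cc}(y)$ protocol once, threshold the answer — matches the paper, but your specific block encoding has a genuine gap that you yourself flag and then do not close. With Alice writing block $i$ of $y$ into $x_1$ iff $i \in S_A$ (else $a^6$) and Bob doing the same for $S_B$, the quantity $\ed(x_1 \circ x_2, y)$ depends heavily on the \emph{interleaving order} of $S_A$ and $S_B$, not just on $|S_A \cap S_B|$. For instance, with $S_A = \{3\}$, $S_B = \{1\}$ (disjoint), the non-$a$ blocks appear in $x_1 \circ x_2$ in decreasing order, so the LCS with $y$ is only one block long, whereas $S_A = \{1\}$, $S_B = \{3\}$ gives an LCS of two blocks. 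No single threshold $T$ separates disjoint from non-disjoint. Routing through the promise version of $\dis$ does not repair this: even with $|S_A| = |S_B| = n/4$ and $|S_A \cap S_B| \in \{0,1\}$, the order structure still swings the edit distance by $\Theta(n)$ between, say, $\max S_A < \min S_B$ and $\min S_A > \max S_B$. Your speculation that the factor of $6$ in $|y|$ supplies slack for a "reordering argument" is not substantiated and, in the paper, that factor serves a different purpose entirely.

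The paper avoids the interleaving problem by choosing a much more delicate encoding. Alice's base string is $y$ with the symbols $\equiv 0 \pmod 3$ removed (so $(1,2,4,5,\ldots)$, length $4n$) and Bob's base string is $y$ with the symbols $\equiv 1 \pmod 3$ removed (so $(2,3,5,6,\ldots)$, length $4n$). The "shared" symbols of the form $3j-1$ appear in both base strings, and the $\dis$ inputs are encoded by replacing these shared symbols by the wildcard $a$ in a \emph{complementary} pattern: Alice expands each bit $\alpha_j$ to the pair $(\alpha_j, 1-\alpha_j)$, Bob expands $\beta_j$ to $(1-\beta_j, \beta_j)$, and a $0$ in the expanded string triggers an $a$-replacement. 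This guarantees that the counts of $a$'s to the left of any prefix of $x_1$ plus the $a$'s to the right of the corresponding suffix of $x_2$ is at least $n$ in the disjoint case and can be made $n-1$ in the non-disjoint case, regardless of where the split between $x_1$ and $x_2$ falls along $y$. The actual proof then analyzes $\ed(x_1, y[1:r]) + \ed(x_2, y[r{+}1:6n])$ exactly for every possible split point $r$ (Lemmas~\ref{lem:ED1} and \ref{lem:ED2}), establishing the clean thresholds $7n-2$ versus $7n-3$ of Lemma~\ref{lem:EDmain} and Claim~\ref{clm:swith}. Your proposal is missing this split-point-uniform analysis and the encoding that makes it possible; without it the thresholding step does not go through.
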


To prove this theorem, we first construct the strings $x_1, x_2$ based on the inputs $\alpha, \beta \in \{0, 1\}^n$ to $\dis$. From $\alpha$, Alice constructs the string $\alpha' \in \{0, 1\}^{2n}$ such that $\forall j \in [n], \alpha'_{2j-1}=\alpha_j$ and $ \alpha'_{2j}=1-\alpha_j$. Similarly, from $\beta$, Bob constructs the string $\beta' \in \{0, 1\}^{2n}$ such that $\forall j \in [n], \beta'_{2j-1}=1-\beta_j$ and $ \beta'_{2j}=\beta_j$. Now Alice lets $x_1$ be a modification from the string $(1, 2, 4, 5, \cdots, 3i-2, 3i-1, \cdots, 6n-2, 6n-1)$ such that $\forall j \in [2n]$, if $\alpha'_j=0$ then the symbol $3j-1$ (at index $2j$) is replaced by $a$. Similarly, Bob lets $x_2$ be a modification from the string $(2, 3, 5, 6, \cdots, 3i-1, 3i, \cdots, 6n-1, 6n)$ such that $\forall j \in [2n]$, if $\beta'_j=0$ then the symbol $3j-1$ (at index $2j-1$) is replaced by $a$. 

We have the following Lemma.

\begin{lemma}\label{lem:EDmain}
If $\dis(\alpha, \beta)=1$ then $\ed((x_1 \circ x_2), y) \geq 7n-2$.
\end{lemma}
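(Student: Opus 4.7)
The plan is to reduce the lemma to a split-point analysis and then carry out a tight alignment count. I would begin with the standard identity
\[
\ed(x_1 \circ x_2, y) \;=\; \min_{0 \leq s \leq 6n} \bigl(\ed(x_1, y[1..s]) + \ed(x_2, y[s+1..6n])\bigr),
\]
which holds because any optimal (monotone) alignment partitions $y$ at the last $y$-index touched by $x_1$. Setting $E_1(s) := \ed(x_1, y[1..s])$ and $E_2(s) := \ed(x_2, y[s+1..6n])$, the lemma reduces to showing $E_1(s) + E_2(s) \geq 7n - 2$ for every $s \in \{0,\ldots,6n\}$.

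For the main step I would derive explicit formulas for $E_1$ and $E_2$. Define $a_1(k) := \sum_{j \leq k}\alpha'_j$ and $b_1(k) := \sum_{j \leq k}\beta'_j$, and write $s = 3k + r$ with $r \in \{0,1,2\}$. The non-$a$ symbols of $x_1$ form a strictly increasing sequence whose value at odd position $2j-1$ is $3j-2$ (always) and whose value at even position $2j$ is $3j-1$ when $\alpha'_j = 1$. Since these values are pairwise distinct, any alignment of $x_1$ with $y[1..s]$ uses only a subset of the unique LCS matches. Rewriting
\[
\ed(u,v) \;=\; |u| + |v| \;-\; \max_{\text{monotone alignment}}\!\bigl(2m + s_{\mathrm{sub}}\bigr),
\]
where $m$ counts character matches and $s_{\mathrm{sub}}$ counts substitutions, I would upper-bound the maximum by taking all LCS matches and filling each inter-match interval with the maximum number of substitutions allowed by $\min(|x\text{-width}|, |y\text{-width}|)$. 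A short exchange argument (dropping any LCS match from this canonical alignment loses $2$ from the score but merges neighboring intervals for a gain of at most $2$ substitutions, a net change $\leq 0$) verifies optimality. The outcome, valid for $k$ in the bulk range, is $E_1(s) = 4n - 1 - a_1(k^*)$ and, by a mirror-symmetric computation for $x_2$, $E_2(s) = 3n - 1 + b_1(k^*)$, where $k^* = k$ if $r \in \{0,1\}$ and $k^* = k + 1$ if $r = 2$.

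Summing yields $E_1(s) + E_2(s) = 7n - 2 + \bigl(b_1(k^*) - a_1(k^*)\bigr)$, so it suffices to show $b_1(k^*) \geq a_1(k^*)$ for every $k^*$. Reading off the construction pair-by-pair, the contribution $(\alpha'_{2i-1} - \beta'_{2i-1},\, \alpha'_{2i} - \beta'_{2i})$ equals $(-1,+1)$ when $\alpha_i = \beta_i = 0$, is $(0,0)$ when exactly one of $\alpha_i, \beta_i$ equals $1$, and would be $(+1,-1)$ only in the forbidden case $\alpha_i = \beta_i = 1$, excluded by $\dis(\alpha,\beta) = 1$. Hence every partial sum of $\alpha'_j - \beta'_j$ is $\leq 0$, i.e.\ $a_1(k^*) \leq b_1(k^*)$, as required. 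The boundary splits $s \in \{0, 6n\}$ are handled by direct computation and give $E_1 + E_2 = 7n$.

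The main obstacle is the tight lower bound on $E_1(s)$ (and $E_2(s)$) in the second step: the naive LCS-only bound $\ed \geq \max(|u|,|v|) - \mathrm{LCS}$ gives only $E_1 + E_2 \geq 5n$, far short of $7n - 2$. The extra gap of $2n - 2$ must come from carefully counting substitutions in each interval and from the exchange argument that rules out alignments using fewer LCS matches. Once those formulas are in hand, the disjointness hypothesis enters cleanly through the partial-sum inequality on $\alpha' - \beta'$.
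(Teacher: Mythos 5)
Your plan follows the paper's proof essentially step for step: both use the split-point decomposition $\ed(x_1 \circ x_2, y) = \min_s \bigl(\ed(x_1, y[1..s]) + \ed(x_2, y[s+1..6n])\bigr)$, both derive closed-form expressions for the two summands as a function of the split, and both finish with a disjointness-derived partial-sum inequality. Your condition $a_1(k^*) \leq b_1(k^*)$ is exactly the paper's $\Gamma_1(i) + \Gamma_2(i) \geq n$ written in complementary notation (via $\Gamma_1(k) = k - a_1(k)$ and $\sum_j \beta'_j = n$), and is proved by the same pair-by-pair bookkeeping on $(\alpha'_{2i-1},\alpha'_{2i})$ versus $(\beta'_{2i-1},\beta'_{2i})$. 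Your formulas $E_1(s) = 4n - 1 - a_1(k^*)$ and $E_2(s) = 3n - 1 + b_1(k^*)$ are the content of the paper's Lemmas~\ref{lem:ED1} and~\ref{lem:ED2} modulo that change of variables (and, incidentally, yours is the accurate value at $s \equiv 0 \pmod 3$: the paper's stated formula for the $q=0$ case overcounts $\ed(x_1,[1:r])$ by $1$ when $p<2n$, e.g.\ for $n=1$, $x_1=(1,2,4,a)$, $r=3$ the true distance is $2$; this discrepancy is harmless for the lower bound).

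The one place you genuinely diverge is in \emph{how} the per-piece formulas are established. The paper first shows substitutions may be pushed after all insertions and deletions (Claim~\ref{claim:substitution}) and then case-splits on the deletion count; you rewrite $\ed(u,v) = |u| + |v| - \max(2m + s_{\mathrm{sub}})$ and argue the canonical ``all LCS matches, then greedy substitutions'' alignment is optimal via an exchange. As you have stated it, the exchange is not a true general fact: ``dropping an LCS match loses $2$ and merges intervals for a gain of at most $2$ substitutions'' fails if the two intervals have widths $(0,W)$ in $x$ and $(W,0)$ in $y$, where the pre-merge substitution count is $0$ but the post-merge count is $W+1$. To make the exchange go through you must invoke the specific structure of $x_1$ and $y[1..s]$: the $a$'s of $x_1$ occupy isolated even indices, so between consecutive LCS matches the $x$-width is at most $1$ and the $y$-width at most $2$, giving a merge gain of at most $1<2$; the final interval past the last LCS match needs a separate (short) check. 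Likewise, the boundary splits $s\in\{0,6n\}$ really do fall outside your formula (e.g.\ $E_1(0)=4n$, not $4n-1$), so the deferred ``direct computation'' is load-bearing, though easy. None of these gaps is conceptual, but they are precisely the places a full writeup would have to become careful.
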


To prove the lemma we observe that in a series of edit operations that transforms $(x_1, x_2)$ to $y$, there exists an index $r \in [6n]$ s.t. $x_1$ is transformed into $[1:r]$ and $x_2$ is transformed into $[r+1:n]$. We analyze the edit distance in each part. We first have the following claim:

\begin{claim}\label{claim:substitution}
For any two strings $u$ and $v$, there is a sequence of optimal edit operations (insertion/deletion/substitution)  that transforms $u$ to $v$, where all deletions happen first, followed by all insertions, and all substitutions happen at the end of the operations.
\end{claim}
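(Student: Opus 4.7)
The plan is to take any optimal edit sequence $\sigma$ that transforms $u$ into $v$ and reorganize it into the canonical form without changing the total cost. The first step is to extract from $\sigma$ the underlying order-preserving partial matching $M$ between positions of $u$ and positions of $v$: a pair $(j,i)$ lies in $M$ exactly when the $j$-th character of $u$ is either left in place as the $i$-th character of $v$ or substituted into it by $\sigma$; unmatched positions of $u$ correspond to the deletions performed by $\sigma$, and unmatched positions of $v$ correspond to the insertions. Any valid edit sequence induces such a monotone partial matching, and conversely any monotone partial matching determines the multiset of deletions, insertions, and substitutions that must be performed to realize it.

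Given $M$, I would build a new sequence $\sigma'$ in three phases. In phase~1, delete, from right to left, every unmatched position of $u$; the resulting intermediate string $w$ is exactly $u$ restricted to its matched positions. In phase~2, insert, from left to right, each character $v_i$ with $i$ unmatched, placing it at the position that makes the matched characters of $w$ end up in their $M$-prescribed slots; call the resulting string $w'$. In phase~3, for every matched pair $(j,i) \in M$ with $u_j \neq v_i$, substitute the character at position $i$ from $u_j$ into $v_i$. The final string is $v$, and the total count of non-match operations is $(|u|-|M|) + (|v|-|M|) + |\{(j,i)\in M : u_j\neq v_i\}|$, which matches the cost of $\sigma$, so optimality is preserved.

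The main thing to verify, and the only nontrivial bookkeeping, is that the deletion and insertion indices in phases~1 and~2 can be chosen consistently so that phase~2 leaves each matched character at its correct final position. Deleting from right to left ensures that earlier deletions do not shift the indices of later ones; inserting from left to right with a running count of insertions already performed lets each insertion index be computed directly from $M$. Phase~3 is then trivial because substitutions neither change length nor permute positions. This yields the desired canonical decomposition and establishes the claim.
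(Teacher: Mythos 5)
Your proof is correct in substance but takes a genuinely different route from the paper. The paper works locally: it observes that substitutions do not shift indices, then repeatedly swaps the last substitution with any insertion/deletion that follows it, arguing that each swap cannot increase the operation count; finally it notes that an optimal sequence never deletes an inserted symbol (the two would cancel), so deletions can be placed before insertions. Your argument is global: you extract from the optimal sequence $\sigma$ the induced monotone alignment $M$ between positions of $u$ and $v$, observe that the cost of realizing $M$ is determined combinatorially by $(|u|-|M|)+(|v|-|M|)+|\{(j,i)\in M:u_j\neq v_i\}|$, and then re-realize $M$ in the canonical delete--insert--substitute order. This is the classical ``trace'' or alignment view of edit distance, and it is arguably cleaner and more reusable; the paper's swap argument is more elementary but also more fiddly to make fully rigorous (it glosses over exactly how indices shift under the swap).

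One small point worth tightening: you assert that the cost of the rebuilt sequence \emph{matches} the cost of $\sigma$. What is immediate is only that the rebuilt cost is \emph{at most} the cost of $\sigma$ --- a non-optimal $\sigma$ could waste operations (e.g., insert and later delete a symbol, or substitute a position twice), in which case it would exceed the trace cost. Since $\sigma$ is assumed optimal, the inequality suffices: the canonical sequence has cost $\le \mathrm{cost}(\sigma)=\ed(u,v)$, hence is also optimal. Phrasing it as $\le$ rather than $=$ avoids having to argue that an optimal $\sigma$ never wastes operations, and also requires you to check that the extracted relation $M$ really is a well-defined monotone partial matching (each surviving character of $u$ lands at exactly one position of $v$, and relative order is preserved), which holds because insertions, deletions, and substitutions never reorder surviving symbols.
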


\begin{proof}
Note that a substitution does not change the indices of the symbols. Thus, in any sequence of such edit operations, consider the last substitution which happens at index l. If there are no insertion/deletion after it, then we are good. Otherwise, consider what happens if we switch this substitution and the insertion/deletions before this substitution and after the second to last substitution. The only symbol that may be affected is the symbol where index l is changed into. Thus, depending on this position, we may or may not need a substitution, which results in a sequence of edit operations where the number of operations is at most the original number. In this way, we can change all substitutions to the end.

Further notice that we can assume without loss of generality that any deletion only deletes the original symbols in $u$, because otherwise we are deleting an inserted symbol, and these two operations cancel each other. Therefore, in a sequence of optimal edit operations,  all the deletions can happen before any insertion.
\end{proof}

For any $i$, let $\Gamma_1(i)$ denote the number of $a$ symbols up to index $2i$ in $x_1$. Note that $\Gamma_1(i)$ is equal to the number of $0$'s in $\alpha'[1:i]$. We have the following lemma.

\begin{lemma}\label{lem:ED1}
For any $p \in [n]$, let $r=3p-q$ where $0 \leq q \leq 2$, then $\ed(x_1, [1:r]) = 4n-p-q+\Gamma_1(p)$ if $q=0, 1$ and $\ed(x_1, [1:r]) = 4n-p+\Gamma_1(p-1)$ if $q=2$.
\end{lemma}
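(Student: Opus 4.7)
The plan is to prove the stated equality by establishing matching upper and lower bounds on $\ed(x_1, [1:r])$, treating each residue $q \in \{0, 1, 2\}$ separately.

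\textbf{Upper bound via explicit construction.} For each $q$, I will exhibit an edit sequence meeting the claimed cost. The natural alignment pairs the ``anchor'' $x_1[2j-1] = 3j-2$ with the target position $3j-2$ (a free match) and the ``content'' $b_j = x_1[2j]$ with the target position $3j-1$ (a free match when $\alpha'_j = 1$, otherwise a substitution $a \to 3j-1$ costing $1$) for $j = 1, \ldots, p$; the suffix $x_1[2p+1\,{:}\,4n]$ is deleted, contributing $4n-2p$. The three cases differ only in how the last triple is completed: for $q=0$ insert $3j$ for every $j \le p$; for $q=1$ insert $3j$ only for $j \le p-1$ and leave the last content column as the final character of the target; for $q=2$ additionally delete $b_p$. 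Summing gives $p + \Gamma_1(p) + (4n-2p)$, $(p-1) + \Gamma_1(p-1) + [\alpha'_p=0] + (4n-2p)$, and $(p-1) + \Gamma_1(p-1) + 1 + (4n-2p)$ respectively, matching the formulas in the lemma.

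\textbf{Lower bound via monotone alignments.} Claim~\ref{claim:substitution} lets me restrict to optimal sequences of the ``delete-then-insert-then-substitute'' form, which are in bijection with monotone partial matchings between the positions of $x_1$ and of $[1{:}r]$. Writing $m$ for the number of matched pairs and $c$ for the number of equal-symbol matches among them, the cost equals $|x_1| + |[1{:}r]| - m - c = 4n + r - m - c$, so it suffices to upper-bound $m + c$. Since every symbol other than $a$ appears at most once in $x_1$ and at most once in $[1{:}r]$, and $a$ does not appear in the target at all, the set of possible equal matches is confined to the canonical pairs $(2k-1,\, 3k-2)$ and, when $\alpha'_k = 1$, $(2k,\, 3k-1)$.

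\textbf{Main obstacle.} The crux is ruling out ``shifted'' alignments that use $x_1$ positions beyond $2p$ or exploit the freedom of unequal matches to boost $m + c$ past what the canonical matching achieves. I plan to traverse the target left-to-right and argue, by examining order constraints, that between any two consecutive canonical equal matches the monotonicity of the alignment leaves room for at most one additional unequal match, while the contribution from the boundary pair is controlled by the value of $\alpha'_p$ (or $\alpha'_{p-1}$ when $q = 2$). A case analysis on $q$ then yields $m + c \le 4p - \Gamma_1(p)$ for $q \in \{0,1\}$ and $m + c \le 4p - 2 - \Gamma_1(p-1)$ for $q = 2$; substituting into $4n + r - m - c$ reproduces the expressions in the statement, completing the proof.
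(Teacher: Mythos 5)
Your upper bound is essentially right, and the identity
\[
\ed(x_1,[1{:}r]) \;=\; |x_1| + r - m - c,
\]
where $m$ counts matched pairs in a monotone alignment and $c$ the equal ones, is a valid and clean reformulation. The paper instead fixes the number of deletions $d_d$ (so $d_i = d_d + r - 4n$), reduces via Claim~\ref{claim:substitution} to a delete--insert--substitute normal form, and then bounds the number of ``agreements'' by an LCS argument, splitting into the cases $d_d \geq 4n-2p$ and $d_d < 4n-2p$. So your framework is genuinely different in presentation, and in principle either route could yield the bound.

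The problem is that your lower bound is not a proof: the paragraph headed ``Main obstacle'' is an outline of what you intend to argue, and the combinatorial step -- bounding $m+c$ under monotonicity -- is exactly the entire content of the lemma. Worse, the bound you announce there, $m + c \leq 4p - \Gamma_1(p)$ for $q=0$, is false, so the plan cannot be completed as stated. Take $n = p = 1$, $q=0$, $\alpha'_1 = 1$, so $x_1 = (1,2,4,a)$ (or more generally $x_1 = (1,2,4,\ast,\ldots)$) and the target is $(1,2,3)$. The monotone matching $\{(1,1),(2,2),(3,3)\}$ has $m = 3$ and $c = 2$, giving $m+c = 5 = 4p+1-\Gamma_1(p)$, which exceeds your claimed $4p-\Gamma_1(p) = 4$; the corresponding edit sequence (substitute $x_1[3]=4 \to 3$, delete the tail) has cost $4n-2$, strictly below the stated value $4n-p-q+\Gamma_1(p) = 4n-1$. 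The culprit is the possibility of a monotone alignment that matches $x_1[2p+1]=3p+1$ to target position $3p$ as a substitution rather than inserting $3p$, and your sketch (``between any two consecutive canonical equal matches \ldots at most one additional unequal match'') does not account for what happens to the right of the last canonical match. Any completed argument, whether along your lines or the paper's, has to confront this boundary interaction explicitly, and doing so changes the $q=0$ constant. So what is missing is not a routine calculation but the central case analysis, and the target inequality you set for it is not the right one.
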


\begin{proof}
By Claim~\ref{claim:substitution} we can first consider deletions and insertions, and then compute the Hamming distance after these operations (for substitutions). 

We consider the three different cases of $q$. Let the number of insertions be $d_i$ and the number of deletions be $d_d$. Note that $d_i-d_d=r-4n$. We define the number of agreements between two strings to be the number of positions where the two corresponding symbols are equal.

\paragraph{The case of $q=0$ and $q=1$.} Here again we have two cases.

\begin{description}
\item[Case (a):]
$d_d \geq 4n-2p.$ In this case, notice that the LCS after the operations between $x_1$ and $y$ is at most the original $\lcs(x_1, y)=2p-\Gamma_1(p)$. With $d_i$ insertions, the number of agreements can be at most $\lcs(x_1, y)+d_i=2p-\Gamma_1(p)+d_i$, thus the Hamming distance at the end is at least $r-2p+\Gamma_1(p)-d_i$. Therefore, in this case the number of edit operations is at least $d_i+d_d+r-2p+\Gamma_1(p)-d_i \geq 4n-p-q+\Gamma_1(p)$, and the equality is achieved when $d_d = 4n-2p$.

\item[Case (b):] $d_d < 4n-2p.$ In this case, notice that all original symbols in $x_1$ larger than $3d_i$ (or beyond index $2d_i$ before the insertions) are guaranteed to be out of agreement. Thus the only possible original symbols in $x_1$ that are in agreement with $y$ after the operations are the symbols with original index at most $2d_i$. Note that the LCS between $x_1[1: 2d_i]$ and $y$ is $2d_i-\Gamma_1(d_i)$. Thus with $d_i$ insertions the number of agreements is at most $3d_i-\Gamma_1(d_i)$, and the Hamming distance at the end is at least $r-3d_i+\Gamma_1(d_i)$.

Therefore the number of edit operations is at least $d_i+d_d+r-3d_i+\Gamma_1(d_i)=r-d_i+(d_d-d_i)+\Gamma_1(d_i)=4n-d_i+\Gamma_1(d_i)$. Now notice that $d_i=d_d+r-4n < p$ and the quantity $d_i-\Gamma_1(d_i)$ is non-decreasing as $d_i$ increases. Thus the number of edit operations is at least $4n-p+\Gamma_1(p) \geq 4n-p-q+\Gamma_1(p)$.
\end{description}

The other case of $q$ is similar, as follows.

\paragraph{The case of $q=2$.} Here again we have two cases.

\begin{description}
\item[Case (a):]
$d_d \geq 4n-2p+1.$ In this case, notice that the LCS after the operations between $x_1$ and $y$ is at most the original $\lcs(x_1, y)=2(p-1)-\Gamma_1(p-1)+1=2p-1-\Gamma_1(p-1)$. With $d_i$ insertions, the number of agreements can be at most $\lcs(x_1, y)+d_i=2p-1-\Gamma_1(p-1)+d_i$, thus the Hamming distance at the end is at least $r-2p+1+\Gamma_1(p-1)-d_i$. Therefore, in this case the number of edit operations is at least $d_i+d_d+r-2p+1+\Gamma_1(p-1)-d_i \geq 4n-p+\Gamma_1(p-1)$, and the equality is achieved when $d_d = 4n-2p+1$.
\item[Case (b):] $d_d \leq 4n-2p.$ In this case, notice that all original symbols in $x_1$ larger than $3d_i$ (or beyond index $2d_i$ before the insertions) are guaranteed to be out of agreement. Thus the only possible original symbols in $x_1$ that are in agreement with $y$ after the operations are the symbols with original index at most $2d_i$. Note that the LCS between $x[1: 2d_i]$ and $y$ is $2d_i-\Gamma_1(d_i)$. Thus with $d_i$ insertions the number of agreements is at most $3d_i-\Gamma_1(d_i)$, and the Hamming distance at the end is at least $r-3d_i+\Gamma_1(d_i)$.

Therefore the number of edit operations is at least $d_i+d_d+r-3d_i+\Gamma_1(d_i)=r-d_i+(d_d-d_i)+\Gamma_1(d_i)=4n-d_i+\Gamma_1(d_i)$. Now notice that $d_i=d_d+r-4n < p-1$ and the quantity $d_i-\Gamma_1(d_i)$ is non-decreasing as $d_i$ increases. Thus the number of edit operations is at least $4n-(p-1)+\Gamma_1(p-1) > 4n-p+\Gamma_1(p-1)$.
\end{description}

\end{proof}

We can now prove a similar lemma for $x_2$. For any $i$, let $\Gamma_2(i)$ denote the number of $a$ symbols from index $2i+1$ to $4n$ in $x_2$. Note that $\Gamma_2(i)$ is equal to the number of $0$'s in $\beta'[i+1:2n]$.

\begin{lemma} \label{lem:ED2}
Let $r=3p+q$ where $0 \leq q \leq 2$, then $\ed(x_2, [r+1:6n]) = 2n+p-q+\Gamma_2(p)$ if $q=0, 1$ and $\ed(x_2, [r+1:6n]) = 2n+p+\Gamma_2(p+1)$ if $q=2$.
\end{lemma}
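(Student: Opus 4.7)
The proof follows the structure of Lemma~\ref{lem:ED1}, adapted to $x_2$ and the suffix $[r+1:6n]$. By Claim~\ref{claim:substitution} I may assume the optimal edit sequence performs $d_d$ deletions, then $d_i$ insertions, then substitutions, and the length constraint forces $d_i - d_d = 2n - 3p - q$. A direct count using the layout of $x_2$ shows that the non-$a$ symbols at indices $\ge 2p+1$ (resp.\ $\ge 2p+2$ when $q=2$) form a sorted sequence contained in $[r+1,6n]$, giving $\lcs(x_2, [r+1:6n]) = 4n - 2p - \Gamma_2(p)$ when $q\in\{0,1\}$ and $\lcs(x_2,[r+1:6n]) = 4n - 2p - 1 - \Gamma_2(p+1)$ when $q=2$.

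I then split into three cases by $q$, and within each distinguish Case~(a), where $d_d$ is large (concretely $d_d \ge 2p$ for $q\in\{0,1\}$ and $d_d \ge 2p+1$ for $q=2$), from Case~(b) where $d_d$ is small. In Case~(a) the post-operation agreements are at most $\lcs(x_2,[r+1:6n]) + d_i$, so the number of substitutions is at least $(6n-r) - \lcs(x_2,[r+1:6n]) - d_i$; adding the deletions and insertions cancels $d_i$ and yields the claimed expression, with equality exactly at the Case~(a) threshold, paralleling Case~(a) of Lemma~\ref{lem:ED1}.

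Case~(b) is where the main work lies. Symmetrically to Lemma~\ref{lem:ED1}'s Case~(b), one observes that an original symbol $x_2[j]$ can only land in agreement with the target when $j$ is large enough: combining the final-position constraint $i' \ge j - d_d$ with $x_2[j] = r + i'$ and the shape $x_2[j] \in \{3\lceil j/2\rceil-1,\, 3\lceil j/2\rceil\} \cup \{a\}$ forces $\lceil j/2 \rceil \ge r - d_d$, so only positions lying in a suffix of $x_2$ determined by $d_d$ can contribute to the LCS. Adding the $d_i$ insertion slack on top and invoking a monotonicity of $\Gamma_2$ analogous to the monotonicity of $d_i - \Gamma_1(d_i)$ used for $x_1$, the resulting bound pushes up to the target expression.

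The main obstacle I expect is pinning down the exact suffix threshold in Case~(b) and handling the boundary between odd and even indices, in particular the shift from $\Gamma_2(p)$ to $\Gamma_2(p+1)$ in the $q=2$ case. As in Lemma~\ref{lem:ED1}, the argument is slightly lossy at this boundary but retains enough slack to close to the claimed equality.
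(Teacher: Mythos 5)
Your approach is a genuinely different route from the paper's. The paper proves Lemma~\ref{lem:ED2} by a short reduction to Lemma~\ref{lem:ED1}: replace every non-$a$ symbol $s$ in both $x_2$ and $[r+1:6n]$ by $6n+1-s$ and then reverse both strings. Since edit distance is invariant under relabeling and simultaneous reversal, this turns $x_2$ into a string of exactly the shape of $x_1$, turns $[r+1:6n]$ into $[1:6n-r]$ with $6n-r = 3(2n-p)-q$, and turns $\Gamma_2(i)$ into $\Gamma_1(2n-i)$; Lemma~\ref{lem:ED1} then gives the claimed formulas by substitution, with no new case analysis. You instead propose to redo the two-case $d_d$-threshold analysis from scratch on $x_2$ and the suffix $[r+1:6n]$, mirroring the structure of Lemma~\ref{lem:ED1}'s proof.

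The structural ideas you set up are consistent with the reduction (your Case~(a) threshold $d_d \ge 2p$, the LCS values $4n-2p-\Gamma_2(p)$ for $q\in\{0,1\}$ and $4n-2p-1-\Gamma_2(p+1)$ for $q=2$, and the suffix constraint $\lceil j/2\rceil \ge r-d_d$ all match what the reversal would give). However, as a proof the proposal is incomplete precisely where the work is: in Case~(b) you do not carry the bound through. You appeal to ``a monotonicity of $\Gamma_2$ analogous to the monotonicity of $d_i-\Gamma_1(d_i)$'' without stating it, and the analogy is not immediate because $\Gamma_2$ is nonincreasing in its argument (it counts $a$'s in a suffix), so the direction of the inequality and the quantity that must be monotone both change relative to Lemma~\ref{lem:ED1}. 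You also explicitly flag the $\Gamma_2(p)$ versus $\Gamma_2(p+1)$ boundary in the $q=2$ case as unresolved. Those are exactly the places a direct re-derivation is most likely to slip, and exactly the bookkeeping the paper's reversal-and-complement reduction eliminates for free. I'd recommend either completing Case~(b) with the correct monotone quantity made explicit, or--much simpler--observing the symmetry and invoking Lemma~\ref{lem:ED1} directly as the paper does.
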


\begin{proof}
We can reduce to Lemma~\ref{lem:ED1}. To do this, use $6n+1$ to minus every symbol in $x_2$ and in $[r+1:6n]$, while keeping all the $a$ symbols unchanged. Now, reading both strings from right to left, $x_2$ becomes the string $\overline{x_2}=1, 2, \cdots, 3i-2, 3i-1, \cdots, 6n-2, 6n-1$ with some symbols of the form $3j-1$ replaced by $a$'s. Similarly $[r+1:6n]$ becomes $[1:6n-r]$ where $6n-r=3(2n-p)-q$.

If we regard $\overline{x_2}$ as $x_1$ as in Lemma~\ref{lem:ED1} and define $\Gamma_1(i)$ as in that lemma, we can see that $\Gamma_1(i)=\Gamma_2(2n-i)$.

Now the lemma basically follows from Lemma~\ref{lem:ED1}. In the case of $q=0, 1$, we have

\[\ed(x_2, [r+1:6n]) = \ed(\overline{x'}, [1:6n-r]) = 4n-(2n-p)-q+\Gamma_1(2n-p)=2n+p-q+\Gamma_2(p).\]

In the case of $q=2$, we have

\[\ed(x_2, [r+1:6n]) = \ed(\overline{x'}, [1:6n-r]) = 4n-(2n-p)+\Gamma_1(2n-p-1)=2n+p+\Gamma_2(p+1).\]
\end{proof}

We can now prove Lemma~\ref{lem:EDmain}.

\begin{proof}[Proof of Lemma~\ref{lem:EDmain}]
We show that for any $r \in [6n]$, $\ed(x_1, [1:r])+\ed(x_2, [r+1:6n]) \geq 7n-2$. First we have the following claim.

\begin{claim}
If $\dis(\alpha, \beta)=1$, then for any $i \in [2n]$, we have $\Gamma_1(i)+\Gamma_2(i) \geq n$.
\end{claim}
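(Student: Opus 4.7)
The plan is to split on the parity of $i$ and exploit the fact that each consecutive pair $(\alpha'_{2j-1}, \alpha'_{2j}) = (\alpha_j, 1-\alpha_j)$ in $\alpha'$ (and analogously in $\beta'$) contributes exactly one $0$, together with the disjointness hypothesis on the leftover odd position.

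First I would handle the easy even case $i = 2k$. The prefix $\alpha'[1{:}2k]$ consists of exactly $k$ complementary pairs, so it contributes exactly $k$ zeros, giving $\Gamma_1(2k) = k$; likewise the suffix $\beta'[2k+1{:}2n]$ consists of $n-k$ complementary pairs and contributes $\Gamma_2(2k) = n-k$. The sum is exactly $n$ regardless of the inputs, so the bound holds without even invoking disjointness.

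Next I would treat the odd case $i = 2k-1$. Now $\alpha'[1{:}2k-1]$ breaks into the first $k-1$ complete pairs (contributing $k-1$ zeros) plus the single extra entry $\alpha'_{2k-1} = \alpha_k$, so $\Gamma_1(2k-1) = (k-1) + \mathbf{1}[\alpha_k = 0]$. Symmetrically, $\beta'[2k{:}2n]$ consists of the single entry $\beta'_{2k} = \beta_k$ followed by $n-k$ complete pairs, so $\Gamma_2(2k-1) = \mathbf{1}[\beta_k = 0] + (n-k)$. Summing gives
\[
\Gamma_1(2k-1) + \Gamma_2(2k-1) = n-1 + \mathbf{1}[\alpha_k = 0] + \mathbf{1}[\beta_k = 0].
\]
Finally I would invoke the hypothesis $\dis(\alpha, \beta) = 1$, which means there is no coordinate $j$ with $\alpha_j = \beta_j = 1$; in particular $\alpha_k$ and $\beta_k$ are not both $1$, so at least one of the two indicators equals $1$, and the sum is at least $n$.

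There is no real obstacle here: the whole argument is bookkeeping once one notices that $\alpha'$ and $\beta'$ are built from complementary pairs so that every full pair contributes exactly one $0$, isolating all the interaction between $\alpha$ and $\beta$ to the single ``straddling'' coordinate when $i$ is odd. The only thing to be careful about is the off-by-one alignment between $\alpha'$ (where $\alpha_k$ sits at the odd position $2k-1$) and $\beta'$ (where $\beta_k$ sits at the even position $2k$), which is exactly what makes both indicators live at the same index $k$ when $i = 2k-1$, enabling the disjointness hypothesis to kick in.
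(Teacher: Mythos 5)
Your proof is correct and follows the same approach as the paper: handle the even case by exact pair-counting (giving sum exactly $n$), and in the odd case isolate the two straddling entries $\alpha'_{2k-1}=\alpha_k$ and $\beta'_{2k}=\beta_k$ and invoke disjointness to show at least one contributes a zero. The only difference is cosmetic — you write the bookkeeping with explicit indicator functions, while the paper phrases it as counting $a$-symbols in $x_1[2i-1{:}2i]$ and $x_2[2i+1{:}2i+2]$ — but the decomposition and use of the hypothesis are identical.
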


To see this, note that when $i$ is even, we have $\Gamma_1(i)=i/2$ and $\Gamma_1(i)=n-i/2$ so $\Gamma_1(i)+\Gamma_2(i)=n$. Now consider the case of $i$ being odd and let $i=2j-1$ for some $j \in [2n]$. We know $\Gamma_1(i-1)=(i-1)/2=j-1$ and $\Gamma_2(i+1)=n-(i+1)/2=n-j$, so we only need to look at $x_1[2i-1, 2i]$ and $x_2[2i+1, 2i+2]$ and count the number of symbols $a$'s in them. If the number of $a$'s is at least $1$, then we are done.

The only possible situation where the number of $a$'s is $0$ is that $\alpha'_i=\beta'_{i+1}=1$ which means $\alpha_j=\beta_j=1$ and this contradicts the fact that $\dis(\alpha, \beta)=1$.

We now have the following cases.

\begin{description}
\item[Case (a):] $r=3p$. In this case, by Lemma~\ref{lem:ED1} and Lemma~\ref{lem:ED2} we have $\ed(x_1, [1:r]) = 4n-p+\Gamma_1(p)$ and $\ed(x_2, [r+1:6n]) = 2n+p+\Gamma_2(p)$. Thus we have  $\ed(x_1, [1:r])+\ed(x_2, [r+1:6n]) = 6n+n=7n$.

\item[Case (b):] $r=3p-1=3(p-1)+2$. In this case, by Lemma~\ref{lem:ED1} and Lemma~\ref{lem:ED2} we have $\ed(x_1, [1:r]) = 4n-p-1+\Gamma_1(p)$ and $\ed(x_2, [r+1:6n]) = 2n+(p-1)+\Gamma_2(p)$, thus we have $\ed(x_1, [1:r])+\ed(x_2, [r+1:6n]) = 6n-2+n=7n-2$.

\item[Case (c):] $r=3p-2=3(p-1)+1$. In this case, by Lemma~\ref{lem:ED1} and Lemma~\ref{lem:ED2} we have $\ed(x_1, [1:r]) = 4n-p+\Gamma_1(p-1)$ and $\ed(x_2, [r+1:6n]) = 2n+(p-1)-1+\Gamma_2(p-1)$, thus we have $\ed(x_1, [1:r])+\ed(x_2, [r+1:6n]) = 6n-2+n=7n-2$.

\end{description}

\end{proof}

We now prove Theorem~\ref{thm:EDmain}. 

\begin{proof}[Proof of Theorem~\ref{thm:EDmain}]
We begin by  upper bounding $\ed((x_1 \circ x_2), y)$ when $\dis(\alpha, \beta)=0$. 

\begin{claim} \label{clm:swith}
If $\dis(\alpha, \beta)=0$ then $\ed((x_1 \circ x_2), y) \leq 7n-3$.
\end{claim}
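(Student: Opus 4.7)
The plan is to exhibit an explicit cut point $r$ at which the subadditivity bound
\[
\ed(x_1\circ x_2,\, y) \;\le\; \ed(x_1,\, y[1:r]) + \ed(x_2,\, y[r+1:6n])
\]
(valid for any $r\in[6n]$, by concatenating optimal edit sequences for the two halves) beats $7n-3$. All the heavy lifting comes from Lemmas~\ref{lem:ED1} and~\ref{lem:ED2}; the only new ingredient is to notice that, under $\dis(\alpha,\beta)=0$, the key quantity $\Gamma_1(i)+\Gamma_2(i)$ dips by one below the generic value~$n$ exploited in Lemma~\ref{lem:EDmain}.

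First I would fix some $j^*\in[n]$ with $\alpha_{j^*}=\beta_{j^*}=1$, which exists because $\dis(\alpha,\beta)=0$. By the definitions of $\alpha'$ and $\beta'$ we then have $\alpha'_{2j^*-1}=\alpha_{j^*}=1$ and $\beta'_{2j^*}=\beta_{j^*}=1$, while every other pair $\{2k-1,2k\}$ in $\alpha'$ (resp.\ in $\beta'$) contributes exactly one zero. Counting zeros at the odd index $i=2j^*-1$ yields $\Gamma_1(2j^*-1)=j^*-1$ and $\Gamma_2(2j^*-1)=n-j^*$, hence $\Gamma_1(2j^*-1)+\Gamma_2(2j^*-1)=n-1$; this is the only place the disjointness hypothesis is used.

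Next I would choose $r=6j^*-4$. Writing $r=3(2j^*-1)-1$, Lemma~\ref{lem:ED1} with $p=2j^*-1$, $q=1$ gives $\ed(x_1, y[1:r]) = 4n-2j^*+\Gamma_1(2j^*-1)$. Writing the same $r$ as $3(2j^*-2)+2$, Lemma~\ref{lem:ED2} with $p=2j^*-2$, $q=2$ gives $\ed(x_2, y[r+1:6n]) = 2n+(2j^*-2)+\Gamma_2(2j^*-1)$. Summing and substituting $\Gamma_1(2j^*-1)+\Gamma_2(2j^*-1)=n-1$ gives $6n-2+(n-1)=7n-3$, completing the upper bound.

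The argument is essentially a computation, and the only subtlety is bookkeeping: one must translate the favorable odd index $i=2j^*-1$ into a split point $r$ that lands in the $q\neq 0$ case of \emph{both} lemmas simultaneously (the $q=0$ case, Case (a) of Lemma~\ref{lem:EDmain}, loses the improvement and only gives $7n$). This forces $r\in\{6j^*-4,\,6j^*-2\}$; either choice produces the same sum $7n-3$.
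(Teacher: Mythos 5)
Your proof is correct and takes essentially the same route as the paper: fix the index $j^*$ witnessing non-disjointness, note that $\Gamma_1(2j^*-1)+\Gamma_2(2j^*-1)=n-1$, and feed a suitable split point into Lemmas~\ref{lem:ED1} and~\ref{lem:ED2}. The only cosmetic difference is the cut point — you pick $r=6j^*-4$ (so $q=1$ for Lemma~\ref{lem:ED1} and $q=2$ for Lemma~\ref{lem:ED2}), while the paper picks $r=6j^*-2$ (so $q=2$ and $q=1$ respectively) — and, as you observe, both give the same sum $7n-3$.

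One tiny quibble with your closing aside: the $q=0$ case ($r=3p$) does not \emph{always} give exactly $7n$; taking $p=2j^*-1$ (odd) there yields $\Gamma_1(p)+\Gamma_2(p)=n-1$ and a total of $7n-1$. Still not $7n-3$, so your conclusion that $r$ must avoid multiples of $3$ stands, but the stated value is slightly off. This does not affect the correctness of the main argument.
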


To see this, note that if $\dis(\alpha, \beta)=0$ then there exists a $j \in [n]$ such that $\alpha_j=\beta_j=1$. Thus $\alpha'_{2j-1}=1$, $\beta'_{2j-1}=0$ and  $\alpha'_{2j}=0$, $\beta'_{2j}=1$. Note that the number of $0$'s in $\alpha'[1:2j-1]$ is $j-1$ and thus $\Gamma_1(2j-1)=j-1$. Similarly the number of $0$'s in $\beta'[2j:2n]$ is $n-j$ and thus $\Gamma_2(2j-1)=n-j$.  To transform $(x_1, x_2)$ to $y$, we choose $r=6j-2$,  transform $x_1$ to $y[1:r]$, and transform $x_2$ to $y[r+1:6n]$. 

By Lemma~\ref{lem:ED1} and Lemma~\ref{lem:ED2} we have $\ed(x_1, [1:r]) = 4n-2j+\Gamma_1(2j-1)$ and $\ed(x_2, [r+1:6n]) = 2n+(2j-1)-1+\Gamma_2(2j-1)$. .Thus we have $\ed(x_1, [1:r])+\ed(x_2, [r+1:6n]) = 6n-2+\Gamma_1(2j-1)+\Gamma_2(2j-1)=6n-2+n-1=7n-3$. Therefore $\ed((x_1, x_2), y) \leq 7n-3$.

Therefore, in the case of $\dis(\alpha, \beta)=1$, we have $\ed((x_1 \circ x_2), y) \geq 7n-2$ while in the case of $\dis(\alpha, \beta)=0$, we have $\ed((x_1 \circ x_2), y) \leq 7n-3$. Thus any protocol that solves $\ed_{cc}(y)$ can also solve $\dis$, hence the theorem follows.
\end{proof}

In the proof of Theorem~\ref{thm:EDmain}, the two strings $x=(x_1 \circ x_2)$ and $y$ have different lengths, however we can extend it to the case where the two strings have the same length and prove the following theorem.

\begin{theorem}\label{thm:EDmain2}
Suppose each input string to $\dis$ has length $n$ and let $\Sigma=[16n] \cup \{a\}$. Fix $\tilde{y}=(1, 2, \cdots, 16n, a^{2n})$, let $\tilde{x}_1 \in \Sigma^{4n}$ and $\tilde{x}_2 \in \Sigma^{14n}$. Define $\ed_{cc}(\tilde{y})$ as the two party communication problem of computing $\ed((\tilde{x}_1 \circ\tilde{x}_2), \tilde{y})$. Then $R^{1/3}(\ed_{cc}(\tilde{y})) \geq R^{1/3}(\dis)$.
\end{theorem}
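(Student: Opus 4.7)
\emph{Construction.} My plan is to set $\tilde{x}_1 = x_1$ and $\tilde{x}_2 = x_2 \circ (6n{+}1,\,6n{+}2,\ldots,16n)$, where $x_1, x_2$ are built from $\alpha, \beta$ exactly as in Theorem~\ref{thm:EDmain}. The lengths match ($4n$ and $14n$) and the appended block is fixed, so no communication is needed to produce the inputs. The target equality is
\[
\ed(\tilde{x}_1 \circ \tilde{x}_2,\,\tilde{y}) \;=\; \ed(x_1 \circ x_2,\, y) + 2n,
\]
which together with Lemma~\ref{lem:EDmain} and Claim~\ref{clm:swith} gives the $9n{-}2$ versus $9n{-}3$ gap separating $\dis = 1$ from $\dis = 0$.

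\emph{Upper bound.} I would extend any editing $x_1\circ x_2 \to y$ to an editing $\tilde{x}_1\circ\tilde{x}_2 \to \tilde{y}$ by leaving the two identical copies of $(6n{+}1,\ldots,16n)$ aligned and appending $2n$ insertions of $a$. So if $\dis(\alpha,\beta)=0$, $\ed(\tilde{x}_1\circ\tilde{x}_2,\tilde{y}) \le \ed(x_1\circ x_2, y) + 2n \le (7n - 3) + 2n = 9n - 3$.

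\emph{Lower bound.} I would split at the natural boundary between $x_1\circ x_2$ and the appended block: there is some $r \in \{0,1,\ldots,18n\}$ with
\[
\ed(\tilde{x}_1\circ\tilde{x}_2, \tilde{y}) \;=\; \ed(x_1\circ x_2,\,\tilde{y}[1{:}r]) + \ed\bigl((6n{+}1,\ldots,16n),\,\tilde{y}[r{+}1{:}18n]\bigr),
\]
and I would show the sum is $\ge 9n{-}2$ for every $r$ when $\dis = 1$. For $r \le 6n$, the block $(6n{+}1,\ldots,16n)$ is a contiguous substring of $\tilde{y}[r{+}1{:}18n]$, so the second term equals $8n - r$; the first term is $\ge \ed(x_1\circ x_2, y) - (6n - r) \ge n + r - 2$ by the triangle inequality. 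For $r = 6n + s$ with $0 \le s \le 10n$, the second term is $\ge 2n$ because $\tilde{y}[r{+}1{:}18n]$ contains $2n$ occurrences of $a$ while $(6n{+}1,\ldots,16n)$ contains none, and each such $a$ forces an insertion or substitution; the first term remains $\ge 7n - 2$ via a restriction argument: any alignment of $x_1\circ x_2$ with $\tilde{y}[1{:}6n{+}s]$ restricts to an alignment with $y$ by dropping the pairs that target positions above $6n$, which are all substitutions (the symbols $6n{+}1,\ldots,6n{+}s$ do not occur in $x_1\circ x_2$) and are at most $s$ in number, so Lemma~\ref{lem:EDmain} bounds the ``aligned plus matched'' score of the original alignment by $(7n + 2) + s$, giving edit distance $\ge 14n + s - (7n + 2 + s) = 7n - 2$. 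For $r > 16n$ both terms are already $\Omega(n)$ and trivially sum to more than $9n - 2$.

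\emph{Main obstacle and conclusion.} The delicate step is the middle case: one might worry that alignments matching the $a$'s scattered inside $x_1\circ x_2$ with the trailing $a^{2n}$ of $\tilde{y}$ could beat the canonical alignment, but every such alignment must cross the $10n$-symbol block $(6n{+}1,\ldots,16n)$ whose symbols are unique on both sides, and the restriction-to-$y$ argument shows each such deviation costs at least as much as respecting the boundary. Combining the bounds, Alice and Bob produce $\tilde{x}_1,\tilde{x}_2$ from $\alpha,\beta$ with no communication, and any $1/3$-error protocol for $\ed_{cc}(\tilde{y})$ distinguishes $\ed \le 9n - 3$ from $\ed \ge 9n - 2$, yielding a $1/3$-error protocol for $\dis$ with the same communication.
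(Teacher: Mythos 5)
Your construction and upper bound coincide with the paper's, but your lower bound takes a genuinely different route. The paper proves $\ed(\tilde{x}_1 \circ \tilde{x}_2, \tilde{y}) = \ed(x_1 \circ x_2, y) + 2n$ by reasoning directly about optimal alignments: it first observes the total is $< 10n$, then argues (Claim~\ref{clm:one}) that some symbol of $z$ must therefore survive, and then ``grows'' from that survivor (Claim~\ref{clm:two}) to force \emph{all} of $z$ to be preserved in place, after which the decomposition is immediate. You instead apply the standard DP split at the $(x_1 \circ x_2)\,|\,z$ boundary of the \emph{source} string, writing the distance as $\min_r \bigl[\ed(x_1 \circ x_2, \tilde{y}[1{:}r]) + \ed(z, \tilde{y}[r{+}1{:}18n])\bigr]$, and then lower-bound each $r$: for $r \le 6n$ via the triangle inequality plus the exact value $8n-r$ of the second term (since $z$ is a subsequence of $\tilde{y}[r{+}1{:}18n]$); for $6n \le r \le 16n$ via counting $a$'s (second term $\ge 2n$) plus a restriction of alignments from $\tilde{y}[1{:}6n{+}s]$ to $y$ (first term $\ge 7n-2$); and trivially for $r > 16n$. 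Both arguments are correct. Your restriction step implicitly uses the identity $\ed(u,v)=|u|+|v|-\max_A\bigl(2m(A)+b(A)\bigr)$ over non-crossing alignments with $m$ matches and $b$ mismatches, which is standard but should be stated if you write this up; granting that, the step is sound because every alignment pair targeting a position in $\{6n{+}1,\ldots,6n{+}s\}$ is necessarily a mismatch (those symbols never occur in $x_1 \circ x_2$), so dropping the at-most-$s$ such pairs yields an alignment with $y$ whose score drops by at most $s$. Your approach avoids the somewhat informal ``growing'' argument of Claim~\ref{clm:two} at the cost of a three-way case analysis; the paper's approach gives a cleaner structural statement (all of $z$ is preserved) but requires more care to make the exchange argument fully rigorous.
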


\begin{proof}
We extend the construction of Theorem~\ref{thm:EDmain} as follows. From input $(\alpha, \beta)$ to $\dis$, first construct $(x_1, x_2)$ as before. Then, let $z=(6n+1, 6n+2, \cdots, 16n)$, $\tilde{x}_1=x_1$ and $\tilde{x}_2=x_2 \circ z$. Note that we also have $\tilde{y}=y \circ z \circ a^{2n}$, and $|\tilde{x}_1 \circ \tilde{x}_2|=18n=|\tilde{y}|$.

We finish the proof by establishing the following two lemmas.

\begin{lemma}
If $\dis(\alpha, \beta)=1$ then $\ed((\tilde{x}_1 \circ \tilde{x}_2), \tilde{y}) \geq 9n-2$.
\end{lemma}

\begin{proof}
First we can see that $\ed((\tilde{x}_1 \circ \tilde{x}_2), \tilde{y}) < 10n$ since we can first use at most $8n-1$ edit operations to change $(x_1, x_2)$ into $y$ (note that the first symbols are the same), and then add $2n$ symbols of $a$ at the end.

Now we have the following claim:
\begin{claim}\label{clm:one}
In an optimal sequence of edit operations that transforms $(\tilde{x}_1 \circ \tilde{x}_2)$ to $\tilde{y}$, at the end some symbol in $z$ must be kept and thus matched to $\tilde{y}$ at the same position.
\end{claim}

To see this, assume for the sake of contradiction that none of the symbols in $z$ is kept, then this already incurs at least $10n$ edit operations, contradicting the fact that $\ed((\tilde{x}, \tilde{x}'), \tilde{y}) < 10n$. 

We now have a second claim:
\begin{claim}\label{clm:two}
In an optimal sequence of edit operations that transforms $(\tilde{x}_1 \circ \tilde{x}_2)$ to $\tilde{y}$, at the end all symbols in $z$ must be kept and thus matched to $\tilde{y}$ at the same positions.
\end{claim}

To see this, we use Claim~\ref{clm:one} and first argue that some symbol of $z$ is kept and matched to $\tilde{y}$. Assume this is the symbol $r$. Then we can grow this symbol both to the left and to the right and argue that all the other symbols of $z$ must be kept. For example, consider the symbol $r+1$ if $r < 16n$. There is a symbol $r+1$ that is matched to $\tilde{y}$ in the end. If this symbol is not the original $r+1$, then the original one must be removed either by deletion or substitution, since there cannot be two symbols of $r+1$ in the end. Thus instead, we can keep the original  $r+1$ symbol and reduce the number of edit operations.

More precisely, if this $r+1$ symbol is an insertion, then we can keep the original$r+1$ symbol and get rid of this insertion and the deletion of the original $r+1$, which saves $2$ operations. If this $r+1$ symbol is a substitution, then we can keep the original$r+1$ symbol, delete the symbol being substituted, and get rid of the deletion of the original $r+1$, which saves $1$ operation. The case of $r-1$ is completely symmetric. Continue doing this, we see that all symbols of $z$ must be kept and thus matched to $\tilde{y}$.

Now, we can see the optimal sequence of edit operations must transform $(x_1, x_2)$ into $y$, and transform the empty string into $a^{2n}$. Thus by Lemma~\ref{lem:EDmain} we have 
\[\ed((\tilde{x}_1 \circ \tilde{x}_2), \tilde{y}) = \ed((x_1, x_2), y) +2n \geq 9n-2.\]
\end{proof}

We now have the next lemma.

\begin{lemma}
If $\dis(\alpha, \beta)=0$ then $\ed((\tilde{x}_1 \circ \tilde{x}_2), \tilde{y}) \leq 9n-3$.
\end{lemma}

\begin{proof}
Again, to transform $(\tilde{x}_1 \circ \tilde{x}_2)$  to $\tilde{y}$, we can first transform $(x_1, x_2)$ into $y$, and insert $a^{2n}$ at the end. If $\dis(\alpha, \beta)=0$ then by Claim~\ref{clm:swith} $\ed((x_1 \circ x_2), y) \leq 7n-3$. Therefore we have

\[\ed((\tilde{x}_1 \circ \tilde{x}_2), \tilde{y}) \leq \ed((x_1, x_2), y) +2n \leq 9n-3.\]

Thus any protocol that solves $\ed_{cc}(\tilde{y})$ can also solve $\dis$, hence the theorem follows.
\end{proof}

\end{proof}

From Theorem~\ref{thm:EDmain2} we immediately have the following theorem.

\begin{theorem}\label{thm:EDmain3}
Any $R$-pass randomized algorithm in the asymmetric streaming model that computes $\ed(x, y)$ exactly between two strings $x, y$ of length $n$ with success probability at least $2/3$ must use space at least $\Omega(n/R)$.
\end{theorem}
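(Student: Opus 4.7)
The plan is to deduce the streaming lower bound from Theorem~\ref{thm:EDmain2} via the standard simulation that converts an $R$-pass streaming algorithm into an $O(R)$-round two-party communication protocol.

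Suppose $\mathcal{A}$ is an $R$-pass randomized asymmetric-streaming algorithm that exactly computes $\ed(x,y)$ with success probability at least $2/3$ using space $s$. I will build from $\mathcal{A}$ a randomized protocol for $\ed_{cc}(\tilde{y})$ (as defined in Theorem~\ref{thm:EDmain2}) that uses at most $2Rs$ bits of communication. Combined with $R^{1/3}(\ed_{cc}(\tilde{y})) \geq R^{1/3}(\dis) = \Omega(n)$ (the latter being the classical Kalyanasundaram--Schnitger/Razborov bound), this forces $2Rs = \Omega(n)$, hence $s = \Omega(n/R)$. Since $|\tilde{x}| = |\tilde{y}| = 18n$, rescaling gives $s = \Omega(N/R)$ on length-$N$ inputs, which is the statement of the theorem.

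For the simulation, given $(\alpha,\beta)$ Alice and Bob first construct $\tilde{x}_1$ and $\tilde{x}_2$ exactly as in Theorem~\ref{thm:EDmain2}; both hold $\tilde{y}$, which is a fixed string that is public to the protocol, so either party can answer any random-access query of $\mathcal{A}$ to $\tilde{y}$ locally. They then simulate $\mathcal{A}$ on the stream $\tilde{x}_1 \circ \tilde{x}_2$ pass by pass: at the start of pass $i$ Alice holds the current state of $\mathcal{A}$, feeds the symbols of $\tilde{x}_1$ through it, and sends the resulting $s$-bit state to Bob; Bob processes $\tilde{x}_2$ from that state, and unless this is the final pass sends the end-of-pass state back to Alice. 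After $R$ passes one party holds $\ed(\tilde{x},\tilde{y})$, from which $\dis(\alpha,\beta)$ is recovered using the gap established in Theorem~\ref{thm:EDmain2} ($\geq 9n-2$ vs.\ $\leq 9n-3$). Each pass transmits at most $2s$ bits, so the total communication is at most $2Rs$.

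I expect no substantive obstacle. The state of $\mathcal{A}$ at any time step is by assumption describable in $s$ bits, which is precisely what a party needs to resume the computation; random access to $\tilde{y}$ is free because $\tilde{y}$ is known to both; and private coins of $\mathcal{A}$ are handled either by going to the public-coin model (which only strengthens the protocol) or by Newman's theorem with a negligible additive overhead. All the real work has already been done in Theorems~\ref{thm:EDmain} and~\ref{thm:EDmain2}; the only step here is this routine round-by-round simulation, so the proof amounts to stating the reduction and invoking the two lower bounds above.
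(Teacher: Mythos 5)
Your proposal is correct and matches the paper's intent exactly: the paper states that Theorem~\ref{thm:EDmain3} follows ``immediately'' from Theorem~\ref{thm:EDmain2}, and the implicit argument is precisely the standard pass-by-pass simulation you spell out (Alice streams $\tilde{x}_1$, ships the $s$-bit state, Bob streams $\tilde{x}_2$, both parties answer random-access queries to the fixed public $\tilde{y}$ locally, giving a $2Rs$-bit protocol and hence $s=\Omega(n/R)$ from the $\Omega(n)$ bound on $\dis$). You have simply written out the routine reduction the paper leaves unstated.
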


We can generalize the theorem to the case of deciding if $\ed(x, y)$ is a given number $k$. First we prove the following lemmas.

\begin{lemma}\label{lem:pad1}
Let $\Sigma$ be an alphabet. For any $n, \ell \in \N$ let $x, y \in \Sigma^n$ and $z \in \Sigma^{\ell}$ be three strings. Then $\ed(x \circ z, y \circ z)=\ed(x, y)$.
\end{lemma}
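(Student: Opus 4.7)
The plan is to prove the two inequalities separately. The easy direction is $\ed(x \circ z, y \circ z) \leq \ed(x, y)$: given any optimal sequence of edit operations transforming $x$ into $y$, I would apply exactly the same operations to the prefix $x$ inside $x \circ z$ and leave the suffix $z$ untouched; the resulting string is $y \circ z$, achieved with $\ed(x,y)$ operations.

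For the harder direction $\ed(x \circ z, y \circ z) \geq \ed(x, y)$, I would proceed by induction on $\ell = |z|$. The base case $\ell = 0$ is immediate. For the inductive step, write $z = z' \circ c$ where $c$ is the last symbol of $z$, and apply the standard edit distance recurrence on the last characters. Since the final symbol of both $x \circ z$ and $y \circ z$ is $c$, the recurrence gives
\[\ed(x \circ z, y \circ z) = \min\bigl(\ed(x \circ z, y \circ z') + 1,\; \ed(x \circ z', y \circ z) + 1,\; \ed(x \circ z', y \circ z')\bigr).\]
Now I would invoke the triangle inequality for edit distance: since deleting the trailing $c$ costs one operation, $\ed(x \circ z, y \circ z') \geq \ed(x \circ z', y \circ z') - 1$, and symmetrically for the other term. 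Hence each of the three terms in the minimum is at least $\ed(x \circ z', y \circ z')$, so in fact $\ed(x \circ z, y \circ z) = \ed(x \circ z', y \circ z')$. By the inductive hypothesis applied to the shorter string $z'$ of length $\ell - 1$, this equals $\ed(x, y)$, closing the induction.

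I do not expect any real obstacle; the only subtlety is making sure the triangle-inequality bounds on the three branches of the recurrence are applied in the right direction so that the middle term $\ed(x \circ z', y \circ z')$ is identified as the minimum. An alternative route would be an exchange argument on an optimal alignment of $x \circ z$ with $y \circ z$, modifying it so that each symbol of the trailing $z$ on the left is matched to the corresponding symbol of the trailing $z$ on the right with no edit; this works but requires casework over how the alignment initially treats the boundary between $x$ and $z$, so the inductive recurrence-based proof above is cleaner.
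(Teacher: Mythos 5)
Your proof is correct, but it takes a genuinely different route from the paper's. The paper argues directly on an optimal sequence of edit operations transforming $x \circ z$ into $y \circ z$: it observes that such a sequence must split at some point, transforming $x$ into a prefix $y'$ of $y \circ z$ and $z$ into the remaining suffix $z'$, with $y'$ differing from $y$ by $r$ trailing symbols. Two applications of the triangle inequality then give $\ed(x, y') \geq \ed(x,y) - r$ and $\ed(z, z') \geq r$, whence the total cost is at least $\ed(x,y)$. You instead peel off the last character of $z$ by induction on $\ell$, apply the standard dynamic-programming recurrence at the common final symbol $c$, and use the single-character triangle inequality to show the diagonal term $\ed(x \circ z', y \circ z')$ dominates the minimum. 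Both are valid; your inductive argument is arguably more self-contained, since it avoids having to justify the split decomposition of an arbitrary optimal edit sequence (a fact the paper uses implicitly), relying only on the textbook recurrence. On the other hand, the paper's split argument is the one that carries over almost verbatim to the immediately following Lemma~\ref{lem:pad2}, where the two appended suffixes $u$ and $v$ are distinct and disjoint; there the last characters of the two sides no longer agree, so your recurrence-based induction would not close as cleanly. One minor redundancy: your opening paragraph establishing $\ed(x\circ z, y\circ z) \le \ed(x,y)$ is unnecessary, because the induction already yields the exact equality $\ed(x \circ z, y \circ z) = \ed(x \circ z', y \circ z') = \ed(x,y)$ rather than just the lower bound.
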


\begin{proof}
First it is clear that $\ed(x \circ z, y \circ z) \leq \ed(x, y)$, since we can just transform $x$ to $y$. Next we show that $\ed(x \circ z, y \circ z) \geq \ed(x, y)$.

To see this, suppose a series of edit operations transforms $x$ to $y'=y[1:n-r]$ or $y'=y[1:n] \circ z[1:r]$ for some $r \geq 0$ and transforms $z$ to the other part of $y \circ z$ (called $z'$). Then by triangle inequality we have $\ed(x, y') \geq \ed(x, y)-r$. Also note that $\ed(z, z') \geq \left| |z|-|z'| \right |=r$. Thus the number of edit operations is at least  $\ed(x, y')+\ed(z, z') \geq \ed(x, y)$.
\end{proof}

\begin{lemma}\label{lem:pad2}
Let $\Sigma$ be an alphabet. For any $n, \ell \in \N$ let $x, y \in \Sigma^n$ and $u, v \in \Sigma^{\ell}$ be four strings. If there is no common symbol between any of the three pairs of strings $(u, v)$, $(u, y)$ and $(v, x)$, then $\ed(x \circ u, y \circ v)=\ed(x, y)+\ell$.
\end{lemma}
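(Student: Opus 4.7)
The plan is to prove $\ed(x \circ u, y \circ v) = \ed(x, y) + \ell$ by establishing the two inequalities separately.

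The upper bound $\ed(x \circ u, y \circ v) \leq \ed(x, y) + \ell$ is immediate: first transform $x$ into $y$ using $\ed(x, y)$ operations, then substitute each $u_i$ by $v_i$. Since $u$ and $v$ share no common symbol we have $u_i \neq v_i$ for all $i$, so the $\ell$ substitutions cost exactly $\ell$, giving the claimed bound.

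For the lower bound, I would mimic the cut-based proof of Lemma~\ref{lem:pad1}. Any sequence of edit operations transforming $x \circ u$ into $y \circ v$ splits the source at the boundary between $x$ and $u$: the prefix $x$ maps to a length-$r$ prefix of $y \circ v$ and the suffix $u$ to the matching length-$(n+\ell-r)$ suffix, for some $r$. In the easy case $r \leq n$, the two targets are $y[1:r]$ and $y[r+1:n] \circ v$; since $u$ shares no symbol with $y$ or $v$, $\ed(u, y[r+1:n] \circ v) = \max(\ell, n-r+\ell) = n-r+\ell$, and the triangle inequality gives $\ed(x, y[1:r]) \geq \ed(x, y) - (n-r)$, so the two subproblems sum to at least $\ed(x, y) + \ell$.

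The case $r > n$ is where I expect the only real friction. The two targets are now $y \circ v[1:r-n]$ and $v[r-n+1:\ell]$; disjointness of $u$ and $v$ still yields $\ed(u, v[r-n+1:\ell]) = \ell$, but the triangle inequality on $\ed(x, y \circ v[1:r-n])$ only gives $\ed(x, y) - (r-n)$, which is too weak. I would close the gap with a short auxiliary claim: if $w$ shares no symbol with $x$, then $\ed(x, y \circ w) \geq \ed(x, y)$. The proof takes any optimal alignment of $x$ with $y \circ w$ and rewrites it into an alignment of $x$ with $y$ of no greater cost---each insertion column introducing a $w$-symbol can be dropped (saving one operation), and each column substituting an $x$-symbol into a $w$-symbol can be turned into a deletion of that same $x$-symbol at identical cost. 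Applying this with $w = v[1:r-n]$ gives $\ed(x, y \circ v[1:r-n]) \geq \ed(x, y)$, so the two subproblems again sum to at least $\ed(x, y) + \ell$, completing the proof.
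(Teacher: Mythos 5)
Your proposal is correct, and it follows the same cut-based outline as the paper's proof of Lemma~\ref{lem:pad1} and Lemma~\ref{lem:pad2}: split at the $x/u$ boundary, apply the triangle inequality to the $x$-side, and use alphabet-disjointness to bound the $u$-side. The case $r \leq n$ is handled identically (the paper writes the bound as $\ed(u, v') \geq r+\ell$ where its $r$ is your $n-r$; same content). The divergence is exactly where you predicted, the case $r > n$. The paper dispatches it by observing that the lemma's hypothesis is symmetric under swapping $(x,u) \leftrightarrow (y,v)$, so the split at the $y/v$ boundary lands inside $x$ and the $r \leq n$ argument applies with roles reversed. You instead prove an auxiliary claim, that $\ed(x, y \circ w) \geq \ed(x,y)$ whenever $w$ is alphabet-disjoint from $x$, by locally rewriting an optimal alignment (drop insertions of $w$-symbols; convert $x$-into-$w$ substitutions to deletions; matches to $w$ are impossible by disjointness). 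Both routes close the gap; the paper's symmetry observation is shorter once noticed, while your auxiliary claim is more explicit, self-contained, and reusable. One minor point worth spelling out in a final write-up is that your auxiliary claim's alignment rewrite implicitly uses that no $x$-symbol can \emph{match} a $w$-symbol, which is exactly the disjointness hypothesis; stating that explicitly would make the case analysis on columns visibly exhaustive.
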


\begin{proof}
First it is clear that $\ed(x \circ u, y \circ v) \leq \ed(x, y)+\ell$, since we can just transform $x$ to $y$ and then replace $u$ by $v$. Next we show that $\ed(x \circ u, y \circ v) \geq \ed(x, y)+\ell$.

To see this, suppose a series of edit operations transforms $x$ to $y'=y[1:n-r]$ for some $r \geq 0$ and transforms $u$ to the other part of $v'=y[n-r+1:n] \circ v$. Then by triangle inequality we have $\ed(x, y') \geq \ed(x, y)-r$. Since there is no common symbol between $(u, v)$ and $(u, y)$ , we have $\ed(u, v') \geq r+\ell$. Thus the number of edit operations is at least  $\ed(x, y')+\ed(u, v') \geq \ed(x, y)+\ell$. The case of transforming $x$ to $y'=y[1:n] \circ v[1:r]$ for some $r \geq 0$ is completely symmetric since equivalently it is transforming $y$ to $x'=[1:n-r']$ for some $r' \geq 0$.
\end{proof}

We have the following two theorems.

\begin{theorem}\label{thm:EDmain4}
There is a constant $c>1$ such that for any $k, n \in \N$ with $n \geq ck$, and alphabet $\Sigma$ with $|\Sigma| \geq c k$, any $R$-pass randomized algorithm in the asymmetric streaming model that decides if $\ed(x, y) \geq k$ between two strings $x, y \in \Sigma^n$ with success probability at least $2/3$ must use space at least $\Omega(k/R)$.
\end{theorem}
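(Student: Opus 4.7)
The plan is to reduce $\dis$ on inputs of length $n' = \Theta(k)$ to the problem of deciding $\ed(x,y) \geq k$ in the asymmetric streaming model, by embedding the hard instance of Theorem~\ref{thm:EDmain2} inside strings of length $n$ over $\Sigma$ and then invoking the standard simulation of an $R$-pass streaming algorithm by a two-party communication protocol.

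For the embedding I would take $n' = \lfloor k/18 \rfloor$, so that $9n'-2 \leq k$, and run the construction of Theorem~\ref{thm:EDmain2} on the $\dis$-instance $(\alpha,\beta)$ of length $n'$. This yields strings $\tilde{x} = \tilde{x}_1 \circ \tilde{x}_2$ and $\tilde{y}$ of length $18n'$ over $\Sigma_0 = [16n'] \cup \{a\}$ with $\ed(\tilde{x},\tilde{y}) \geq 9n'-2$ when $\dis(\alpha,\beta)=1$ and $\ed(\tilde{x},\tilde{y}) \leq 9n'-3$ when $\dis(\alpha,\beta)=0$. To move the threshold up to exactly $k$, I apply Lemma~\ref{lem:pad2} with $u = b^\ell$ appended to $\tilde{x}$ and $v = c^\ell$ appended to $\tilde{y}$, where $\ell = k - (9n'-2)$ and $b \neq c$ are two fresh symbols of $\Sigma \setminus \Sigma_0$; the lemma adds exactly $\ell$ to the edit distance, so $\dis(\alpha,\beta)=1$ iff the resulting edit distance is $\geq k$. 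Finally, I apply Lemma~\ref{lem:pad1} to append a common suffix $w$ over $\Sigma$ to both sides and bring the length up to $n$ without changing the edit distance. Fixing the universal constant of the theorem to be, say, $c = 3$ simultaneously guarantees that $n \geq 18n' + \ell = 9n' + k + 2$ (so the suffix $w$ fits), that $|\Sigma| \geq 16n' + 3$ (so the two fresh symbols $b,c$ exist), and that $n' \geq 1$ (so $\dis$ is nontrivial) whenever $k$ exceeds an absolute constant; for smaller $k$ the conclusion $S = \Omega(k/R)$ is vacuous.

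Let $x$ and $y$ denote the two final padded strings and consider the communication problem in which Alice holds $\tilde{x}_1$, Bob holds $\tilde{x}_2 \circ u \circ w$, and both know $y$. The padding arguments above show that any protocol deciding $\ed(x,y) \geq k$ with success probability $2/3$ solves $\dis$ on length $n'$, so its randomized communication complexity is at least $R^{1/3}(\dis) = \Omega(n') = \Omega(k)$. An $R$-pass asymmetric streaming algorithm with space $S$ can be simulated by such a protocol in the standard way: both players hold $y$, Alice runs the streaming algorithm on $\tilde{x}_1$, sends the $O(S)$-bit memory state to Bob who continues on $\tilde{x}_2 \circ u \circ w$, and they round-trip once per pass, for a total of $O(RS)$ bits of communication. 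Thus $RS = \Omega(k)$, which gives $S = \Omega(k/R)$.

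The proof is essentially a pair of padding arguments on top of Theorem~\ref{thm:EDmain2}, and I do not anticipate any serious obstacle. The only mildly delicate step is the bookkeeping of the universal constant $c$ so that the three constraints above (length headroom for Lemma~\ref{lem:pad1}, alphabet headroom for the fresh symbols in Lemma~\ref{lem:pad2}, and $n' = \Theta(k)$ large enough for the $\dis$ lower bound to be non-trivial) all hold simultaneously; once the algebra is written out, this is routine.
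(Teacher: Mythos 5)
Your proof is correct and takes essentially the same route as the paper: instantiate the hard communication instance of Theorem~\ref{thm:EDmain2} at a scaled-down size $n' = \Theta(k)$, pad the length up to $n$ with a common suffix via Lemma~\ref{lem:pad1}, and convert the $\Omega(k)$ communication bound to an $\Omega(k/R)$ space bound by the standard pass-by-pass simulation. The one place you are more explicit than the paper's own proof of this theorem is the threshold adjustment: you invoke Lemma~\ref{lem:pad2} with two fresh symbols $b,c$ to raise the threshold from $9n'-2$ to exactly $k$, whereas the paper's proof of Theorem~\ref{thm:EDmain4} glosses over this (it only performs that step explicitly in Theorem~\ref{thm:EDmain5}); your version is the more careful one, and the alphabet hypothesis $|\Sigma|\ge ck$ is exactly what makes the fresh symbols available, so the bookkeeping you sketch goes through.
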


\begin{proof}
Theorem~\ref{thm:EDmain2} and Theorem~\ref{thm:EDmain3} can be viewed as deciding if $\ed(x, y) \geq 9n-2$ for two strings of length $18n$ over an alphabet with size $16n+1$. Thus we can first use the constructions there to reduce $\dis$ to the problem of deciding if $\ed(x, y) \geq k$ with a fixed string $y$ of length $O(k)$. The number of symbols used is $O(k)$ as well. Now to increase the length of the strings to $n$, we pad a sequence of the symbol $1$ at the end of both $x$ and $y$ until the length reaches $n$.\ By Lemma~\ref{lem:pad1} the edit distance stays the same and thus the problem is still deciding if $\ed(x, y) \geq k$. By Theorem~\ref{thm:EDmain2} the communication complexity is $\Omega(k)$ and thus the theorem follows.
\end{proof}

\begin{theorem}\label{thm:EDmain5}
There is a constant $c>1$ such that for any $k, n \in \N$ and alphabet $\Sigma$ with $n \geq ck \geq  |\Sigma|$, any $R$-pass randomized algorithm in the asymmetric streaming model that decides if $\ed(x, y) \geq k$ between two strings $x, y \in \Sigma^n$ with success probability at least $2/3$ must use space at least $\Omega(|\Sigma|/R)$.
\end{theorem}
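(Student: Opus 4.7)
The plan is to reuse the reduction behind Theorem~\ref{thm:EDmain2}, but to calibrate the $\dis$ input size to $|\Sigma|$ rather than $k$, and then to use Lemmas~\ref{lem:pad2} and~\ref{lem:pad1} as two successive padding steps to land in an instance of length $n$ with threshold $k$. Concretely, given $(k,n,\Sigma)$ with $n \geq ck \geq |\Sigma|$ for a sufficiently large constant $c$, I will set
\[ n' \;:=\; \min\!\bigl(\lfloor k/9\rfloor,\; \lfloor (|\Sigma|-3)/16\rfloor\bigr), \]
so that $16n'+3 \leq |\Sigma|$ and $9n'-2 \leq k$ both hold. Since $|\Sigma|\leq ck$ makes $|\Sigma|$ and $k$ comparable up to constants, this choice also guarantees $n' = \Theta(|\Sigma|)$, which is what will ultimately drive the lower bound.

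Next I reserve two distinguished symbols $b_1,b_2\in\Sigma$ and run the Theorem~\ref{thm:EDmain2} construction on the remaining $16n'+1$ symbols. This turns a $\dis$ pair $(\alpha,\beta)\in\{0,1\}^{n'}\times\{0,1\}^{n'}$ into strings $\tilde{x}=\tilde{x}_1\circ\tilde{x}_2$ and $\tilde{y}$ of length $18n'$ with $\ed(\tilde{x},\tilde{y})\geq 9n'-2$ in the YES case and $\ed(\tilde{x},\tilde{y})\leq 9n'-3$ in the NO case. Set $m:=k-(9n'-2)\geq 0$ and form $\hat{x}:=\tilde{x}\circ b_1^m$ and $\hat{y}:=\tilde{y}\circ b_2^m$. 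Because $b_1$ and $b_2$ do not occur in $\tilde{x}$ or $\tilde{y}$ and differ from each other, Lemma~\ref{lem:pad2} shifts the edit distance by exactly $m$, producing the separation $\ed(\hat{x},\hat{y})\geq k$ versus $\ed(\hat{x},\hat{y})\leq k-1$. Finally I pad both $\hat{x}$ and $\hat{y}$ with a common suffix $1^{n-18n'-m}$; Lemma~\ref{lem:pad1} preserves the edit distance, and the current length $18n'+m = 9n'+k+2 \leq 2k+2$ fits inside $n$ whenever $c\geq 3$. The result is $(x,y)\in\Sigma^n\times\Sigma^n$ with $\ed(x,y)\geq k$ iff $\dis(\alpha,\beta)=1$, so any $R$-pass asymmetric streaming algorithm using space $s$ yields an $O(sR)$-bit randomized protocol for $\dis$ on $n'$-bit inputs, and the standard $\Omega(n')$ lower bound for $\dis$ gives $s=\Omega(n'/R)=\Omega(|\Sigma|/R)$.

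The main obstacle I expect is the joint calibration of $n'$: it must simultaneously be large enough to drive the lower bound ($n'=\Theta(|\Sigma|)$), small enough to fit the reduction's alphabet strictly inside $\Sigma$ ($16n'+3\leq|\Sigma|$, leaving room for $b_1,b_2$), small enough for the Lemma~\ref{lem:pad2} shift to be non-negative ($9n'-2\leq k$), and small enough for the Lemma~\ref{lem:pad1} suffix to be non-negative ($18n'+m\leq n$). The choice $n'=\min(\lfloor k/9\rfloor,\lfloor(|\Sigma|-3)/16\rfloor)$ addresses the first three constraints directly, while the fourth collapses to $n\geq 2k+2$ and is delivered by taking the hypothesis constant $c$ large enough. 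Once $n'$ is fixed, every remaining step is a direct invocation of a lemma that has already been established.
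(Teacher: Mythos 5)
Your proposal is correct and follows essentially the same route as the paper: run the Theorem~\ref{thm:EDmain2} construction on a $\dis$ instance of size $n'=\Theta(|\Sigma|)$ using all but two symbols of $\Sigma$, shift the threshold to $k$ by appending $b_1^m$ and $b_2^m$ via Lemma~\ref{lem:pad2}, pad a common suffix via Lemma~\ref{lem:pad1} to reach length $n$, and invoke the $\dis$ lower bound. Your explicit choice $n'=\min(\lfloor k/9\rfloor,\lfloor(|\Sigma|-3)/16\rfloor)$ together with the check that $n'=\Theta(|\Sigma|)$ under $|\Sigma|\leq ck$ is a slightly more careful calibration than the paper's brief sketch, but it is the same argument.
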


\begin{proof}
Theorem~\ref{thm:EDmain2} and Theorem~\ref{thm:EDmain3} can be viewed as deciding if $\ed(x, y) \geq 9n-2$ for two strings of length $n$ over an alphabet with size $18n+1$. Thus we can first use the constructions there to reduce $\dis$ to the problem of deciding if $\ed(x, y) \geq k'=\Omega(|\Sigma|)$ with a fixed string $y$ of length $\Theta(|\Sigma|)$, and the number of symbols used is $|\Sigma|-2$. Now we take the 2 unused symbols and pad a sequence of these two symbols with length $k-k'$ at the end of $x$ and $y$. By Lemma~\ref{lem:pad2} the edit distance increases by $k-k'$ and thus the problem becomes deciding if $\ed(x, y) \geq k$. Next, to increase the length of the strings to $n$, we pad a sequence of the symbol $1$ at the end of both strings until the length reaches $n$. By Lemma~\ref{lem:pad1} the edit distance stays the same and thus the problem is still deciding if $\ed(x, y) \geq k$. By Theorem~\ref{thm:EDmain2} the communication complexity is $\Omega(|\Sigma|)$ and thus the theorem follows.
\end{proof}

Combining the previous two theorems we have the following theorem, which is a restatement of Theorem~\ref{thm:ED1}.

\begin{theorem}[Restatement of Theorem~\ref{thm:ED1}]\label{thm:EDmain6}
There is a constant $c>1$ such that for any $k, n \in \N$ with $n \geq ck$, given an alphabet $\Sigma$, any $R$-pass randomized algorithm in the asymmetric streaming model that decides if $\ed(x, y) \geq k$ between two strings $x, y \in \Sigma^n$ with success probability at least $2/3$ must use space at least $\Omega(\mathsf{min}(k, |\Sigma|)/R)$.
\end{theorem}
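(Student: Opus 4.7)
The plan is to deduce Theorem~\ref{thm:EDmain6} from Theorem~\ref{thm:EDmain4} and Theorem~\ref{thm:EDmain5} by a direct case analysis on the size of $|\Sigma|$ relative to $k$. I would take the constant $c$ in Theorem~\ref{thm:EDmain6} to be large enough that both of the preceding theorems hold with this same value of the constant (at worst, the maximum of the two individual constants that those theorems produce). The shared hypothesis $n \geq ck$ then simultaneously fulfills the $n \geq ck$ requirement appearing in both parent theorems, so from this point on only the constraints involving $|\Sigma|$ need to be navigated.

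First I would handle the regime $|\Sigma| \geq ck$. Here $\min(k, |\Sigma|) = k$ because $|\Sigma| \geq ck \geq k$, so the target lower bound $\Omega(\min(k, |\Sigma|)/R)$ coincides with $\Omega(k/R)$. Both hypotheses of Theorem~\ref{thm:EDmain4} (namely $n \geq ck$ and $|\Sigma| \geq ck$) are satisfied, and that theorem delivers exactly this $\Omega(k/R)$ bound. Next, in the complementary regime $|\Sigma| < ck$, the hypothesis $n \geq ck \geq |\Sigma|$ required by Theorem~\ref{thm:EDmain5} is met, so that theorem gives a lower bound of $\Omega(|\Sigma|/R)$; since $\min(k, |\Sigma|) \leq |\Sigma|$, this is at least $\Omega(\min(k, |\Sigma|)/R)$, which is again what is needed.

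There is no substantive obstacle in this argument: the dichotomy $|\Sigma| \geq ck$ versus $|\Sigma| < ck$ exhausts all possibilities consistent with the theorem's hypothesis, and in each regime the corresponding parent theorem yields exactly the required bound. The only real care is bookkeeping on constants, which is handled uniformly by taking $c$ to be the maximum of the constants coming from Theorem~\ref{thm:EDmain4} and Theorem~\ref{thm:EDmain5}. Combining the two cases completes the proof.
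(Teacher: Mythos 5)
Your proposal is correct and matches what the paper intends: the paper states only that Theorem~\ref{thm:EDmain6} follows by ``combining the previous two theorems,'' and your case split on $|\Sigma| \geq ck$ versus $|\Sigma| < ck$ (invoking Theorem~\ref{thm:EDmain4} and Theorem~\ref{thm:EDmain5} respectively, with $c$ taken to be the maximum of the two underlying constants) is precisely the intended combination.
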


For $0< \eps < 1$, by taking $k=1/\eps$ we also get the following corollary:

\begin{corollary}\label{cor:EDmain6}
Given an alphabet $\Sigma$, for any $0< \eps < 1$, any $R$-pass randomized algorithm in the asymmetric streaming model that achieves a $1+\eps$ approximation of $\ed(x, y)$ between two strings $x, y \in \Sigma^n$ with success probability at least $2/3$ must use space at least $\Omega(\mathsf{min}(1/\eps, |\Sigma|)/R)$.
\end{corollary}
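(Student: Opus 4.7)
The plan is to reduce to Theorem~\ref{thm:EDmain6} by picking the threshold $k=\Theta(1/\eps)$ small enough that a multiplicative $(1+\eps)$-approximation can distinguish its YES and NO instances, which differ by just a single edit. Concretely, I would set $k=\lfloor 1/(2\eps)\rfloor+1$, giving $k=\Theta(1/\eps)$ and $\eps(k-1)<1/2$, and restrict attention to $n\ge ck$ for the constant $c$ of Theorem~\ref{thm:EDmain6} (for smaller $n$ the claimed lower bound is vacuous).

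Given an $R$-pass $(1+\eps)$-approximation algorithm $A$ whose output $\tilde{D}$ satisfies $\ed(x,y)\le\tilde{D}\le(1+\eps)\,\ed(x,y)$ with probability at least $2/3$, I would run $A$ on the hard instances of Theorem~\ref{thm:EDmain6} and declare $\ed(x,y)\ge k$ iff $\tilde{D}\ge k$. On a YES instance ($\ed(x,y)\ge k$) this gives $\tilde{D}\ge k$, while on a NO instance ($\ed(x,y)\le k-1$) this gives $\tilde{D}\le(1+\eps)(k-1)<(k-1)+1/2<k$. The decision is thus always correct, and the reduction preserves the space, pass count, and success probability of $A$, so Theorem~\ref{thm:EDmain6} immediately yields the desired $\Omega(\min(k,|\Sigma|)/R)=\Omega(\min(1/\eps,|\Sigma|)/R)$ lower bound.

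The single point worth sanity-checking is that the hard instances underlying Theorem~\ref{thm:EDmain6} genuinely exhibit a unit gap at the decision threshold $k$. The base construction (Lemma~\ref{lem:EDmain} together with Claim~\ref{clm:swith}) produces a YES/NO gap between $\ed\ge 7n-2$ and $\ed\le 7n-3$; Theorem~\ref{thm:EDmain2} lifts this to $9n-2$ versus $9n-3$; and the paddings used in Theorems~\ref{thm:EDmain4} and~\ref{thm:EDmain5} (via Lemmas~\ref{lem:pad1} and~\ref{lem:pad2}) add identical constants to both YES and NO distances, preserving the unit gap at whatever target threshold is chosen. I do not foresee any genuine obstacle: once the unit-gap promise at threshold $k$ is in hand, the corollary follows from a one-line invocation of Theorem~\ref{thm:EDmain6} with $k=\Theta(1/\eps)$.
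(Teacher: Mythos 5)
Your proof is correct and takes the same route as the paper, which simply invokes Theorem~\ref{thm:EDmain6} with a threshold $k=\Theta(1/\eps)$; the paper states this in one line ("by taking $k=1/\eps$") without spelling out the thresholding reduction, which you do carefully, including the key observation that the hard instances have a unit YES/NO gap at $k$ so that $(1+\eps)(k-1)<k$ suffices to separate them.
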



\section{Lower Bounds for LCS}
\label{sec:lcs}

In this section, we study the space lower bounds for asymmetric streaming LCS. 

\subsection{Exact computation}

\subsubsection{Binary alphabet, deterministic algorithm}

\label{sec:lcs_exact_det}

In this section, we assume $n$ can be diveded by $60$ and let $l = \frac{n}{30} -1$. We assume the alphabet is $\Sigma = \{a,b\}$. Consider strings $x$ of the form 

\begin{equation}
	\label{lcs:eq_a}
	x = b^{10}a^{s_1}b^{10}a^{s_2}b^{10} \cdots b^{10}a^{s_l}b^{10} .
\end{equation}

That is, $x$ contains $l$ blocks of consecutive $a$ symbols. Between each block of $a$ symbols, we insert $10$ $b$'s and we also add $10$ $b$'s to the front, and the end of $x$. $s_1,\dots, s_l$ are $l$ integers such that

\begin{align}
	&\sum_{i = 1}^{l} s_i = \frac{n}{6}+5  \label{lcs:eq_b}, \\
	 &1\leq s_i\leq 9, \; \forall i\in [l] \label{lcs:eq_c}. 
\end{align}

Thus, the length of $x$ is $\sum_{i = 1}^{l} \frac{n}{6}+5+ 10 (l+1)  = \frac{n}{2}+ 5$ and it contains exactly $\frac{n}{3}$ $b$'s. 

Let $S$ be the set of all $x\in \{a,b\}^{\frac{n}{2}+5}$ of form \ref{lcs:eq_a} that satisfying equations \ref{lcs:eq_b}, \ref{lcs:eq_c}. For each string $x\in S$, we can define a string $f(x) \in \{a,b\}^{\frac{n}{2}-5}$ as following. Assume $x = b^{10}a^{s_1}b^{10}a^{s_2}b^{10} \cdots b^{10} a^{s_l}b^{10} $, we set $f(x) = a^{s_1}b^{10}a^{s_2}b^{10} \cdots b^{10} a^{s_l}b^{10} $. That is, $f(x)$ simply removed the first $10$ $b$'s of $x$. We denote $\bar{S} = \{f(x)| x\in S\}$ .
 
\begin{claim}
	\label{claim:|S|}
	$|S|  = |\bar{S}| = 2^{\Omega(n)}$.
\end{claim}

\begin{proof}
	Notice that for $x^1, x^2 \in S$, if $x^1 \neq x^2$, then $f(x^1)\neq f(x^2)$. We have $|S| =| \bar{S}|$.
	
	The size of $S$ equals to the number of choices of $l$ integers $s_1, s_2, \dots, s_l$ that satisfies \ref{lcs:eq_b} and \ref{lcs:eq_c}. For an lower bound of $|S|$, we can pick $\frac{n}{60}$ of the integers to be $9$, and set the remaining to be $1$ or $2$. Thus the number of such choices is at least $\binom{l}{\frac{n}{60}} = \binom{\frac{n}{30}-1}{\frac{n}{60}} = 2^{\Omega(n)}.$ 
	
	
\end{proof}

We first show the following lemma.

\begin{lemma}
	\label{lem:S}
	Let $y = a^{n/3}b^{n/3}a^{n/3}$. For every $x\in S$, $$\lcs(x\circ f(x), y) = \frac{n}{2}+5.$$  For any two distinct $x^1, x^2\in S$, \[\max\{\lcs(x^1\circ f(x^2), y),\lcs(x^2\circ f(x^1), y)\} > \frac{n}{2}+5.\]
\end{lemma}

\begin{proof}[Proof of Lemma~\ref{lem:S}]
	We first show $\lcs(x\circ f(x), y) = \frac{n}{2}+5$. Notice that $x\circ f(x)$ is of the form 
	$$b^{10}a^{s_1}b^{10} \cdots b^{10} a^{s_l }b^{10} a^{s_1} b^{10} \cdots b^{10}a^{s_l }b^{10} .$$
	
	It cantains $2l+1$ block of $b$'s, each consists $ 10$ consecutive $b$'s. These blocks of $b$'s are seperated by some $a$'s. Also, $x\circ f(x)$ has $2\sum_{i = 1}^{l}s_i = \frac{n}{3}+10$ $a$'s and $ \frac{2n}{3}-10$ $b$'s. Let $p_i$ be the first position of the $i$-th block of $b$'s. 
	
	Let us consider a matching between $x\circ f(x)$ and $y$. 
	
	If the matching does not match any $b$'s in $y$ to $x\circ f(x)$, the size of such a matching is at most $\frac{n}{3}+10$ since it is the number of $a$'s in $x\circ f(x)$. 
	
	Now we assume some 1's in $y$ are matched to $x\circ f(x)$. Without loss of generality, we can assume the first $b$ symbol in $y$ is matched. This is because all $b$'s in $y$ are consecutive and if the first $b$ in $y$ (i.e. $y_{\frac{n}{3}+1}$) is not matched, we can find another matching of the same size that matches $y_{\frac{n}{3}+1}$. For the same reason, we can assume the first $b$ in $y$ is matched to position $p_i$ for some $i\in [2l+1]$. Assume $n_b$ is the number of $b$'s matched. Again, without loss of generality, we can assume the first $n_b$ $b$'s starting from position $p_i$ in $x\circ f(x)$ are matched since all $b$'s in $y$ are consecutive and there are no matched $a$'s between two matched $b$'s. We have two cases.
	
	\begin{description}

		\item{Case 1}: $y_{\frac{n}{3}+1}$ is matched to $p_i$ for some $1\leq i\leq l+1$. Let $n_b$ be the number of matched $b$'s. We know $n_b\leq \frac{n}{3}$ since there are $\frac{n}{3}$ $b$'s in $y$. 
	
		If $n_b = \frac{n}{3}$, we match first $\frac{n}{3}$ $b$'s in $x\circ f(x)$ starting from position $p_i$. Consider the number of $a$'s that are still free to match in $x\circ f(x)$. The number of $a$'s before $p_i$ is $\sum^{i-1}_{j = 1}s_i$. Since $i\leq l+1$, $\sum^{i-1}_{j = 1}s_i$ is at most $\frac{n}{6} + 5$, we can match all of them to first third of $y$. Also, we need $l+1$ blocks of $b$'s to match all $b$'s in $y$. The number of $a$'s after last matched $b$ in $x\circ f(x)$ is $\sum_{j = i}^{l} s_j$ (which is zero when $i = l+1$). Again, we can match all these $a$'s since $\sum_{j = i}^{l} s_j\leq \frac{n}{3}$. In total, we can match $\sum_{j = 1}^{l} s_j = \frac{n}{6}+ 5$ $a$'s.  This gives us a matching of size $\frac{n}{3} + \frac{n}{6}+ 5 = \frac{n}{2}+ 5$. We argue that the best strategy is to always match $\frac{n}{3}$ $b$'s. To see this, if we removed $10$ matched $b$'s, this will let us match $s_j$ additional  $0$'s for some $j\in [l]$. By our construction of $x$, $s_j$ is strictly smaller than $10$ for all $j\in [l]$. If we keep doing this, the size of matching will decrease. Thus, the largest matching we can find in this case is $\frac{n}{2}+ 5$.

		\item{Case 2}: $y_{\frac{n}{3}+1}$ is matched to $p_i$ for some $l+1 < i \leq 2l+1$. By the same argument in Case 1, the best strategy is to match as many $b$'s as possible. The number of $b$'s in $x\circ f(x)$ starting from position $p_i$ is $ (2l+1-i+1)t = (2l-i+2 )t $. The number of $a$'s that are free to match is $\sum_{j = 1}^{l} s_j + \sum_{j' = 1}^{i-l+1}s_{j'} = \frac{n}{6} + 5 +  \sum_{j = 1}^{i-l-1}s_j$. Since  $s_j < t $ for all $j\in [l]$, the number of $a$'s  can be matched is strictly smaller than $\frac{n}{6} + 5 +  (i-l+1)t$. The largest matching we can find in this case is smaller than $\frac{n}{6} + 5 +  (l+1)t = \frac{n}{2}+ 5$.

	\end{description}
	 
	This proves the size of the largest matching we can find is exactly $\frac{n}{2}+ 5$. We have $\lcs(x\circ f(x), y) = \frac{n}{2}+5$.

	For the second statement in the lemma, say $x^1 $ and $x^2$ are two distinct strings in $S$. For convenience, we assume $x^1 = b^{10}a^{s^1_1}b^{10}a^{s^1_2}b^{10} \cdots b^{10} a^{s^1_l}b^{10}$, and $x^2 = b^{10}a^{s^2_1}b^{10}a^{s^2_2}b^{10} \cdots b^{10} a^{s^2_l} b^{10}$. Let $i$ be the smallest integer such that $s^1_i\neq s^i_2$. We have $i\leq l-1$ since $\sum_{j = 1}^{l} s^1_j =  \sum_{j = 1}^{l} s^2_j $. Without loss of generality, we assume $s^1_i > s^2_i$. We show that $\lcs(x^1\circ f(x^2), y)> \frac{n}{2}+5$. Notice that $ x^1\circ f(x^2)$ is of the form 
	
	$$b^{10}a^{s^1_1}b^{10} \cdots b^{10} a^{s^1_l }b^{10} a^{s^2_1} b^{10} \cdots b^{10} a^{s^2_l } b^{10} .$$
	
	By the same notation, let $p_{j}$  be the first position of the $j$-th block of $b$'s in $ x^1\circ f(x^2)$
	
	Consider the match that matches the first $b$ in $y$ to position $p_{i+1}$. The number of $a$'s before $p_{i+1}$ is $\sum_{j = 1}^{i}s^1_j$. We matches all $\frac{n}{3}$ $b$'s in $y$. The number of $a$'s after the last matched $b$ in $x^1\circ f(x^2)$ is $\sum_{j = i+1}^{l} s^2_j$. This gives us a match of size $\frac{n}{3} + \sum_{j = 1}^{i}s^1_j+ \sum_{j = i+1}^{l} s^2_j $. By our choice of $i$, we have  $s^1_j = s^2_j$ for $j\in [i-1]$. The size of the matching equals to $ \frac{n}{3} + \sum_{j = 1}^{l}s^2_j + s^1_i - s^2_i = \frac{n}{2} + 5 + s^1_i - s^2_i$ which is larger than $ \frac{n}{2}+ 5$. Thus, the length of LCS between $x^1\circ f(x^2)$ and $y$ is larger than $ \frac{n}{2}+ 5$. This finishes our proof.

\end{proof}

\begin{lemma}
	\label{lem:lcs_exact_det}
	In the asymmetric streaming model, any deterministic protocol that computes $\lcs(x,y)$ for any $x, y \in \{0,1\}^n$, in $R$ passes of $x$ needs $\Omega(n/R)$ space.
\end{lemma}

\begin{proof}
	Consider a two party game where player 1 holds a string $x^1\in S$ and player 2 holds a string $x^2 \in S$. The goal is to verify whether $x^1 = x^2$.  It is known that the total communication complexity of testing the equality of two elements from set $S$ is $\Omega(\log |S|) $, see \cite{communication_complexity_book} for example. We can reduce this to computing the length of LCS. To see this, we first compute $\lcs(x^1\circ f(x^2), y)$ and $\lcs(x^2\circ f(x^1), y)$ with $y= a^{n/3}b^{n/3}a^{n/3}$. By lemma~\ref{lem:S}, if both $\lcs(x^1\circ f(x^2), y) = \lcs(x^2\circ f(x), y) =\frac{n}{2}+5$, we know $x^1 = x^2$, otherwise, $x^1 \neq x^2$. Here, $y$ is known to both parties. 
	
	The above reduction shows the total communication complexity of this game is $\Omega(n)$ since $|S| = 2^{\Omega(n)}$. If we only allow $R$ rounds of communication, the size of the longest message sent by the players is $\Omega(n/R)$. Thus, in the asymmetric model, any protocol that computes $\lcs(x,y)$ in $R$ passes of $x$ needs $\Omega(n/R)$ space.

\end{proof}

\subsubsection{$\Omega(n)$  size alphabet, randomized algorithm}

\begin{lemma}
	\label{lem:lcs_exact_rand}
	Assume $|\Sigma|  = \Omega(n)$. In the asymmetric streaming model, any randomized protocol that computes $\lcs(x,y)$ correctly with probability at least $2/3$ for any $x,y \in \Sigma^n$, in $R$ of passes of $x$ needs $\Omega(n/R)$ space. The lower bound also holds when $x$ is a permutation of $y$. 
\end{lemma}

\begin{proof}[Proof of Lemma~\ref{lem:lcs_exact_rand}]
	The proof is by giving a reduction from set-disjointness. 
	 
	In the following, we assume alphabet set $\Sigma = [n]$ which is the set of integers from 1 to $n$. We let the online string $x$ be a permutation of $n$ and the offline string $y = 12\cdots n$ be the concatenation from 1 to $n$.  Then computing $\lcs(x,y)$ is equivalent to compute $LIS(x)$ since any common subsequence of $x$ and $y$ must be increasing.

	We now describe our approach.  For convenience, let $n' = \frac{n}{2}$.  Without loss of generality, we assume $n'$ can be divided by $4$. Consider a string $z \in \{0,1\}^{n'}$ with the following property.  
	
	\begin{equation}
		 \forall \; \; i \in [\frac{n'}{2}], \qquad z_{2i} = 1-z_{2i-1}.
	\end{equation}

	For each $i\in [n']$, we consider subsets $\sigma^i\subset [n]$ for $i\in [n']$ as defined below
	\begin{equation*}
		\sigma^i = \begin{cases}
			\{ 4i-1,  4i\} , &\text{ if $i$ is odd and $i\leq \frac{n'}{2}$}\\
			\{4(i-\frac{n'}{2})-3, 4(i-\frac{n'}{2})-2\} &\text{ if $i$ is odd and $ i> \frac{n'}{2}$}\\
			\{4i-3,  4i-2\} &\text{ if $i$ is even and $ i\leq \frac{n'}{2}$}\\
			\{4(i-\frac{n'}{2})-1, 4(i-\frac{n'}{2})\} &\text{ if $i$ is even and $i > \frac{n'}{2}$}
		\end{cases}
	\end{equation*} 
	Notice that $\sigma^i\cap \sigma^j = \emptyset$ if $i\neq j$ and $\cup^{n'}_{i = 1}\sigma^i = [n]$. For an odd $i\in [n'/2]$, $\min \sigma^i > \max \sigma^{i+n'/2}$. Oppositely, for an even $i\in [n'/2]$, $\max \sigma^i< \min \sigma^{i+n'/2}$. Also notice that for distinct $i, j \in [n'/2]$ with $i<j$, $\max \sigma^i< \min \sigma^j$ and $\max \sigma^{i+n'/2}< \min \sigma^{j+n'/2}$.
	
	We abuse the notation a bit and let $\sigma^i(z_i)$ be a string such that if $z_i = 1$, the string consists of elements in set $\sigma^i$ arranged in an increasing order, and if $z_i = 0$, the string is arranged in decreasing order. 
	
	Let $x$ be the concatenation of $\sigma^i(z_i)$ for $i \in [n']$ such that $x = \sigma^1(z_1)\circ \sigma^2(z_2) \circ \cdots \circ \sigma^{n'}(z_{n'})$. By the definition of  $\sigma^i $'s, we know $x$ is a permutation of $[n]$.

	For convenience, let $z^1$ and $z^2$ be two subsequences of $z$ such that
	\begin{align*}
		&	z^1 = z_{2}\circ z_{4}\circ  \cdots z_{\frac{n'}{2}} \\
		&	z^2 = z_{\frac{n'}{2}+2}\circ z_{\frac{n'}{2}+4 }\circ \cdots \circ z_{n'}
	\end{align*}
	
	If $\dis(z^1, z^2) = 0$. Then there exist some $i\in [n'/4]$ such that $z_{2i} = z_{n'/2 + 2i} = 1$. Notice that in $z$, we have $\forall j\in [\frac{n'}{2}], \; z_{2j} = 1-z_{2j-1}.$ Thus, $\lis(\sigma^{2j-1}({z_{2j-1}})\circ \sigma^{2j}({z_{2j}})) = 3$ since only one of $\sigma^{2j-1}({z_{2j-1}})$ and $\sigma^{2j}({z_{2j}})$ is increasing and the other is decreasing. For any $j\in [n'/4]$, we have 
	
	\begin{align}
		\label{eq:lcs1}
		&\lis(\sigma^1(z_1)\circ \sigma^2(z_2) \circ \cdots \circ  \sigma^{2j}(z_{2j})) = 3j\\
		&\lis(\sigma^{2j+n'/2+1}(z_{2j+n'/2+1})\circ \sigma^{2j+n'/2+2}(z_{2j+n'/2+2}) \circ \cdots \circ \sigma^{n'}(z_{n'})) = 3\frac{n'-2j}{2}. \label{eq:lcs2}
	\end{align}
	
	Since $z_{2i + n'/2 } = 1$, $\lis(\sigma^{2i + n'/2}(z_{2i+n'/2}))  = 2$. Since $\max\sigma^{2i} < \min \sigma^{2i+n'/2} < \max \sigma^{2i+n'/2} < \min \sigma^{2i+n'/2+1}$,  combining with equations~\ref{eq:lcs1} and \ref{eq:lcs2}, we know $\lis(x)\geq 3\frac{n'}{2}+2$.
	
	If $\dis(z^1, z^2) = 1$, we prove that $\lis(x)  = 3\frac{n'}{2}  + 1$. We only need to consider in an longest increasing subsequence, when do we first pick some elements from the second half of $x$. Say the first element picked in the second half of $x$ is in $\sigma^{2i+n'/2-1}$ or $\sigma^{2i+n'/2}$ for some $i\in [n'/2]$. We have 
	$$ \lis(\sigma^{2i-1}(z_{2i-1}) \circ  \sigma^{2i}(z_{2i}) \circ \sigma^{2i+n'/2-1}(z_{2i+n'/2-1}) \circ \sigma^{2i+n/2}(z_{2i+n'/2})   ) = 4.$$
	
	This is because  $z_{2i}$ and $z_{2i+n'/2}$ can not both be 1,  $z_{2i}  = 1- z_{2i-1}$ , and $z_{2i+n'/2} = 1-z_{2i+n'/2-1}$ . The length of $\lis$ of the substring of $x$ before $\sigma^{2i-1}(z_{2i-1})$ is  $3(i-1)$ and the length of LIS after $\sigma^{2i+n/2}(z_{2i+n'/2})  $ is $3(n'/2-i)$. Thus, we have
	$\lis(x)  = 3\frac{n'}{2}+ 1$.
	
	This gives a reduction from computing $\dis(z^1, z^2)$ to computing $\lis(x) = \lcs(x,y)$. Now assume player 1 holds the first half of $x$ and player 2 holds the second half. Both players have access to $y$. Since $|z^1|  = |z^2| = n'/4 = \frac{n}{8}$. Any randomized protocol that computes $\lcs(x,y)$ with success probability at least $2/3$ has an total communication complexity $\Omega(n)$. Thus, any randomized asymmetric streaming algorithm with $R$ passes of $x$ needs $\Omega(n/R)$ space. 
\end{proof}

We can generalize the above lemma to the following.

\begin{theorem}[Restatement of Theorem~\ref{thm:LCS1}]
	There is a constant $c>1$ such that for any $k, n\in \N$ with $n>ck$, given an alphabet $\Sigma$, any $R$-pass randomized algorithm in the asymmetric streaming model that decides if $\lcs(x,y)\geq k$ between two strings $x, y \in \Sigma^n$ with success probability at least $2/3$ must use at least $\Omega\big(\mathsf{min}(k,|\Sigma|)/R\big) $ space.
\end{theorem}

\begin{proof}

	Without loss of generality, assume $\Sigma = [r]$ so $|\Sigma| = r$. Since we assume $k< n$, we have $\mathsf{min}(k, r )< n$.

	Let $d = \mathsf{min}(k,r)$.  we let the offline string $y$ be the concatenation of two parts $y^1$ and $y^2$ where $y^1$ is the concatenation of symbols in $[d-2]\subseteq \Sigma$ in ascending order. $y^2$ is the symbol $d-1$ repeated $n-d+2$ times. Thus, $y\in \Sigma^n$. $x$ also consists of two parts $x^1$ and $x^2$ such that $x^1$ is over alphabet $[d-2]$ with length $d-2$ and $x^2$ is the symbol $d$ repeated $n-d+2$ times. Since the symbol $d-1$ does not appear in $x$ and the symbol $d$ does not appear in $y$. Thus, $\lcs(x,y) = \lcs(x^1,y^1)$. By Lemma~\ref{lem:lcs_exact_rand}, any randomized algorithm that computes $\lcs(x^1,y^1)$ with probability at least 2/3 using $R$ passes of  $x^1$ requires $\Omega(d/R)$ space.
	
	
	
\end{proof}

\subsection{Approximation}

We now show a lower bound for deterministic $1+\eps$ approximation of $\lcs$ in the asymmetric streaming model. 

\begin{theorem}[Restatement of Theorem~\ref{thm:LCS3}]
	\label{lem:lcs_approx_r_eps}
	Assume $\eps>0$, and $\frac{|\Sigma|^2}{\eps}  \leq n$ . In the asymmetric streaming model, any deterministic protocol that computes an $1+\eps$ approximation of $\lcs(x,y)$ for any $x, y\in \Sigma^n$, with constant number of passes of $x$ needs $\Omega(\frac{|\Sigma|}{\eps})$ space.
\end{theorem}

\begin{proof}[Proof of Theorem~\ref{lem:lcs_approx_r_eps}]
	In the following, we assume the size of alphabet set $\Sigma$ is $3r$ such that 
	$$\Sigma  = \{a_1, b_1, c_1, a_2, b_2, c_2, \dots a_r, b_r, c_r\}. $$

	We let $n_\eps = \Theta(1/\eps)$ such that $n_\eps$ can be divided by 60. 
	
	For any distinct $a,b\in\Sigma$, we can build a set $S_{a,b}$ in the same way as we did in section~\ref{sec:lcs_exact_det} except that we replace $n$ wtih $n_\eps$ (we used notation $S$ instead $S_{a,b}$). Thus, $S_{a,b}\subseteq \{a,b\}^{n_\eps/2+5}$. Similarly, we can define function $f$ and $\overline{S_{a,b}} = \{f(\sigma) | \sigma\in S_{a,b}\}\subset \{a,b\}^{n_\eps/2-5}$. Let $y^{a,b} = a^{n_\eps/3}b^{n_\eps/3} a^{n_\eps/3}$. By Claim~\ref{claim:|S|} and  Lemma~\ref{lem:S}, we know $|S_{a,b}| = |\overline{S_{a,b}}|  = 2^{\Omega(n_\eps)}$. For any $\sigma\in S_{a,b}$, we have $ \lcs(\sigma\circ f(\sigma), y^{a,b}) = n_\eps/2 + 5$. For any two distinct $\sigma^1, \sigma^2\in S_{a,b}$, we know at least one of $\lcs(\sigma^1\circ f(\sigma^2), y^{a,b})$ and $\lcs(\sigma^2\circ f(\sigma^1), y^{a,b})$ is at larger than $n_\eps/2+5$.

	Consider $w = (w_1, w_2, \dots, w_{2r})$ such that $w_{2i-1} \in S_{a,b}$ and $w_{2i} \in \overline{S_{a,b}}$ for $i\in [r]$. Thus, $w$ can be viewed as an element in 
	\[U  =  \underbrace{\big(S_{a,b}\times \overline{S_{a, b}}\big)\times \cdots \times \big(S_{a,b}\times \overline{S_{a, b}}\big)}_{r\text{ times}}. \]

	For alphabet $\{a_i, b_i\}$, we can similarly define $S_{a_i, b_i}$, $\overline{S_{a_i, b_i}}$ and  $y^{a_i, b_i}$. We let $U_i $ be similarly defined as $U$ but over alphabet $\{a_i, b_i\}$.

	Let $\beta = 1/3$. We can define function $h:(S_{a,b}\times \overline{S_{a, b}})^r \rightarrow \{0,1\}$ such that
	\begin{equation}
		h(w) = \begin{cases}
			0, &\text{ if } \forall i\in [r], \;  \lcs(w_{2i-1}\circ w_{2i}, y^{a_i,b_i}) = n_\eps/2 +5 \\
			1, &\text{ if for at least $\beta r$ indices $i\in [r]$, } \lcs(w_{2i-1}\circ w_{2i}, y^{a,b}) > n_\eps/2 +5 \\
			\text{undefined}, &\text{ otherwise}. 
		\end{cases}
	\end{equation}

	Consider an error-correcting code $T_{a,b} \subseteq S_{a,b}^r$ over alphabet $S_{a,b}$ with constant rate $\alpha$ and constant distance $\beta$. We can pick $\alpha = 1/2$ and $\beta = 1/3$, for example. Then the size of the code $T_{a,b}$ is $|T_{a,b} |= |S_{a,b}|^{\alpha r} = 2^{\Omega(n_\eps r)}$.  For any code word $\chi  = \chi_1 \chi_2 \cdots \chi_r \in T_{a,b}$ where $\chi_i\in S_{a,b}$, we can write 
	
	$$\nu (\chi) = (\chi_1, f(\chi_1), \chi_2, f(\chi_2), \dots, \chi_r, f(\chi_r)).$$

	Let $W = \{\nu(\chi)| \chi\in T_{a,b}\}\in (S_{a,b}\times \overline{S_{a, b}})^r $. Then

	\begin{equation}
		\label{eq:W_size}
		|W| = |T_{a,b}| = 2^{\Omega(n_\eps r)}
	\end{equation}

	We consider a $2r$-player one-way game where the goal is to compute the function $h$ on input $w = (w_1, w_2, \dots, w_{2r})$. In this game, player $i$ holds $w_i$ for $i\in[2r]$ and can only send message to player $i+1$ (Here, $2r+1 = 1$).  We now show the following claim.
	
	\begin{claim}
		\label{claim:W}
		$W$ is a fooling set for the function $h$. 
	\end{claim}

	\begin{proof}
		Consider any two different codewords $\chi, \chi' \in T_{a,b}$. Denote $w = \nu(\chi)$ and $w' = \nu(\chi')$. By our defintion of $w$, we know $w_{2i} = f(w_{2i-1})$ and $w_{2i-1}\in S_{a_i, b_i}$ for $i\in [r]$. We have $\lcs(w_{2i-1}\circ w_{2i}, y^{a_i,b_i}) = n_\eps/2 + 5$. Thus, $h(w) = h(w') = 0$. 
	
		The span of $\{w,w'\} $ is the set $\{(v_1, v_2, \dots, v_{2r})| v_i \in \{w_i, w'_i\} \text{ for } i \in [2r]\}$. We need to show that there exists some $v$ in the span of $\{w,w'\} $ such that $h(v) = 1$. Since $\chi, \chi'$ are two different codewords of $T_{a,b}$. We know there are at least $\beta r$ indices $i$ such that $w_{2i-1}\neq w'_{2i-1}$. Let $I\subseteq [r]$ be the set of indices that $\chi_i\neq \chi'_i$. Then, for $i\in I$, we have 
		\begin{equation*}
			\max\big(\lcs(w_{2i-1}\circ w'_{2i}, y^{a_i,b_i}), \lcs(w'_{2i-1}\circ w_{2i}, y^{a_i,b_i}) )\big) \geq n_\eps + 6. 
		\end{equation*}
		We can build a $v$ as following. For $i\in I$,  if $\lcs(w_{2i-1}\circ w'_{2i}, y^{a,b}\geq n_\eps + 6$. We then set $v_{2i-1} = w_{2i-1}$ and $v_{2i} = w'_{2i}$. Otherwise, we set $v_{2i-1} = w'_{2i-1}$ and $v_{2i} = w_{2i}$. For $i\in [r]\setminus I$, we set $v_{2i-1} = w_{2i-1}$ and $v_{2i} = w_{2i}$. Thus, for at least $\beta r$ indices $i\in [r]$, we have $\lcs(v_{2i-1}\circ v_{2i}, y^{a_i,b_i}) \geq n_\eps/2 +6$. We must have $h(v) = 1$. 
	\end{proof}

	Consider a matrix $B$ of size $r\times 2r$ of the following form
	
	\[\begin{pmatrix}
		B_{1,1}	& B_{1,2} & \cdots & B_{1,2r} \\
		B_{2,1}	& B_{2,2} & \cdots & B_{2,2r}  \\
		\vdots	& \vdots & \ddots & \vdots  \\
		B_{r,1}	& B_{r,2} & \cdots & B_{r,2r}
	\end{pmatrix}\]
	
	where  $B_{i,2j-1}\in S_{a_i, b_i}$ and $B_{i, 2j}\in \overline{S_{a_i, b_i}}$ for $j\in [r]$ (the elements of matrix $B$ are strings over alphabet $\Sigma$). Thus, the $i$-th row of $B$ is an element in $U_i$. We define the following function $g$. 
	
	\begin{equation}
		g(B) = h_1(R_1(B))\vee h_2(R_2(B))\vee \cdots \vee h_r(R_r(B))
	\end{equation}  

	where $R_i(B)\in U$ is the $i$-th row of matrix $B$ and $h_i$ is the same as $h$ except the inputs are over alphabet $\{a_i, b_i\}$ instead of $\{a,b\}$ for $i\in [r]$. Also, we define $W_i$ in exactly the same way as $W$ except elements in $W_i$ are over alphabet $\{a_i, b_i\}$ instead of $\{a,b\}$. 
	
	Consider a $2t$ player game where each player holds one column of $B$. The goal is to compute $g(B)$. We first show the following Claim. 
	
	\begin{claim}
		\label{claim:B}
		The set of all $r\times 2r$ matrix $B$ such that $R_i(B)\in W_i \; \forall\; i\in [r]$ is a fooling set for $g$. 
	\end{claim}
	
	\begin{proof}[Proof of Claim~\ref{claim:B}]
		For any two matrix $B_1 \neq B_2$ such that $R_i(B_1), R_i(B_2)\in W_i \; \forall\; i\in [r]$. We know $g(B_1) = g(B_2) = 0$. There is some row $i$ such that $R_i(B_1)\neq R_i(B_2)$. We know there is some elements $v$ in the span of $R_i(B_1)$ and $R_i(B_2)$, such that $h_i(v) = 1$ by Claim~\ref{claim:W}. Thus, there is some element $B' $ in the span of $B_1$ and $B_2$ such that $g(B') = 1$. Here, by $B' $ in the span of $B_1$ and $B_2$, we mean the $i$-th column is either $C_i(B_1)$ or $C_i(B_2)$ .
	\end{proof}

	Since $\lvert W\rvert = 2^{\Omega(n_\eps r)}$. By the above claim, we have a fooling set for $g$ size $\lvert W\rvert^r = 2^{\Omega(n_\eps r^2)}$. Thus, $CC^{tot}_{2r}(g) \geq \log(2^{\Omega(n_\eps r^2)}) = \Omega(n_\eps r^2)$. Since $CC^{\max}_{2r}\geq CC^{tot})_{2r} (g) = \Omega(n_\eps r)$.

	We now show how to reduce computing $g(B)$ to approximating the length of LCS in the asymmetric streaming model.
	
%

	We consider a matrix $\tilde{B}$ of size $r\times 3r$ such that
	
	\begin{equation}
		\label{eq:tB}
		\tB = \begin{pmatrix}
			B_{1,1}	& B_{1,2} & c_{1}^{n_\eps}  &B_{1,3}	& B_{1,4} & c_{2}^{n_\eps} & \cdots & B_{1,2r-1}  & B_{1,2r} &  c_{r}^{n_\eps}\\
			B_{2,1}	& B_{2,2}& c_{1}^{n_\eps}  &B_{2,3}	& B_{2,4} & c_{2}^{n_\eps} & \cdots & B_{2,2r-1} & B_{2,2r} &  c_{r}^{n_\eps}\\
			\vdots	& \vdots & \vdots  			  &\vdots	& \vdots& \vdots & \ddots & \vdots       &\vdots    & \vdots \\
			B_{r,1}	& B_{r,2} &  c_{1}^{n_\eps} &B_{r,3}	& B_{r,4} & c_{2}^{n_\eps} & \cdots & B_{r,2r-1}  & B_{r,2r}  &  c_{r}^{n_\eps}
		\end{pmatrix}_{(r\times 3r)}
	\end{equation}

	In other words, $\tB$ is obtained by inserting a column of $c$ symbols to $B$ at every third position. For $j\in [3r]$, let $C_j(\tilde{B}) = \tilde{B}_{1,j}\circ  \tilde{B}_{2,j} \circ \dots \circ \tilde{B}_{r,j}$. That is, $C_j(\tilde{B})$ is the concatenation of elements in the $j$-th column of $\tilde{B}$. We can set $x$ to be the concatenation of the columns of $\tB$. Thus, since $B_{i,2j-1}\circ B_{i, 2j}\in \Sigma^{n_\eps}$ for any $i,j\in [r]$, we have 
	
	\begin{equation*}
		 x  = C_1(\tilde{B})\circ C_2(\tilde{B}) \circ \cdots \circ C_{3r}(\tilde{B})\in \Sigma^{2r^2n_\eps }. 
	\end{equation*}
	
	For $i\in r$, we have defined $y^{a_i,b_i} = a_i^{n_\eps/3}b_i^{n_\eps/3}a_i^{n_\eps/3}\in \Sigma^{n_\eps}$. We let $y^{i,1}$ and $y^{i,2}$ be two non-empty strings such that $y^{a_i, b_i} = y^{i,1}\circ y^{i,2}$. We consider another matrix $\bar{B}$ of size $r\times 3r $ such that
	
	\begin{equation}
		\label{eq:bB}
		\bB = \begin{pmatrix}
			y^{1,1}	& y^{1,2} & c_{1}^{n_\eps}  &	y^{1,1}	& y^{1,2}& c_{2}^{n_\eps} & \cdots & 	y^{1,1}	& y^{1,2} &  c_{r}^{n_\eps}\\
			y^{2,1}	& y^{2,2}& c_{1}^{n_\eps}  &  y^{2,1}  & y^{2,2}& c_{2}^{n_\eps} & \cdots & y^{2,1}  & y^{2,2} &  c_{r}^{n_\eps}\\
			\vdots	& \vdots & \vdots  			  &\vdots	& \vdots& \vdots & \ddots & \vdots       &\vdots    & \vdots \\
			y^{r,1}  & y^{r,2} &  c_{1}^{n_\eps} &y^{r,1}  & y^{r,2} & c_{2}^{n_\eps} & \cdots &y^{r,1}  & y^{r,2}  &  c_{r}^{n_\eps}
		\end{pmatrix}_{(r\times 3r)}
	\end{equation}
	
	For $i\in [r]$, let $R_i(\bB)$ be the concatenation of elements in the $i$-th row of $\bB$. We can set $y$ to be the concatenation of rows of $\bB$. Thus, 
	
	\begin{equation*}
		y  = R_1(\bB)\circ R_2(\bB) \circ \cdots \circ C_{r}(\bB)\in \Sigma^{2r^2n_\eps }. 
	\end{equation*}

	We now show the following Claim.
	
	\begin{claim}
		\label{claim:reduction}
		If $g(B) = 0$, $\lcs(x,y)\leq (\frac{5}{2}r-1)n_\eps+5r$. If $g(B) = 1$, $\lcs(x,y)\geq (\frac{5}{2}r-1)n_\eps+5r + \beta r-1$. 
	\end{claim}

	\begin{proof}
		We first divide $x$ into $r$ blocks such that 
		
		\[x = x^1\circ x^2\circ \cdots x^r\]
		where $x^i = C_{3i-2}(\tB)\circ C_{3i-1}(\tB)\circ C_{3i}(\tB)$. We know $C_{3i-2}(\tB) = C_{2i-1}(B)$ is the $(2i-1)$-th column of $B$, $C_{3i-1}(\tB) = C_{2i}(B)$ is the $2i$-th column of $B$ and $C_{3i}(\tB) = c_i^{rn_\eps}$ is symbol $c_i$ repeated $rn_\eps$ times.
		
		If $g(B) = 0$, we show $\lcs(x,y)\leq (\frac{5}{2}r-1)n_\eps+5r$.  We consider a matching between $x$ and $y$. For our analysis, we let $t_i$ be the largest integer such that some symbols in $x^i$ is matched to symbols in the $t_i$-th row of $\bB$. Since $y$ is the concatenation of rows of $\bB$. If no symbols in $x^i$ is matched, we let $t_i = t_{i-1}$ and we set $t_0 = 1$, we have 
		
		\[1 = t_0 \leq t_1\leq t_2\leq \cdots \leq t_r\leq r\]
		
		We now show that, there is an optimal matching such that $x^i$ is matched to at most $n_\eps/2 + 5 +(t_i-t_{i-1}+1)n_\eps$ symbols in $y$. There are two cases:
		
		\begin{description}
			\item{Case (a):} $t_i = t_{i-1}$. In this case, $x^i$ can only be matched to $t_i$-th row of $\bB$, $R_{t_i}(\bB)$. $R_{t_i}(\bB)$ consists of symbols $a_{t_i}, b_{t_i}$ and $c_1,\dots, c_r$ and is of the form
			
			\[R_{t_i}(\bB)  = y^{a_{t_i},b_{t_i}}\circ c_1^{n_\eps}\circ y^{a_{t_i},b_{t_i}}\circ c_2^{n_\eps}\circ \cdots \circ y^{a_{t_i},b_{t_i}}\circ c_r^{n_\eps}.\]
			
			We first show that we can assume $a_{t_i}$ and $b_{t_i}$ symbols in $x^i$ are only matched to the $y^{a_{t_i}, b_{t_i}}$ block between the block of $c_{i-1}$ and the block of $c_{i}$.
			
			If in $R_{t_i}(\bB)$, there are some $ a_{t_i}$, $b_{t_i}$ symbols before the block of $c_{i-1}$ matched to $x^i$. Say the number of such matches is at most $n'$. We know $n'\leq n_\eps$ there are at most $n_\eps$ $ a_{t_i}$, $b_{t_i}$ symbols in $x^i$. In this case, notice that, there is no $c_{i-1}$ symbol in $R_{t_i}(\bB)$ can be matched to $x$. We can build another matching between $x$ and $y$ by removing these $n'$ matches and add $n_\eps$ matches between $c_{i-1}$ symbols in $x^{i-1}$ and $R_{t_i}(\bB)$. We can do this since there are $rn_\eps\;\;$ $c_{i-1}$ symbols in $x^{i-1}$ and before the $t_i$-th row, we can match at most $(t_i-1)n_\eps\;\;$ $c_{i-1}$ symbols. So there are at least $(r-t_i+1)n_\eps$ unmatched $c_{i-1}$ symbols in $x^{i-1}$. The size of the new matching is at least the size of the old matching.
%
			
			Similarly, if there are some $ a_{t_i}$, $b_{t_i}$ symbols after the block of $c_{i}$ in $R_{t_i}(y)$ matched to $x^i$. Then, no $c_i$ symbol in $x^i$ can be matched to $R_{t_i}(\bB)$. We can remove these matches and add $n_\eps$ matched $c_i$ symbols. This gives us a matching with size at least the size fo the old matching. 
			
			Thus, we only need to consider the case where $B_{t_i, 2i-1}\circ B_{t_i, 2i} $ is matched to the part of $R_{t_i}(\bB)$ after the block of $c_{i-1}$ symbols and before the block of $c_i$ symbols, which is $y^{a_{t_i},b_{t_i}}$. Since $g(B) = 0$, we know $\lcs(B_{t_i, 2i-1}\circ B_{t_i, 2i}, y^{a_{t_i}, b_{t_i}})$ is exactly $n_\eps/2+5$. Also, we can match at most $n_\eps$ $c_i$ symbols. Thus, $x^i$ is matched to at most $n_\eps/2 + 5 +n_\eps$ symbols in $y$.
			
			
			\item{Case (b):} $t_i > t_{i-1}$. We can assume in $x^i$, except $c_i$ symbols, only symbols $a_{t_{i-1}}, b_{t_{i-1}}$ are matched to $y$. To see this, assume $t'$ with $t_{i-1}<t'\leq t_i$ is the largest integer that some symbol $a_{t'} $ or $b_{t'}$ in $x^i$ are matched to $y$. By $a, b$ symbols, we mean symbols $a_1, \dots, a_r$ and $b_1, \dots, b_r$. We now consider how many $a, b$ symbols in $x^i$ can be matched to $y$. We only need to consider the substring
			
			\[B_{t_{i-1}, 2i-1}\circ B_{t_{i-1}+1, 2i-1}\circ \cdots \circ B_{t', 2i-1}\circ B_{t_{i-1}, 2i}\circ B_{t_{i-1}+1, 2i}\circ \cdots \circ B_{t', 2i}\]
			
			Let $t_{i-1}\leq k \leq t'$ be the largest integer such that some symbol $a_k$ or $b_k$ from $B_{k, 2i-1}$ is matched. Notice that for any $t\in[r]$, symbol $a_t$, $b_t$ only appears in the $t$-th row of $\bB$ and $y$ is the concatenation of row of $\bB$. For $a_t, b_t$ with $t< k$, only those in $B_{t, 2i-1}$ can be matched since block $B_{t, 2i}$ is after $B_{k, 2i-1}$ in $x^i$.  For $a_t, b_t$, with $t>k$, only those in $B_{t, 2i}$ can be matched by assumption on $k$. 
			
			Notice that for any $t$, we have $B_{t, 2i-1}$ has length $n_\eps/2+5$ and  $B_{t, 2i}$ has length $n_\eps/2-5$. Thus, the number of matched $a, b$ symbols is at most $(t'-t_{i-1})(n_\eps/2+5)+n_\eps$. We can build another matching by first remove the matches of $a, b$ symbols in $x^i$. Then, we match another $(t'-t_{i-1})n_\eps\;$ $c_{i-1}$ symbols in $x^{i-1}$ to the $c_{i-1}$ symbols in $\bB$ from $(t_{i-1}+1)$-th row to $t'$-th row. These $c_{i-1}$ symbols are not matched since we assume $x^{i-1}$ is only matched to first $t_{i-1}$ rows of $\bB$ in the original matching. Further, we can match  $B_{t', 2i-1}\circ B_{t', 2i}$ to the $y^{a_{t'}, b_{t'}}$ block in $R_{t'}(\bB)$ between the block of $c_{i-1}$ and the block of $c_i$. This gives us $(t'-t_{i-1})n_\eps + n_\eps/2+5$ additional matches. Since we $t'>t_{i-1}$, we know the number of added matches is at least the number of removed matches. In the new matching, $t_{i-1} = t'$. Thus, we can assume in $x^i$, except $c_i$ symbols, only symbols $a_{t_{i-1}}, b_{t_{i-1}}$ are matched to $y$.
			
			By the same argument in Case (a), we can assume $a_{t_i}$ and $b_{t_i}$ symbols in $x^i$ are only matched to the $y^{a_{t_i}, b_{t_i}}$ block between the block of $c_{i-1}$ and the block of $c_{i}$. Thus, we can match $n_\eps/2+5\;$ $a_{t_i}$ and $b_{t_i}$ symbols. Also, we can match at most $(t_i-t_{i-1}+1)n_\eps\;$ $c_i$ symbols since there are this many $c_i$ symbols in $\bB$ from $t_{i-1}$-th row to $t_i$-th row. Thus, $x^i$ is matched to at most $n_\eps/2 + 5 +(t_i-t_{i-1}+1)n_\eps$ symbols in $y$.


		\end{description}
		
		In total, the size of matching is at most 
		\[\sum_{i = 1}^r \big(n_\eps/2 + 5 +(t_i-t_{i-1}+1)n_\eps\big)  \leq (2r-1)n_\eps + (n_\eps/2 + 5)r  = (\frac{5}{2}r-1)n_\eps+5r\]
		
		Thus, if $g(B) = 0$, we know $ \lcs(x,y)\leq (\frac{5}{2}r-1)n_\eps+5r$
		
		If $g(B) =1$, that means there is some row $i$ of $B$ such that for at least $\beta r $ positions $j\in [r]$, we have $\lcs(B_{i,2j-1}\circ B_{i, 2j}, y^{a_i, b_i}) \geq n_\eps/2+6$. 
		
		We now build a mathing between $x$ and $y$ with size at least $(\frac{5}{2}r-1)n_\eps+5r + \beta r-1$. We first match $B_{1,1}\circ B_{1,2}$, which is a subsequence of $C_1(\tB)\circ C_2(\tB)$, to the first $y^{a_1, b_1}$ block in $R_1(\bB)$, this gives us at least $n_\eps/2+5$ matches. Then, we match all the $c_1$ symbols in the first $i$ rows of $\bB$ to $C_3{\tB}$. This gives us $in_\eps$ matches.  
		
		We consider the string 
		
		\[\tilde{x} = B_{i,3}\circ B_{i,4}\circ \circ c_2^{n_\eps} \circ \cdots \circ B_{i,2r-1}\circ B_{i,2r} \circ c_r^{n_\eps}.\] 
		
		It is a subsequence of $x^2\circ \cdots x^r$. Also, for at least $\beta r -1$ positions $j\in \{2,3,\dots, r\}$, we know $\lcs(B_{i,2j-1}\circ B_{i, 2j}, y^{a_i, b_i}) \geq n_\eps/2+6$. For the rest of the positions, we know $\lcs(B_{i,2j-1}\circ B_{i, 2j}, y^{a_i, b_i}) = n_\eps/2+5$. Thus, $\lcs(\tilde{x}, R_i(\bB)) \geq (r-1)\big(n_\eps + (n_\eps/2+5)\big) + \beta r -1$. 
		
		After the $i$-th row of $\bB$, we can match another $(r-i)n_\eps\;$ $c_r$ symbols to $x^r$. This gives us a matching of size at least $(\frac{5}{2}r-1)n_\eps+5r + \beta r-1$. Thus, if $g(B) = 1$, $\lcs(x,y)\geq (\frac{5}{2}r-1)n_\eps+5r + \beta r-1$.

	\end{proof}

	Assume $n_\eps = \lambda /\eps$ where $\lambda $ is some constant.  Let $\eps' = \frac{\beta}{10\lambda}\eps = \Theta(\eps)$. If we can give a $1+\eps'$ approximation of $\lcs(x,y)$, we can distinguish $g(B) = 0$ and $g(B) = 1$. 
	
	Thus, we can reduce computing $g(B)$ in the $2r$ player setting to computing $\lcs(x,y)$. The string $y$ (the offline string) is known to all players and it contains no information about $B$. For the $i$-th player, it holds $i$-th column of $B$. If $i$ is odd, the player $i$ knows the $3\frac{i-1}{2}$-th column of $\tB$. If $i$ is even, the player $i$ knows the $(\frac{3i}{2}-1)$-th row of $\tB$ which is the $i$-th row of $B$ and $(\frac{3i}{2})$-th row of $\tB$ which consist of only $c_{\frac{3i}{2}}$.

\end{proof}

\section{Lower Bounds for LIS and LNS}
\label{sec:lis}

In this section, we introduce our space lower bound for $\lis$ and $\lns$. 

%
%
%
%
%
%
%

Let $a \in \{0,1\}^t$, $a$ can be seen as a binary string of length $t$. For each integer $l\geq 1$, we can define a function $h^{(l)}$ whose domain is a subset of $\{0,1\}^t$. Let $\alpha \in (1/2,1)$ be some constant. We have following definition


\begin{equation}
	h^{(l)}(a) = \begin{cases}
		1, \text{ if there are at least $l$ zeros between any two nonzero positions in $a$.}\\
		0, \text{ if $a$ contains at least $\alpha t$ nonzeros.}
	\end{cases}
\end{equation}

We leave $h^{(l)}$ undefined otherwise. Let $B\in \{0,1\}^{s\times t}$ be a matrix and denote the $i$-th row of $B$ by $R_i(B)$. We can define $g^{(l)}$ as the direct sum of $s$ copies of $h^{(l)}$. Let

\begin{equation}
	\label{eq1}
	g^{(l)}(B) = h^{(l)}(R_1(B))\vee h^{(l)}(R_2(B)) \vee \cdots \vee h^{(l)}(R_s(B)).
\end{equation}

That is, $g^{(l)}(B) = 1$ if there is some $i\in [s]$ such that  $ h^{(l)}(R_i(B)) = 1$ and $g^{(l)}(B) = 0$ if for all $i\in[s]$, $ h^{(l)}(R_i(B)) = 0$.

In the following, we consider computing $h^{(l)}$ and $g^{(l)}$ in the $t$-party one-way communication model. When computing $h^{(l)}(a)$, player $P_i$ holds the $i$-th element of $a\in \{0,1\}^t$ for $i\in [t]$. When computing $g^{(l)}(B)$, player $P_i$ holds the $i$-th column of matrix $B$ for $i\in [t]$. In the following, we use $CC^{tot}_{t}(h^{(l)})$  to denote the total communication complexity of $h^{(l)}$ and respectively use $CC^{tot}_{t}(g^{(l)})$ to denote the total communication complexity of $g^{(l)}$. We also consider multiple rounds of communication and we denote the number of rounds by $R$.

\begin{lemma}
	\label{lem:k-fooling-h}
	For any constant $l\geq 1$, there exists a constant $k$ (depending on $l$), such that there is a $k$-fooling set for function $h^{(l)}$  of size $c^t$ for some constant $c>1$. 
\end{lemma}

We note that Lemma 4.2 of \cite{ergun2008distance} proved a same result for the case $l=1$. 

\begin{proof}[Proof of Lemma~\ref{lem:k-fooling-h}]
	We consider sampling randomly from $\{0,1\}^t$ as follows. For $i\in [t]$, we independently pick $a_i  = 1$ with probability $p$ and $a_i = 0$ with probability $1-p$. We set $p = \frac{1}{k}$ for some large constant $k$. For $i\in [t-l]$, we let $A_i$ to be event that there are no two 1's in the substring $a[i:i+l]$. Let $\Pr(A_i)$ be the probability that event $A_i$ happens. By a union bound, we have 
	\begin{equation}
		\Pr(A_i) \leq \binom{l}{2} p^2, \; \forall  \; i\in [t-l]
	\end{equation}
	
	Let $\mathcal{A} = \{A_i, \; i\in [t-l]\}$. Notice that since we are sampling each position of $a$ independently, the event $A_i$ is dependent to at most $2l$ other events in $\mathcal{A}$. We set $v = 2l\binom{l}{2} p^2$. For large enough $k$, we have 
	\begin{equation}
		\forall \; i \in [t-l], \; \Pr(A_i)\leq \binom{l}{2} p^2 \leq v (1-v)^{2l}
	\end{equation} 
	Here, the second inequality follows from the fact that $l$ is a constant and $\binom{l}{2}p^2 = v/(2l)$, so we can pick $k$ to be large enough, say $k \ge \sqrt{\frac{l^3}{1-log(2l)/(2l)}}$ (or $p = 1/k$ to be small enough) to guarantee $(1-v)^{2l} \ge 1/{2l}$.  
	
	Thus, we can use Lov\'as Local Lemma here. By Lemma~\ref{lem:LLL}, we have 
	\begin{equation}
		\Pr\big(\overline{A_1}\wedge \overline{A_2} \wedge \cdots \wedge \overline{A_{t-l}}\big) \geq (1-v)^t \geq (1-l^3 p^2)^t.
	\end{equation}
	Notice that ``there are at least $l$ $0$'s between any two $1$'s in $a$"  is equivalent to none of events $A_i$ happens. We say a sampled string $a$ is good if none of $A_i$ happens. Thus, for any good string $a$, we have $h^{(l)}(a) = 0$. The probability that a sampled  string $a$ is good is at least $(1-l^3 p^2)^t$. For convenience, we let $q = 1-l^3p^2$.
	
	Assume we independently sample $M$ strings in this way, the expected number good string is $ q^tM $. Let $a^1, a^2, \dots, a^k$ be $k$ independent random samples. We consider a string $b$ in the span of these $k$ strings, such that, for $i\in [t]$, let $b_i = 1$ if there is some $j\in[k]$ such that $a^j_i = 1$. $b$ is in the span of these $k$ strings. We now consider the probability that $b$ has at least $\alpha t $ 1's, i.e. $h^{(l)}(b) = 1$. Notice that $a^j_i = 1$ with probability $p$, thus $Pr(b_i = 1) = 1-(1-p)^t$.  Let $\gamma = 1-(1-p)^t$. The expected number of 1's in $b$ is $\gamma t$. Let $\eps = \gamma-\alpha$ and $\delta$ be the probability that $b$ has less than $\alpha t$ 1's. Using Chernoff bound, we have, 
	\begin{equation}
		\Pr(\text{$b$ has less than $\alpha t $ 1's}) \leq  e^{\frac{-\eps^2 \gamma }{2}t}.
	\end{equation}
	Let $\delta = e^{\frac{-\eps^2 \gamma }{2}t}$ and $M = \frac{e}{k}(\frac{q^t}{2\delta})^{1/k} $. We consider the probability that these $M$ sample is not a $k$-fooling set for $h^{(l)}$. Since for any $k$ samples, it is not a $k$-fooling set with probability at most $\delta$. Let $ E$ denote the event that these $M$ samples form a $k$-fooling set. Using a union bound, the probability that $E$ does not happen is
	\begin{align*}
		\Pr\big(\bar{E}\big)\leq \binom{M}{k}\delta \leq (\frac{eM}{k})^k\delta  \leq \frac{1}{2}q^t.
	\end{align*}
	
	Let $Z$ be a random variable equals to the number of good string among the $M$ samples. As we have shown, the expectation of $Z$, $ \mathbb{E}(Z) = q^tM$. Also notice that $\mathbb{E}(Z) = \mathbb{E}(Z|E)\Pr(E) + \mathbb{E}(Z|\overline{E})\Pr(\overline{E})$. Thus, with a positive probability, there are $\frac{1}{2}q^t M$ good samples and they form a $k$-fooling set. 
	\begin{equation}
		\frac{1}{2}q^t M =( \frac{1}{2})^{1+1/k}\frac{e}{k}(q^{1+1/k}(\frac{1}{\delta})^{1/k})^t.
	\end{equation}
	Notice that 
	\begin{equation}
		q^{1+1/k}(\frac{1}{\delta})^{1/k} =(1-\frac{l^3}{k^2})^{1+1/k} e^{\frac{\eps^2\gamma}{2k}}.
	\end{equation}
	Since we assume $l$ is a constant, it is larger than 1 when $k$ is a constant large enough (depends on $l$). This finishes the proof. 
	
\end{proof}

%

The following lemma is essentially the same as Lemma 4.3 in \cite{ergun2008distance}.
\begin{lemma}
	
	\label{lem:ks-fooling-g}
	Let $F\subseteq \{0,1\}^{t}$ be a $k$-fooling set for $h^{(l)}$. Then the set of all matrix $B\in \{0,1\}^{s\times t}$ such that $R_i(B)\in F$ is a $k^s$-fooling set for $g^{(l)}$. 
\end{lemma}

\begin{lemma}	
	\label{lem:g}
	$CC^{max}_t(g^{(l)}) = \Omega(s/R).$  
\end{lemma}

\begin{proof}
	By Lemma~\ref{lem:k-fooling-h} and Lemma~\ref{lem:ks-fooling-g}, there is a $k^s$-fooling set for function $g^{(l)}$ of size $c^{ts}$ for some large enough constant $k$ and  some constant $c>1$. By Lemma~\ref{lem:k-fooling}, in the $t$-party one-way communication model, $ CC^{tot}_t(g^{(l)}) = \Omega(\log \frac{c^{ts}}{k^s-1})= \Omega(ts).$  Thus, we have $CC^{max}_t(g^{(l)}) \geq \frac{1}{tR}CC^{tot}_t(g^{(l)}) =  \Omega(s/R).$  
\end{proof}


\subsection{Lower bound for streaming LIS over small alphabet}

We now present our space lower bound for approximating LIS in the streaming model. 



\begin{lemma}
	\label{lem:lis_length}
	For $x\in\Sigma^n$ with $|\Sigma| = O(\sqrt{n})$ and any constant $\eps>0$,  any deterministic algorithm that makes $R$ passes of $x$ and outputs a $(1+\eps)$-approximation of $\lis(x)$ requires $\Omega(|\Sigma|/R)$ space. 
\end{lemma}
\begin{proof}[Proof of Lemma~\ref{lem:lis_length}]
	
	We assume the alphabet set $\Sigma =  \{0,1,\dots, 2r\}$ which has size $|\Sigma| = 2r+1$. Let $c$ be a large constant and assume $r$ can be divided by $c$ for similicity.  We set $s = \frac{r}{c}$ and $t = r$. Consider a matrix $B$ of size $s\times t$. We denote the element on $i$-th row and $j$-th column by $B_{i,j}$. ALso, we require that $B_{i,j} $ is either $ (i-1)\frac{r}{c} + j$ or $0$. For each row of $B$, say $R_i(B)$, either there are at least $l$ 0's between any two nonzeros or it has more than $\alpha r$ nonzeros. We let $\tilde{B}\in \{0,1\}^{s\times r}$ be a binary matrix such that $\tilde{B}_{i,j} = 1$ if $B_{i,j}\neq 0$ and $\tilde{B}_{i,j} = 0$ if $B_{i,j} = 0$ for $(i,j)\in [s]\times[r]$. 
	
	Without loss of generality, we can view $R_i(B)$ for $i\in [s]$, or $C_i(B)$ for $i\in [r]$ as a string. More specifically, $R_i(B) = B_{i,1}B_{i,2}\dots B_{i,r}$ for $i\in [s]$, and $C_i(B) = B_{1,i}B_{2,i}\dots B_{s,i}$ for $i\in[r]$.
	
	
	We let $\sigma(B) = C_1(B)\circ C_2(B) \circ \cdots \circ C_r(B)$. Thus, $\sigma(B)$ is a string of length $sr$. For convenience, we denote $\sigma = \sigma(B)$. Here, we require the length of $\sigma = r^2/c \leq n$. If $|\sigma|<n$, we can pad $\sigma$ with 0 symbols to make it has length $n$. This will not affect the length of the longest increasing subsequence of $\sigma$.
	
	We first show that if there is some row of $B$ that contains more than $\alpha t$ nonzeros, then $\lis(\sigma)\geq \alpha r $. Say $R_i(B)$ contains more than $\alpha t$ nonzeros. By our definition of $B$, $R_i(B)$ is strictly increasing when restricted to the nonzero positions. Thus, $\lis(R_i(B))\geq \alpha r$. Also notice that $R_i(B)$ is a subsequence of $\sigma$. This is because $C_j(B)$ contains element $B_{i,j}$ for $j\in [r]$ and $\sigma$ is the concatenation of $C_j(B)$'s for $j$ from 1 to $r$. Thus, $\lis(\sigma)\geq \alpha r $.

	Otherwise, for any row $R_i(B)$, there are at least $l$ zeros between any two nonzero positions. We show that $\lis(\sigma(B))\leq (\frac{1}{r}  + \frac{1}{c})r$. Assume $\lis(\sigma) = m$ and let $\sigma' = B_{p_1, q_1} B_{p_2, q_2} \dots B_{p_m, q_m}$ be a longest increasing subsquence of $\sigma$.
	
	We can think of $\sigma'$ as a path on the matrix. By go down $k$ steps from $(i,j)$, we mean walk from $(i,j) $ to $(i+k, j)$. By go right $k$ steps form $(i,j)$, we mean walk from $(i,j)$ to $(i,j+k)$. Then $\sigma'$ corresponds to the path $P(\sigma') = (p_1, q_1)\rightarrow (p_2, q_2)\rightarrow \cdots \rightarrow (p_m, q_m)$. Notice that for each step $(p_i, q_i)\rightarrow (p_{i+1}, q_{i+1})$, we know $1\leq B_{p_i, q_i}< B_{p_{i+1}, q_{i+1}}$. Thus, by our construction of matrix $B$, the step can only be one of the three types:
	
	\begin{description}
	\item{Type 1:} $p_i< p_{i+1}$ and $q_{i+1}\geq  q_{i}$.  If  $p_i< p_{i+1}$, then for any $q_{i+1}\geq  q_{i}$, we have $B_{p_i, q_i}< B_{p_{i+1}, q_{i+1}}$ if $B_{p_i, q_i}$ and $ B_{p_{i+1}, q_{i+1}}$ are both non zero,
	
	
	\item{Type 2:} $p_i = p_{i+1} $ and $q_{i+1}-q_i\geq l+1$.This correspons to the case where $B_{p_i, q_i}$ and $B_{p_{i+1}, q_{i+1}}$ are picked from the same row of $B$. However, since we assume in each row of $B$, the number of 0's between any two nonzeros is at least $l$. Since $B_{p_i, q_i}$ and $B_{p_{i+1}, q_{i+1}}$ are both nonzeros and $B_{p_i, q_i}< B_{p_{i+1}, q_{i+1}}$, we must have $q_{i+1}-q_i\geq l+1$.
	
	\item{Type 3:} $p_i>p_{i+1}$ and $q_{i+1}-q_{i}\geq (p_{i}-p_{i+1})\frac{r}{c} + 1$. When $p_i> p_{i+1}$,  $B_{p_{i},q_i}-B_{p_{i+1},q_i}> (p_i-p_{i+1})\frac{r}{c}$. Since we require $B_{p_i, q_i}< B_{p_{i+1}, q_{i+1}}$, we must have $q_{i+1}-q_{i}\geq (p_{i}-p_{i+1})\frac{r}{c} + 1$. 
	\end{description}

	For $i \in [3]$, let $a_i$ be the number of step of Type $i$ in the path $P(\sigma')$. Then, $a_1+a_2 + a_3 = m-1$. We say $(p_i, q_i)\rightarrow (p_{i+1}, q_{i+1})$ is step $i$ for $i\in [m-1]$. And let 
	\[U_i = \{j\in [m-1]| \text{step $j$ is of Type $i$}\} \text{ for } i\in[3]. \] 
	$$a'_1=  \sum_{i\in U_1}(p_{i+1}-p_i)$$
	$$a'_3 = \sum_{i\in U_3}(p_{i}-p_{i+1}).$$
	Or equivalently, $a'_1$ is the distance we go downward with steps of Type 1 and $a'_2$ is the distance we go upward with steps of Type 3. Since only steps of Type 1 and Type 3 can go up and down, we know 
	\begin{equation}
		\label{eq:lis_lb_1}
		a'_1 -a'_3 = p_m-p_1 \leq \frac{r}{c}
	\end{equation} 
	For the number of distance we go right, for each step of Type 2, we go at least ${l+1}$ positions right. For the step of Type 3, we go right at least $\sum_{i \in T_3 }((p_{i}-p_{i+1})\frac{r}{c} + 1)  = \frac{r}{c}a'_3 + |T_3| = \frac{r}{c}a'_3 + a_3$ steps. Since the total distance we go right is $q_m-q_1\leq r$. Thus, we have
	\begin{equation}
		\label{eq:lis_lb_2}
		a_2 l + a'_3 \frac{r}{c} + a_3\leq q_m-q_1\leq r.
	\end{equation}
	
	We assume $l$ and $c$ are both constants and $\frac{r}{cl}\geq 1$. Notice that $a_1\leq a'_1$ and $a_3\leq a'_3$, combining \ref{eq:lis_lb_1} and \ref{eq:lis_lb_2}, we have 
	$$\lis(\sigma(B)) = a_1+a_2+a_3 + 1 \leq \frac{r}{c} + \frac{r}{l}.$$
	
	This show that if $g^{(l)}(\tilde{B}) = 0$, we have $\lis(\sigma(B))\geq \alpha r$. And if $g^{(l)}(\tilde{B}) = 1$, $\lis(\sigma(B)) \leq (\frac{1}{c} + \frac{1}{l})r$. Here, $c$ and $l$ can be any large constant up to our choice and $\alpha\in(1/2, 1)$ is fixed. For any $\eps>0$, we can choose $c$ and $l$ such that $(1+\eps)(\frac{1}{c} + \frac{1}{l})\leq \alpha$. This gives us a reduction from computing $g^{(l)}(\tilde{B})$ to compute a $(1+\eps)$-approximation of $\lis(\sigma(B))$.

	In the $t$-party game for computing $g^{(l)}(\tB)$, each player holds one column of $\tB$. Thus, player $P_i$ also holds $C_i(B)$ since $C_i(B)$ is determined by $C_i(\tB)$. If the $t$ players can compute a $(1+\eps)$ approximation of $\sigma(B)$ in the one-way communication model, we can distinguish the case of $g^{(l)}(\tB)=  0$ and $g^{(l)}(\tB)=1$. Thus, any $R$ passes deterministic streaming algorihtm that approximate $\lis$ within a $1+\eps$ factor requires at least $CC^{max}_t(g^{(l)})$. By Lemma~\ref{lem:g}, $CC^{max}_t(g^{(l)}) = \Omega(s/R) =  \Omega(|\Sigma|/R)$.

\end{proof}

%
%
%
%
%

\subsection{Longest Non-decreasing Subsequence}
We can proof a similar space lower bound for approximating the length of longest non-decreasing subsequence in the streaming model.

\begin{lemma}
	\label{lem:lns_length}
	For $x\in\Sigma^n$ with $|\Sigma| = O(\sqrt{n})$ and any constant $\eps>0$, any deterministic algorithm that makes $R$ passes of $x$ and outputs a $(1+\eps)$-approximation of $\lns(x)$ requires $\Omega(|\Sigma|/R)$ space. 
\end{lemma}

\begin{proof}[Proof of Lemma~\ref{lem:lns_length}]
	In this proof, we let the alphabet set $\Sigma = \{1,2,\dots, (c+1)r\}$. The size of the alphabet is $(c+1)r$. Without loss of generality, we can assume $cr^2\leq n$.  
	
	Let $\tilde{B}\in \{0,1\}^{s\times t}$ be a binary matrix such that for any row of $\tilde{B}$, say $R_i(\tilde{B})$ for $i\in [s]$, either there are at least $l$ 0s between any two 1's, or, $R_i(\tilde{B})$ has at least $\alpha t$ 1's. 
	
	Similar to the proof of Lemma~\ref{lem:lis_length}, we show a reduction from computing $g((\tilde{B}))$ to approximating the length of $\lns$. In the following, we set $s = r$ and $t = cr$ for some constant $c>0$ such that $t$ is an integer. 
	
	Let us consider a matrix $B\in \Sigma^{s\times t}$ such that for any $(i,j)\in [s]\times[t]$:
	\begin{equation}
		B_{i,j} = \begin{cases}
			i, &\text{ if $\tilde{B}_{i,j} = 1$,} \\
			cr+r + 1-j, &\text{ if $\tilde{B}_{i,j} = 0$.} 
		\end{cases}
	\end{equation}

	Thus, for all positions $(i,j)$ such that  $\tilde{B}_{i,j} = 0$, we know $B_{i,j} > r$. Also, for $1\leq j< j'\leq cr$, assume $\tilde{B}_{i,j} = \tilde{B}_{i',j'}$ for some $i, i'\in [r]$, we have $B_{i,j}> B_{i',j'}$. For positions $(i,j)$ such that $\tilde{B}_{i,j} = 1$, we have $B_{i,j} = i$.

	Consider the sequence $\sigma(B) = C_1(B)\circ C_2(B) \circ \cdots \circ C_r(B)$ where $C_i(B) = B_{1,i}B_{2,i}\dots B_{r,i}$ is the concatenation of symbols in the $i$-th column of matrix $B$. 
	
	We now show that if $g^{(l)}(\tilde{B}) = 0$, $\lns(\sigma)\leq 2r + \frac{cr}{l}$ and if $g(\tilde{B}) = 1$, $\lns(\sigma) \geq \alpha cr$. \footnote{Here, we assume $n  = |\sigma| = cr^2$. If $n>cr^2$, we can repeat each symbol in $\sigma$  $\frac{n}{cr^2}$ times and show $g^{(l)}(\tilde{B}) = 0$, $\lns(\sigma)\leq \Big(2r + \frac{cr}{l}\big)\frac{n}{cr^2}$ and if $g^{(l)}(\tilde{B}) = 1$, $\lns(\sigma) \geq \big(\alpha cr\big)\frac{n}{cr^2}$. The proof is the same.}
	
	If $g^{(l)}(\tilde{B}) = 0$, consider a longest non-decreasing subsequence of $\sigma$ denoted by $\sigma'$. Than $\sigma'$ can be divided into two parts $\sigma'^1$ and $\sigma'^2$ such that $\sigma'^1$ consists of symbols from $[r]$ and $\sigma'^2$ consists of symbols from $\{r+1, r+2, \dots, cr\}$. Similar to the proof of Lemma~\ref{lem:lis_length}, $\sigma'^1$ corresponds to a path on matrix $\tilde{B}$. Since we are concatenating the columns of $B$, the path can never go left. Each step is either go right at least $l+1$ positions since there are at least $l$ 0's between any two 1's in the same row of $\tilde{B}$, or, go downward to another row. Thus, the total number of steps is at most $\frac{t}{l}+ r$ since $\tilde{B}$ has $r$ rows and $t$ columns. For $\sigma'^2$, if we restricted $B$ to positions that are in $\{r+1, r+2, \dots, (c+1)r\}$, symbols in column $j$ of $B$ must be smaller than symbols in column $j'$ if $j<j'$. Thus, the length of $\sigma'^2$ must be at most the length of $C_j(B)$ for any $j\in [cr]$, which is at most $r$. Thus the length of $\sigma$ is at most $2r + \frac{cr}{l}$.  
	
	If $g^{(l)}(\tilde{B}) = 1$, then we know there is some $i \in [r]$ such that row $i$ of $\tilde{B}$ constains at least $\alpha cr$ 1's. We know $B_{i,j} = i$ if $\tilde{B}_{i,j} = 1$. Thus, $R_i(B)$ contains a non-decreasing subseqeunce of length at least $\alpha cr$. Since $R_i$ is a subsequence of $\sigma$. We know $\lns(\sigma)\geq \alpha c r$.
	
	For any constant $\eps>0$, we can pick constants $c,l >1$ and $\alpha\in (1/2 , 1)$ such that $ (1+\eps)(2+c/l) \leq \alpha c$. Thus, if we can approximate $\lns(\sigma)$ to within a $1+\eps$ factor, we can distinguish the case of $g^{(l)}(\tilde{B}) = 0$ and $g^{(l)}(\tilde{B}) = 1$. The lower bound then follows from Lemma~\ref{lem:g}. 
	

\end{proof}

\begin{lemma}
	\label{lem:lns_length_eps}
	Let $x\in \Sigma^n$ and $\eps>0$ such that $|\Sigma|^2/\eps = O(n)$. Then any deterministic algorithm that makes constant pass of $x$ and outputs a $(1+\eps)$ approximation of $ \lns(x)$ takes $\Omega(r\log \frac{1}{\eps})$ space. 
\end{lemma}

\begin{proof}
	Let the alphabet set $\Sigma = \{a_1, a_2, \dots, a_r\} \cup \{b_1, b_2, \dots, b_r\}$ and assume that 
	\[b_r<b_{r-1}<\cdots<b_1 <a_1<a_2<\cdots<a_r\].
	
	We assume $t\geq 2r/\eps$. Since we assume $t = \Omega(r/\eps)$, if $t< 2r/\eps$, we can use less symbols in $\Sigma$ for our construction and this will not affect the result. Let $l = \Theta(1/\eps)$ that can be divided by $4$. For any two symbols $a, b$, we consider the set $A_{a,b}\in \{a,b\}^l$ such that 
	\[A_{a,b} = \{  a^{\frac{3}{4}l+t} b^{\frac{1}{4}l - t}\vert 1\leq t \leq l/4 \} \]
	
	We define a function $f$ such that for any $\sigma = a^{\frac{3}{4}l+t} b^{\frac{1}{4}l - t}\in A_{a,b}$, $f(\sigma) = a^{\frac{3}{4}l - t} b^{\frac{1}{4}l +t}$. Thus, for any $\sigma\in A_{a,b}$, the string $\sigma\circ f(\sigma)$ has exactly $ \frac{3}{2}l$ $a$ symbols and $\frac{1}{2}l$ $b$ symbols. 
	
	We let $\overline{A_{a, b}} = \{f(\sigma)\vert \sigma\in A_{a,b}\}$. We know $\lvert A_{a,b}\rvert = \lvert \overline{A_{a, b}} \rvert  = l/2 = \Theta(1/\eps)$.
	
	Consider an error-correcting code $T_{a,b} \subset S_{a,b}^r$ over alphabet set $S_{a,b}$ with constant rate $\alpha$ and constant distance $\beta$. We can pick $\alpha = 1/2$ and $\beta = 1/3$. Then the size of the code $T_{a,b}$ is $|T_{a,b} |= |S_{a,b}|^{\alpha r} = 2^{\Omega(lr)}$.  For any code word $\chi  = \chi_1 \chi_2 \cdots \chi_r \in T_{a,b}$ where $\chi_i\in S_{a,b}$, we can write 
	
	$$\nu (\chi) = (\chi_1, f(\chi_1), \chi_2, f(\chi_2), \dots, \chi_r, f(\chi_r)).$$
	
	Let $W = \{\nu(\chi)| \chi\in T_{a,b}\}\in (A_{a,b}\times \overline{A_{a, b}})^r $. Since the code has constant rate $\alpha$, the size of $W$ is $(l/4)^{\alpha r}$.

	Let $\beta = 1/3$. We can define function $h:(A_{a,b}\times \overline{A_{a, b}})^r \rightarrow \{0,1\}$ such that
	\begin{equation}
		h(w) = \begin{cases}
			0, &\text{ if } \forall i\in [r], \;  f(w_{2i-1}) = w_{2i},  \\
			1, &\text{ if for at least $\beta r$ indices $i\in [r]$, } w_{2i-1}\circ w_{2i} \text{ contains more than $\frac{3}{2}l$ $a$ symbols.} \\
			\text{undefined}, &\text{ otherwise}. 
		\end{cases}
	\end{equation}
	
	\begin{claim}
		\label{claim:W_lnst}
		$W$ is a fooling set for $h$. 
	\end{claim}
	
	\begin{proof}
		Let $w$ and $w'$ be two distinct elements in $W$. Let $w = (w_1, f(w_1), w_2, f(w_2), \dots, w_r, f(w_r))$ and $w' = (w'_1, f(w'_1), w'_2, f(w'_2), \dots, w'_r, f(w'_r))$. By the definition, we know for at least $\beta r$ positions $i\in [r]$,  we have $w_i\neq w'_i$. Also, by the construction of set $S_{a,b}$, if $w_i\neq w'_i$, then one of $w_i\circ f(w'_i)$ and $w'_i\circ f(w_i)$ has more than $\frac{3}{2}l$ $a$ symbols.
		
		$v$ in the span of $w$ and $w'$ if $v = (v_1, \bar{v}_1, \dots, v_r, \bar{v}_r)$ such that $v_i\in \{w_1, w'_i\}$ and $\bar{v}_{i}\in \{f(w_i), f(w'_i)\}$.  We can find a $v$ in the span of $w$ and $w'$ such that $h(v) = 1$. 
		
	\end{proof}
	
	For $i\in [r]$, we can define $A_{a_i, b_i}$, $\overline{A_{a_i, b_i}}$, $W_i$, $h_i$ similarly except the alphabet is $\{a_i, b_i\}$ instead of $\{a,b\}$.

	Consider a matrix $B$ of size $r\times 2r$ such that $B_{i,2j-1}\in A_{a_i,b_i}$ and $B_{i,j}\in \overline{A_{a_i, b_i}}$. We define a function $g$ such that 
	
	\begin{equation}
		g(B) = h_1(R_1(B))\vee h_2(R_2(B))\vee \cdots \vee h_r(R_r(B)).
	\end{equation}  
	
	In the following, we consider a $2r$-party one way game. In the game, player $i$ holds $C_i(B)$. The goal is to compute $g(B)$. 
	
	\begin{claim}
		The set of all matrix $B$ such that $R_i(B)\in W_i$ for $i\in [r]$ is a fooling set for $g$. 
	\end{claim}
	\begin{proof}
		For any two matrix $B_1 \neq B_2$ such that $R_i(B_2), R_i(B_)\in W_i \; \forall\; i\in [r]$. We know $g(B_1) = g(B_2) = 0$. There is some row $i$ such that $R_i(B_1)\neq R_i(B_2)$. We know there is some elements $v$ in the span of $R_i(B_1)$ and $R_i(B_2)$, such that $h_i(v) = 1$ by Claim~\ref{claim:W_lnst}. Thus, there is some element $B' $ in the span of $B_1$ and $B_2$ such that $g(B') = 1$. 
	\end{proof}
	
	Thus, we get a $2$-fooling set for function $g$ in the $2r$-party setting. The size of the fooling set is $|W|^r =(l/4)^{\alpha r^2})$. Thus, $CC^{tot}_{2r}(g) = \log (|W|^r) = \Omega(r^2\log\frac{1}{\eps})$ and $CC^{max}_{2r}(g)  = CC^{tot}_{2r}(g) /2r = \Omega(r\log\frac{1}{\eps})$.
	
	Consider a matrix $\tB$ of size $r\times 3r$ such that $\tB$ is obtained by inserting a column of $a$ symbols to $B$ at every third position. Thus, 
	
	\begin{equation*}
		\tB = \begin{pmatrix}
			B_{1,1}	& B_{1,2} & a_{1}^{2l}  &B_{1,3}	& B_{1,4} & a_{1}^{2l} & \cdots & B_{1,2r-1}  & B_{1,2r} &  a_{1}^{2l}\\
			B_{2,1}	& B_{2,2}& a_{2}^{2l}  &B_{2,3}	& B_{2,4} & a_{2}^{2l} & \cdots & B_{2,2r-1} & B_{2,2r} &  a_{2}^{2l}\\
			\vdots	& \vdots & \vdots  			  &\vdots	& \vdots& \vdots & \ddots & \vdots       &\vdots    & \vdots \\
			B_{r,1}	& B_{r,2} &  a_{r}^{2l} &B_{r,3}	& B_{r,4} & a_{2}^{2l} & \cdots & B_{r,2r-1}  & B_{r,2r}  &  a_{r}^{2l}
		\end{pmatrix}_{(r\times 3r)}
	\end{equation*}

	Let $x = C_1(\tB)\circ C_2(\tB)\circ \cdots \circ C_{3r}(\tB)$.
	
	We now show how to reduce computing $g(B)$ to approximating the length of $\lns(x)$. We claim the following. 
	
	\begin{claim}
		If $g(B) = 0$, $\lns(x)\leq \frac{11}{2}rl-2l$. If $g(B)= 1$, $\lns(x)\geq \frac{11}{2}rl-2l+ \beta r -1$.
	\end{claim}
	
	\begin{proof}
		
		We first divide $x$ into $r$ parts such that
		
		\[x = x^1\circ x^2\circ \cdots x^r\]
		
		where $x^i = C_{3i-2}(\tB)\circ C_{3i-1}(\tB)\circ C_{3i}(\tB)$. We know $C_{3i-2}(\tB) = C_{2i-1}(B)$ is the $(2i-1)$-th column of $B$, $C_{3i-1}(\tB) = C_{2i}(B)$ is the $2i$-th column of $B$ and $C_{3i}(\tB) = a_1^{2l}a_2^{2l}\cdots a_r^{2l}$.
		
		If $g(B) = 0$, we show $\lns(x) = \frac{11}{2}rl-2l$. Let $\Sigma$ be a longest non-decreasing subsequence of $x$. We can divide $\sigma$ into $r$ parts such that $\sigma^i$ is a subsequence of $x^i$. For our analysis, we  set $t_0 = 1$. If $\sigma^i$ is empty or contains no $a$ symbols, let $t_i = t_{i-1}$. Otherwise, we let $t_i$ be the largest integer such that $a_{t_i}$ appeared in $\sigma^i$.We have
		
		\[1 = t_0 \leq t_1\leq t_2\leq \cdots \leq t_r\leq r\]
		
		We now show that, for any $i$, the length of $\sigma^i$ is at most  $\frac{3}{2}l  +2(t_i-t_{i-1}+1)l$. To see this, if $t_i = t_{i-1}$, $x^i$ to $\sigma$. Since $g(B) = 0$, there are exactly $\frac{3}{2}l+2l$ $a_{t_i}$ symbols in $x^i$. Thus $|\sigma^i|\leq \frac{3}{2}l+2l$. 
		
		If $t_i>t_{i-1}$, if there is some $a_t$ symbol in the   $C_{3i-2}(\tB)\circ C_{3i-1}(\tB)$ included in $\sigma^i$ for some $t_{i-1}<t\leq t_i$. Then $\sigma^i$ can not include $a_{t'}$ symbols in $C_{3i}(\tB)$ for all $t_{i-1}\leq t'< t$. Also for any $t_{i-1}\leq t'< t$, the number of $a_{t'}$ symbols in $C_{3i-2}(\tB)\circ C_{3i-1}(\tB)$ is $\frac{3}{2} l$ but the number in $C_{3i}(\tB)$ is $2l$. Thus, the optimal strategy it to pick the $\frac{3}{2}l$ $a_{t_{i-1}}$ symbols in $C_{3i-2}(\tB)\circ C_{3i-1}(\tB)$ and then add $2(t_i-t_{i-1}+1)l$ symbols ($2l$ $a_t$'s for $t_{i-1}\leq t\leq t_i$) in $C_{3i}(\tB)$ to $\sigma^i$.

		In total, length of $\sigma$ is at most 
		\[\sum_{i = 1}^r \sigma^i  \leq \frac{11}{2}rl-2l \]
		
		Thus, if $g(B) = 0$, we know $\lns(x)\leq \frac{11}{2}rl-2l$.
		
		If $g(B) =1$, that means there is some row $i$ of $B$ such that for at least $\beta r $ positions $j\in [r]$, the number of $a_i$ symbols in $B_{i,2j-1}\circ B_{i, 2j}$ is at least $\frac{3}{2}l+1$. 
		
		We now build a non-decreasing subsequence $\sigma$ of $x$ with length at least $\frac{11}{2}rl-2l+\beta r-1$. We set $\sigma$ to be empty initially. There are at least $\frac{3}{2}l$ $a_1$ symbols in $B_{1,1}\circ B_{1,2}$. We add all of them to $\sigma$.   Then, we add all the $a_t$ symbols for $1\leq t\leq i$ in $C_3{\tB}$ to $\sigma$. This adds  $2il$ symbols to $\sigma$
		
		We consider the string 
		
		\[\tilde{x} = B_{i,3}\circ B_{i,4}\circ \circ a_i^{2l} \circ \cdots \circ B_{i,2r-1}\circ B_{i,2r} \circ a_i^{2l}.\] 
		
		It is a subsequence of $x^2\circ \cdots x^r$. There are $(r-1)(\frac{3}{2}l+ 2l +\beta r-1) $ $a_i$ symbols in $\tilde{x}$. This is because for at least $\beta r -1$ positions $j\in \{2,3,\dots, r\}$, we know $B_{i,2j-1}\circ B_{i, 2j}$ contains more than $\frac{3}{2}l$ $a_i$ symbols. For the rest of the positions, we know $B_{i,2j-1}\circ B_{i, 2j}$ contains $\frac{3}{2} l$ $a_i$ symbols. 
		
		Finally, we add all the $a_t$ symbols for $i< t\leq n$ in $C_{3r}{\tB}$ to $\sigma$. This adds another $2(r-i)l$ symbols to $\sigma$. The sequence $\sigma$ has length at least $\frac{11}{2}rl-2l+ \beta r -1 $.
	\end{proof}
	
	Assume $l = \lambda /\eps$ where $\lambda $ is some constant.  Let $\eps' = \frac{\beta}{10\lambda}\eps = \Theta(\eps)$. If we can give a $1+\eps'$ approximation of $\lns(x)$, we can distinguish $g(B) = 0$ and $g(B) = 1$. 
	
	By the fact that $CC^{max}_{2r}(g) = \Omega(r\log \frac{1}{\eps})$, any deterministic streaming algorithm with constant passes of $x$ that approximate $\lns(x)$ within a $(1+\eps')$ factor requires $\Omega(r\log \frac{1}{\eps})$ space. 
\end{proof}

\subsection{Longest Non-decreasing Subsequence with Threshold}

We now consider a variant of $\lns$ problem we call longest non-decreasing subsequence with threshold ($\lnst$). In this section, we assume the alphabet is $\Sigma = \{0,1,2,\dots, r\}$. In this problem, we are given a sequence $x\in \Sigma^n$ and a threshold $t\in [n]$, the longest non-decreasing subsequence with threshold $t$ is the longest non-decreasing subsequence of $x$ such that each symbol appeared in it is repeated at most $t$ times. We denote the length of such a subsequence by $\lnst(x,t)$. 

We first give our lower bounds for $\lnst$.

\begin{theorem}[Restatement of Theorem~\ref{thm:LNSTmain}]
	\label{thm:lnst_lb}
	Given an alphabet $\Sigma$, for deterministic $(1+\eps)$ approximation of $\lnst(x, t)$ for a string $x\in \Sigma^n$ in the streaming model with $R$ passes, we have the following space lower bounds:
	\begin{enumerate}
		
		\item $\Omega(\min(\sqrt{n}, |\Sigma|)/R)$ for any constant $t$ (this includes $\lis$), when $\eps$ is any constant.
		
		\item $\Omega(|\Sigma|\log(1/\eps)/R)$ for $t\geq n/|\Sigma|$ (this includes $\lns$), when $|\Sigma|^2/\eps = O(n)$.
		
		\item $\Omega\left (\frac{\sqrt{|\Sigma|}}{\eps}/R \right ) $ for $t = \Theta(1/\eps)$, when $|\Sigma|/\eps = O(n)$.
	\end{enumerate}
\end{theorem}

Theorem~\ref{thm:lnst_lb} is a result of the following four Lemmas.

\begin{lemma}
	\label{lem:lnst1}
	Let $x\in \Sigma^n$ and $|\Sigma| =  r \leq \sqrt{n})$ and $\eps>0$ be any constant. Let $t \geq n/r$. Then any $R$-pass deterministic algorithm a $1+\eps$ approximation of $\lnst(x,t)$ takes $\Omega(r/R)$ space. 
	
\end{lemma}

\begin{proof}
	
	 In the proof of Lemma~\ref{lem:lns_length}, each symbol in sequence $\sigma$ appeared no more than $\max(r,n/r) $ times. If $r\leq \sqrt{n}$ and $t\geq n/r$, we have $\lns(\sigma) = \lnst(\sigma, t)$. The lower bound follows from our lower bound for $\lns$ in Lemma~\ref{lem:lns_length}. 

\end{proof}

\begin{lemma}
	\label{lem:lnst4}
	Let $x\in \Sigma^n$ and $|\Sigma| =  r $ and $\eps>0$ be any constant. Let $t $ be some constant. Then any $R$ pass deterministic algorithm outputs a $1+\eps$ approximation of $\lnst(x,t)$ takes $\Omega(\min(\sqrt{n}, r))$ space. 
	
\end{lemma}

\begin{proof}
	Let $\sigma\in \Sigma^{n/t}$. If we repeat every symbol in $\sigma$ $t$ times, we get a string $\sigma'\in \Sigma$. Then, $\lis(x) = \frac{1}{t}\lnst(x,t)$. When $t$ is a constant, the lower bound follows from lower bounds for $\lis$ in Lemma~\ref{lem:lis_length}.
\end{proof}

\begin{lemma}
	\label{lem:lnst2}
	
	Let $x\in \Sigma^n$. Assume $|\Sigma| =  r $ and $\eps>0$ such that $r^2/\eps = O(n)$. Let $t = \Theta(r/\eps)$. Then any $R$-pass deterministic algorithm that outputs a $1+\eps$ approximation of $\lnst(x,t)$ takes $\Omega(r\log1/\eps$ space. 
\end{lemma}

\begin{proof}
	In the proof of Lemma~\ref{lem:lns_length_eps}, we considered strings $x$ where each symbol is appeared at most $4rl$ times  where $l = 1/\eps$. We $t = 4rl = \Theta(r/\eps)$. Thus, $\lnst(x, t) = \lns(x)$. The lower bound follows from Lemma~\ref{lem:lns_length_eps}.

\end{proof}

\begin{lemma}
	\label{lem:lnst3}
	Assume $x\in \Sigma^n$, $\eps>0$, and $\frac{|\Sigma|}{\eps}   = O(n) $ . Let $t = \Theta(1/\eps)$, any $R$-pass deterministic algorithm that outputs an $1+\eps$ approximation of $\lnst(x,t)$ requires $\Omega(\frac{\sqrt{|\Sigma|}}{\eps})$ space.
\end{lemma}

\begin{proof}
	The lower bound is achieved using the same construction in the proof of Lemma~\ref{lem:lcs_approx_r_eps} with some modifications. In Section~\ref{sec:lcs_exact_det}, for any $n$, we build a fooling set $S_{a,b}\subset \{a,b\}^{n/2}$ and $\overline{S_{a,b}}\subseteq \{a,b\}^{n/2+5}$ (we used the notation $S$ instead of $S_{a,b}$ in Section~\ref{sec:lcs_exact_det}) such that $\overline{S_{a,b}} = \{f(x)| x\in S_{a,b}\}$ where the function $f$ simply delete the first 10 $b$'s in $x$.  We prove Lemma~\ref{lem:S} and Claim~\ref{claim:|S|}. We modify the construction of $S_{a,b}$ and $\overline{S_{a,b}}$ with three symbols $a,b,c$. The modification is to replace every $a$ symbols in the strings in $\overline{S_{a, b}}$ with $c$ symbols. This gives us a new set $\overline{S_{b, c}}\subset \{b,c\}^{n/2-5}$. Thus, the function $f$ now becomes, on input $x\in S_{a,b}$, first remove 10 $b$'s and then replace all $a$ symbols with $c$. 
	
	Let $y = a^{n/3}b^{n/3}c^{n/3}$. We can show that for every $x\in S_{a,b}$, 
	
	\[\lcs(x\circ f(x), y) = \frac{n}{2}+5.\]  
	
	Also, for any two distinct $x^1, x^2\in S_{a,b}$, 
	
	\[\max\{\lcs(x^1\circ f(x^2), y),\lcs(x^2\circ f(x^1), y)\} > \frac{n}{2}+5.\]
	
	The proof is the same as the proof of Lemma~\ref{lem:S}.
	
	
	We now modify the construction in the proof of Lemma~\ref{lem:lcs_approx_r_eps}. Since $t = \Theta(1/\eps)$, we choose $n_\eps $ such that  $n_\eps/3 =  t$. More specifically, for the matrix $\bB$ (see equation~\ref{eq:bB}), we do the following modification. For any $i, j\in [r]$, $y^{i,2j-1}\circ y^{i,2j} = a_i^{n_\eps/3}b_i^{n_\eps/3}a_i^{n_\eps/3}$. We replace $y^{i,2j-1}\circ y^{i,2j}$ with string $d_{i,j}^{n_\eps/3}e_{i,j}^{n_\eps/3}f_{i,j}^{n_\eps/3}$. Here, $d_{i,j}, e_{i,j},f_{i,j}$ are three symbols in $\Sigma$ such that $d_{i,j}< e_{i,j}< f_{i,j}$. In the matrix $\tB$ (see equation~\ref{eq:tB}), for $B_{i,2j-1}$, we replace every $a_i$ symbols with $d_{i,j}$ and $b_i$ symbols with $e_{i,j}$. For $B_{i,2j}$, we place $a_i$ symbols with $f_{i,j}$ and $b_i$ symbols with $e_{i,j}$.
	
	We also replace the $c_{j}^{n_\eps}$ block in  the $i$-th row of both $\bB$ and $\tB$ with $c_{i,j,1}^{n_\eps/3}c_{i,j,2}^{n_\eps/3}c_{i,j,2}^{n_\eps/3}$. Here, $c_{i,j,1}, c_{i,j,2}.c_{i,j,3}$ are three different symbols in $\Sigma$ such that $c_{i,j,1}< c_{i,j,2}<c_{i,j,3}$.
	
	$y$ is the concatenation of rows of $\bB$. We require that symbols appeared earlier in $y$ are smaller. Since $y$ is the concatenation of all symbols that appeared in $x$ and each symbol in $y$ repeated $t = n_\eps$ times. After the symbol replacement, we have $\lcs(x,y)  = \lnst(x,t)$. Also notice that the alphabet size is now $O(r^2)$ instead of $r$ in the proof of Lemma~\ref{lem:lcs_approx_r_eps}. The $\Omega(\frac{|\Sigma|}{\eps})$ space lower bound then follows from a similar analysis in the proof of Lemma~\ref{lem:lcs_approx_r_eps}.

\end{proof}

We now prove a trivial space upper bound for $\lnst$.

%

\begin{algorithm}
	\SetAlgoLined
	\DontPrintSemicolon
	\KwIn{An online string $x\in \Sigma^n$ where $\Sigma = [r]$.}

	Let $D = \{0, \frac{\eps}{r}t, 2\frac{\eps}{r}t, \dots, t\}$ \Comment*[f]{if $r/\eps \geq t$, $D = [t]\cup \{0\}$.}\\ 
	$S\leftarrow \emptyset$. \\
	\ForEach{$d = (d_1,\dots, d_r)\in\mathcal{D}^r$}{
		\Comment*[f]{try every $d = (d_1,\dots, d_r)\in\mathcal{D}^r$ in parallel}\\
		
		Let $\sigma(d) = 1^{d_1}2^{d_2}\cdots r^{d_r}$. \\
		If $\sigma(d)$ is a subsequence of $x$, add $d$ to $S$.
	}
	in parallel, compute $\lns(x)$ exactly with an additional $\tilde{O}(r)$ bits of space. \\
	\uIf{$\lns(x)\leq t$}{
		\Return{$\lns(x)$.}\\
	}
	\Else{
		\Return{$\max_{d\in S}\big(\sum_{i\in [r]} d_i\big)$}	
	}
	\caption{Approximate the length of the longest non-decreasing subsequence with threshold}
	\label{algo:lnst}
\end{algorithm}

\begin{lemma}
	\label{lem:upper_bound}
	Given an alphabet $\Sigma$ with $\lvert \Sigma \rvert = r$.\ For any $\eps>0$ and $t \geq 1$, there is a one-pass streaming algorithm that computes a $(1+\eps)$ approximation of $\lnst(x, t)$ for any $x\in \Sigma^n$ with $\tilde{O}\Big(\big(\min(t, r/\eps )+1\big)^r\Big)$ space. 
\end{lemma}

\begin{proof}
	We assume $\Sigma = [r]$ and the input is a string $x\in \Sigma^n$. Let $\sigma$ be a longest non-decreasing subsequence of $x$ with threshold $t$ and we can write $\sigma = 1^{n_1}2^{n_2}\cdots r^{n_r}$ where $ n_i$ is the number of times symbol $i$ repeated and $0\leq n_i\leq t$. 
	
	We let $D$ be the set $\{0, \frac{\eps}{r}t, 2\frac{\eps}{r}t, \dots, t\}$. Thus, if $r/\eps \geq t$, $D = [t]\cup \{0\}$. Consider the set $\mathcal{D}$ such that $\mathcal{D} = D^r$. Thus, $\lvert \mathcal{D}\rvert = \big(\min(t+1, r/\eps + 1 )\big)^r$. For convenience, we let $f(d) = \sum_{i = 1}^r d_i$. We initialize the set $S$ to be an empty set. For each $d\in \mathcal{D}$, run in parallel, we check is $\sigma(d) = 1^{d_1}2^{d_2}\cdots r^{d_r}$ is a subsequence of $x$. If $\sigma(d)$ is a subsequence of $x$, add $d$ to $S$. 
	
	Meanwhile, we also compute $\lns(x)$ exactly with an additional $ \tilde{O}(r)$ bits of space. If $\lns<t$, we output $\lns(x)$. Otherwise, we output $\max_{d\in S}\big(\sum_{i\in [r]} d_i\big)$. 
	
	We now show the output is a $(1-\eps)$-approximation of $\lnst(x,t)$. Let $d'_i$ be the largest element in $\mathcal{D}$ that is no larger than $n_i$ for $i\in[r]$ and $d' = (d'_1, d'_2, \dots, d'_r)$. 	
	
	If $t \leq |\sigma|$, we have $0\leq n_i-d'_i\leq \eps/r t$ and 
	
	\[\lvert \sigma \rvert -f(d') \leq \sum_{i = 1}^r(n_i-d'_i)\leq \eps t \leq \eps \lvert \sigma \rvert \] 
	
	since we assume $t \leq |\sigma|$. Thus, $f(d')\geq (1-\eps)\lvert \sigma \rvert.$ Note that $1^{d'_1}2^{d'_2}\cdots r^{d'_r}$ is a subsequence of $\sigma$ and thus also a subsequence of $x$. Thus, we add $d'$ to the set $S$. That means, the final output will be at least $f(d')$. Denote the final output by $l$, we have
	
	\[l \geq f(d') \geq (1-\eps)\lvert \sigma \rvert.\]
	
	On the otherhand, the the output is $f(d)$ for some $d = (d_1, \dots, d_r)$, we find a subsequence $1^{d_1}2^{d_2}\cdots r^{d_r}$ of $x$ and thus $f(d)\leq \lvert \sigma \rvert$. We know  
	
	\[l = \max_{d\in S} f(d) \leq \lvert \sigma \rvert.\]
	
	If $t \geq \lns(x)$, no symbol in $\sigma$ is repeated more than $t-1$ times. Thus, $\lnst(x, t) = \lns(x)$.  Thus, we output $\lns(x)$. Notice that if $ \lns(x)> t$, either some symbol in the longest non-decreasing subsequence is repeated more than $t$ times, or $\lnst(x,t) = \lns(x)$. In either case, we have $t \leq |\sigma|$ and $\max_{d\in S}\big(\sum_{i\in [r]} d_i\big)$ is a $1-\eps$ approximation of $\lns(x)$.

\end{proof}


\section{Algorithms for Edit Distance and LCS}
\label{sec:algorithm}

\subsection{Algorithm for edit distance}

In this section, we presents our space efficient algorithms for approximating edit distance in the asymmetric streaming setting. 

We can compute edit distance exactly with dynamic programming using the following recurrence equation. We initialize $A(0,0) = 0$ and $A(i,0) = A(0,i) = i$ for $i\in [n]$. Then for $0\leq i,j \leq n$, 

\begin{equation}
	\label{eq:ed_dp}
	A(i,j) = \begin{cases}
		A(i-1,j-1) , & \text{if $x_i = y_j$.} \\
		\min\begin{cases}
			A(i-1,j-1)+1, \\
			A(i,j-1)+ 1, \\
			A(i-1,j) + 1, 
		\end{cases}, & \text{if $x_i \neq y_j$.}
	\end{cases}
\end{equation}

Where $ A(i,j)$ is the edit distance between $x[1:i]$ and $y[1:j]$. The three options in the case $x_i\neq y_j$ each corresponds to one of the edit operations (substitution, insertion, and deletion). Thus, we can run a dynamic programming to compute matrix $A$. When the edit distance is bounded by $k$, we only need to compute the diagonal stripe of matrix $A$ with width $O(k)$. Thus, we have the following. 


\begin{lemma}
	\label{lem:O(k)_space_ed}
	Assume we are given streaming access to the online string $x \in \Sigma^*$ and random access to the offline string $y\in \Sigma^n$. We can check whether $\ed(x,y)\leq k $ or not, with $O (k\log n)$ bits of space in $O(nk)$ time. If $\ed(x,y)\leq k$, we can compute $\ed(x,y)$ exactly with $O(k\log n)$ bits of space in $O(nk)$ time. 
\end{lemma}

\begin{claim}
	\label{claim:ed_operation}
	For two strings $x, y\in \Sigma^n$, assume $\ed(x,y) < k$, we can output the edit operations that transforms $x $ to $y$ with $O(k\log n)$ bits of space in $O(nk^2)$ time. 
\end{claim} 

\begin{proof}
	The idea is to run the dynamic programming $O(k)$ times. We initialize $n_1 = n_2 = n$. If $x_{n_1} = y_{n_2}$, we find the largest integer such that $x[n_1-i:n_1] = y[n_2-i:n_2]$ and set $n_1 \leftarrow n_1-i$ and $n_2 \leftarrow n_2-i$ and continue. If $x_{n_1} \neq y_{n_2}$, we compute the the elements $A(n_1,n_2-1)$,  $A(n_1-1,n_2-1)$, and $A(n_1-1,n_2)$. By \ref{eq:ed_dp}, we know there is some $(n'_1, n'_2)\in \{(n_1,n_2-1), (n_1-1,n_2-1), (n_1-1,n_2)\}$ such that  $A(n'_1, n'_2) = A(n_1,n_2)-1$. We set $n_1 \leftarrow n'_1$ and $n_2 \leftarrow n'_2$, output the corresponding edit operation and continue. We need to run the dynamic programming $O(k)$ times. The running time is $O(nk^2)$.
\end{proof}

\begin{lemma}
	\label{lem:ed_divide}
	For two strings $x, y \in \Sigma^n$, assume $d = \ed(x,y)$. For any integer $1\leq t \leq d$, there is a way to divide $x$ and $y$ each into $t$ parts so that $x = x^1\circ x^2 \circ \cdots \circ x^t$ and $y = y^1\circ y^2 \circ \cdots \circ y^t$ (we allow $x^i $or $y^i$ to be empty for some $i$), such that $d = \sum_{i = 1}^t \ed(x^i, y^i)$ and $\ed(x^i, y^i)\leq \lceil\frac{d}{t}\rceil $ for all $i \in [t]$.
\end{lemma}

\begin{proof}
	[Proof of Lemma~\ref{lem:ed_divide}]
	Since $\ed(x,y) = d$, we can find $d$ edit operations on $x$ that transforms $x$ into $y$. We can write these edit operations in the same order as where they occured in $x$. Then, we first find the largest $i_1$ and $j_1$, such that the first $\lceil \frac{d}{t}\rceil$ edit operations transforms $x[1:i_1]$, to $y[1:j_1]$. Notice that $i_1$ (or $j_1$) is 0 if the first $\lceil \frac{d}{t}\rceil$ edit operations insert $\lceil \frac{d}{t}\rceil$ before $x^1$ (or delete first $\lceil \frac{d}{t}\rceil$ symbols in $x$).  We can set $x^1 = x[1:i_1]$ and $y^1 = y[1:j_1]$ and continue doing this until we have seen all $d$ edit operations. This will divide $x$ and $y$ each in to at most $t$ parts. 
\end{proof}

In our construction, we utilize the following result from \cite{cheng2020space}.

\begin{lemma}
	\label{lem:smallspace_ed}
	[Theorem 1.1 of \cite{cheng2020space}]
	For two strings $x, y \in \Sigma^n$, there is a deterministic algorithm that compute a $1+o(1)$-approximation of $\ed(x,y)$ with $O(\frac{\log^4 n}{\log\log n}) $ bits of space in polynomial time. 
\end{lemma}

We now present our algorithm that gives a constant approximation of edit distance with improved space complexity compared to the algorithm in \cite{farhadi2020streaming}, \cite{cheng2020space}. The main ingredient of our algorithm is a recursive procedure called $\mathsf{FindLongestSubstring}$. The pseudocode is given in algorithm~\ref{algo:longestsubstringR}. It takes four inputs: an online string $x$, an offline string $y$, an upper bound of edit distance $u$, and an upper bound of space available $s$. The output of $\mathsf{FindLongestSubstring}$ is a three tuple: a pair of indices $(p,q)$, two integers $l$ and $d$. Througout the analysis, we assume $\eps$ is a small constant up to our choice.

\begin{algorithm}
	\SetAlgoLined
	\DontPrintSemicolon
	\KwIn{An online $x\in \Sigma^*$ with streaming access, an local string $y\in \Sigma^n$, an upper bound of edit distance $u$, and an upper bound of space available $s$.}
	\KwOut{A pair of indices $(p,q)\in [n]\times [n]$, an integer $l\in n$, and an approximation of edit distance $d$}
	\uIf{$u\leq s$}{
		\For{ $l' = u $  \textbf{to} $|x|$}{\label{algo:longestsubstring_for}
			try all $1\leq p' <q' \leq n$ such that $l'-u \leq |q'-p'+1| \leq l'+u$ and check whether $\ed(y[p':q'], x[1:l'])\leq u$. \;	
			\uIf{there exists a pair $p',q'$ such that $\ed(y[p':q'], x[1: l'])\leq u$}{
				set $p, q$ to be the pair of indices that minimizes $\ed(y[p':q'], x[1: l'])$.\;
				$d \leftarrow \ed(y[p:q], x[1: l'])$, and $l \leftarrow l'$. \;
				compute and record the edit operations that transform $y[p:q]$ to $x[1: l']$. \;
				\textbf{continue}
			}
			\Else{
				\textbf{break}
			}
			
		}
	}
	\Else{
		initialize $i = 1$, $a_i = 1$, $l = 0$.  \label{algo:findlongestsubstring_initial}\;
		\While{$a_i\leq |x|$ and $i\leq s$ \label{algo:findlongestsubstring_while}}{
		
			$(p_i, q_i), l_i, d_i \leftarrow \mathsf{FindLongestSubstring}(x[a_i:n], y, \lceil u/s\rceil , s )$. \;
			$ i \leftarrow i+1$.  \;
			$a_i \leftarrow a_{i-1} + l_i $. \;
			$l \leftarrow l+l_i$.\;
		}
		$T \leftarrow i-1$. \;
		for all $1\leq p' < q' \leq n $, use the algorithm guaranteed by Lemma~\ref{lem:smallspace_ed} to compute $\tilde{d}(p',q')$, a $1+\eps$ approximation of $ \ed(y[p':q'], y[p_1:q_1]\circ y[p_2:  q_2]\circ \cdots \circ y[p_T: q_T])$. \;
		$p,q \leftarrow \textbf{argmin}_{p',q'}\tilde{d}(p',q')$.  \Comment*[f]{$p,q$ minimizes $\tilde{d}$}\;
		$d \leftarrow \tilde{d}(p,q)+ \sum_{i = 1}^{T}d_i$. \;

	}
	\Return{$(p,q)$, $l$, $d$}
	\caption{FindLongestSubstring}
	\label{algo:longestsubstringR}
\end{algorithm}

We have the following Lemma.

\begin{lemma}
	\label{lem:findlongestsubstring}
	Let $(p,q),l, d$ be the output of  $\mathsf{FindLongestSubstring}$ with input $x, y, u, s$. Then assume $l^0$ is the largest integer such that there is a substring of $y$, say $y[p^0: q^0]$, with $\ed(x[1:l^0], y[p^0: q^0])\leq u$. Then, we have $ l\geq l^0$. Also, assume $y[p^*:q^*]$ is the substring of $y$ that is closest to $x[1:l]$ in edit distance. We have 
	\begin{equation}
		\label{eq:findlongestsubstring}
		\ed(x[1:l], y[p^*:q^*]) \leq \ed(x[1:l], y[p:q])\leq d \leq c(u,s)\ed(x[1:l], y[p^*:q^*]).
	\end{equation}
	Here $c(u,s) = 2^{O(\log_s u)}$ if $u > s$ and $c = 1$ if $ u\leq s$
\end{lemma}

\begin{proof}[Proof of Lemma~\ref{lem:findlongestsubstring}]
	
	In the following, we let $(p,q),l, d$ be the output of  $\mathsf{FindLongestSubstring}$ with input $x, y, u, s$. We let $l^0$ be the largest integer such that there is a substring of $y$, say $y[p^0:q^0]$, with $\ed(x[1:l^0], y[p^0: q^0])\leq u$ and $y[p^*:q^*]$ is the substring of $y $ that minimizes the edit distance to $x[1:l]$.
	 
	$\mathsf{FindLongestSubstring}$ first compares $u$ and $s$. If $u\leq s$, we first set $l' = u$, then we know for any $p,q\in [n]$ such that $q-p+1  = u$ , $\ed(x[1:l'], y[p,q])\leq u$ since we can transform $x[1:l']$ to $y[p:q]$ with $u$ substitutions. Thus, we are guaranteed to find such a pair $(p,q)$ and we can record the edit operation that transform $y[p:q]$ to $x[1:l]$. We set $l = l'$, $d = \ed(x[1:l'], y[p:q])$ and continue with $l'\leftarrow l'+1$. When $l'>u$ ($l = l'-1$).  we may not be able to store $x[1:l]$ in the memory for random access since our algorithm uses at most $O(s\log n)$ bits of space. However, we have remembered a pair of indices $(p,q)$ and at most $u$ edit operations that can transform $y[p:q]$ to $x[1:l]$. This allows us to query each bit of $x[1:l]$ from left to right once with $O(u+ l)$ time. Thus, for each substring $y[p:q]$ of y, we can compute its edit distance from $x[1:l']$. Once we find such a substring with $\ed(x[1:l'],y[p:q])\leq u$, by Claim~\ref{claim:ed_operation}, we can then compute the edit operations that transfom $y[p,q]$ to $x[1:l']$ with $O(u\log n)$ space. Thus, if $u\leq s$, we can find the largest integer $l$ such that there is a substring of $y$, denoted by $y[p:q]$, with $\ed(x[1:l],y[p:q])=u$ with $O(s\log n)$ bits of space. If there is no substring $y[p:q]$ with $\ed(x[1:l'],y[p:q])\leq u$, we terminate the procedure and return current $(p,q)$, $l$, and $d$.
	
	If $l^0> l$, then $x[1:l+1]$ is a substring of $x[1:l^0]$, we can find a substring of $y[p^0:q^0]$ such that its edit distance to $x[1:l+1]$ is at most $u$. Thus, $\mathsf{FindLongestSubstring}$ will not terminate at $l$. We must have $l = l^0$.
	
	Also notice that when $u\leq s$, we always do exact computation. $y[p:q]$ is the substring in $y$ that minimizes the edit distance to $x[1:l]$. Thus, Lemma~\ref{lem:findlongestsubstring} is correct when $u\leq s$.

	For the case $u> s$, $\mathsf{FindLongestSubstring}$ needs to call itself recursively. Notice that each time the algorithm calls itself and enters the next level, the upper bound of edit distance $u$ is reduced by a factor $s$. The recursion ends when $u\leq s$. Thus, we denote the depth of recursion by $d$, where $d = O(\log_s u)$.  We assume the algoirthm starts from level 1 and at level $i$ for $i\leq d$, the upper bound of edit distance becomes $\frac{u}{s^{i-1}}$.

	We prove the Lemma by induction on the depth of recursion. The base case of the induction is when $u\leq s$, for which we have shown the correctness of Lemma~\ref{lem:findlongestsubstring}. We now assume Lemma~\ref{lem:findlongestsubstring} holds when the input is $x, y , \lceil u/s\rceil , s$ for any strings $x, y$. 
	
	We first show $l\geq l^0$. Notice that the while loop at line~\ref{algo:findlongestsubstring_while} terminates when either $a_i > |x|$ or $i>s$. If $a_i>|x|$, we know $l$ is set to be $a_i-1  = |x|$. Since $l^0\leq |x|$ by definition. We must have $l\geq l^0$. 
	
	If the while loop terminates when $i>s$. By the definition of $l^0$, we know $x[1:l^0]$  and $y[p^0: q^0]$ can be divided into $s$ blocks, say, 
	\begin{align*}
		&x[1:l^0] = x[1:l^0_1]\circ x[l^0_1+1:l^0_1 + l^0_2]\circ \cdots \circ x[\sum_{i = 1}^{s-1}l^0_{i}+1:\sum_{i = 1}^{s}l^0_{i}] \\
		&y[p^0:q^0] = y[p^0_1:q^0_1]\circ y[p^0_2:q^0_2]\circ \cdots \circ y[p^0_{s}:q^0_s]
	\end{align*}
	where $l^0 =  \sum_{i = 1}^{s}l^0_{i}$, such that 
	\begin{equation*}
		\ed(x[\sum_{j = 1}^{i-1}l^0_{j}+1:\sum_{j = 1}^{i}l^0_{j}], y[p^0_{i}:q^0_i])\leq \lceil u/s \rceil , \; \forall i\in [s].
	\end{equation*}
	For convenience, we denote $b^0_i = \sum_{j = 1}^{i}l^0_{j}$ and $b_i = \sum_{j = 1}^{i}l_{j}$. By the defintion, we know $b^0_s = l^0$ and $b_s = l$ and all $l^0_i, l_i$ are non-negative. We have $b_i = a_{i+1}-1$.
	
	We show that $b_i\geq b^0_i$ for all $i\in [s]$ by induction on $i$. For the base case $i = 1$, let $\bar{l}^0_1$ be the largest integer such that there is a substring of $y$ within edit distance $\lceil u/s \rceil $ to $x[1: \bar{l}^0_1]$. We know $\bar{l}^0_1 \geq l^0_1$. Since we assume Lemma~\ref{lem:findlongestsubstring} holds for inputs $x, y, \lceil u/s\rceil  , s$, we know $l_1\geq \bar{l}^0_1$. Thus, $l_1\geq l^0_1$ and $b_1\geq b^0_1$. Now assume $b_i\geq b^0_i$ holds for some $i \in [s-1]$, we show that $b_{i+1}\geq b^0_{i+1}$. If $b_i\geq b^0_{i+1}$, $b_{i+1}\geq b^0_{i+1}$ holds. We can assume $ b^0_{i+1} > b_i \geq b^0_{i}$. We show that $l_{i+1}\geq b^0_{i+1}-b_i$. To see this, let $\bar{l}^0_i$ be the largest integer such that there is a substring of $y$ within edit distance $\lceil u/s \rceil$ to $x[b_i+1: b_i+ \bar{l}^0_{i+1}]$. We know $l_{i+1}\geq \bar{l}^0_{i+1}$ since we assume Lemma~\ref{lem:findlongestsubstring} holds for inputs $x[b_i+1:|x|], y, \lceil u/s\rceil  , s$. Notice that $\ed(x[b^0_{i}+1:b^0_{i+1}], y[p^0_i, q^0_i])\leq \lceil u/s\rceil$, we know $\bar{l}^0_i$ is at least $b^0_{i+1}-b_i$ since $x[b_{i}+1:b^0_{i+1}] $ is a substring of $x[b^0_{i}+1:b^0_{i+1}]$. We know $l_{i+1} \geq \bar{l}^0_{i+1}\geq b^0_{i+1}-b_i$. Thus, $b_{i+1} = b_{i}+ l_{i+1}\geq b^0_{i+1}$. Thus, we must have $l\geq l_0$.

	

	
	We now prove inequality~\ref{eq:findlongestsubstring}. After the while loop, the algorithm then finds a substring of $y$, $y[p:q]$, that minimizes $\tilde{d}(p', q')$ where $\tilde{d}(p',q')$ is a $1+\eps$ approximation of $ \ed(y[p':q'], y[p_1:q_1]\circ y[p_2:  q_2]\circ \cdots \circ y[p_T: q_T])$. For convenience, we denote 
	$$\tilde{y} = y[p_1:q_1]\circ y[p_2:  q_2]\circ \cdots \circ y[p_T: q_T]. $$ 
	Thus,
	\begin{equation}
		\label{ed:eq1}
		\ed(y[p:q], \tilde{y}) \leq \tilde{d}(p, q) \leq  (1+\eps)\ed(y[p^*:q^*], \tilde{y}).
	\end{equation}

	Let $y[\bar{p}^*_j: \bar{q}^*_j]$ be the substring of $y$ that is closest to $x[a_j: a_{j+1}-1]$ in edit distance. By the inductive hypothesis, we assume the output of $\mathsf{FindLongestSubstring}(x, y , \lceil u/s \rceil, s)$ satisfies Lemma~\ref{lem:findlongestsubstring}. We know 
	\begin{align}
		\ed(x[a_j: a_{j+1}-1], y[\bar{p}^*_j: \bar{q}^*_j]) &\leq \ed(x[a_j: a_{j+1}-1], y[p_j: q_j]) \notag\\
		&\leq d_j \label{ed:eq2}\\
		&\leq c(u/s,s)\ed(x[a_j: a_{j+1}-1], y[\bar{p}^*_j: \bar{q}^*_j]). \notag
	\end{align}

	
	By the optimality of $y[p^*:q^*]$ and triangle inequality, we have 

	\begin{align*}
		\ed(x[1:l], y[p^*: q^*]) &\leq \ed(x[1:l], y[p: q]) & \\
		&\leq \ed(x[1:l], \tilde{y}) + \ed( \tilde{y},y[p:q] ) &\text{By triangle inequality}\\
		& \leq \sum_{i = 1}^{T}d_i + \tilde{d}(p,q) & \\
		& = d & \\
	\end{align*}

	Also notice that we can write $y[p^*:q^*] = y[p^*_1: q^*_1]\circ y[p^*_2:q^*_2] \circ \cdots \circ y[p^*_T:q^*_T]$ such that 
	
	\begin{equation}
		\label{ed:eq3}
		 \ed(x[1:l], y[p^*:q^*]) = \sum_{i = 1}^T\ed(x[a_i:a_{i+1}+1], y[p^*_i, q^*_i]).
	\end{equation}

	We have
	\begin{align*}
			d & = \sum_{i= 1}^{T}d_j + \tilde{d}(p,q) &\\
			& \leq \sum_{i = 1}^{T} d_i + (1+\eps)\ed(\tilde{y}, y[p^*: q^*]) & \text{By \ref{ed:eq1}}\\
			& \leq \sum_{i = 1}^{T} d_i + (1+\eps)\Big(\ed(x[1:l], y[p^*:q^*]) + \ed(x[1:l], \tilde{y})\Big) & \text{By triangle inequality}\\
			&\leq \sum_{i = 1}^{T} d_i + (1+\eps)\sum_{i = 1}^T\ed(x[a_i: a_{i+1}-1], y[p^*_i: q^*_i]) \\
			& \qquad + (1+\eps)\sum_{i = 1}^T\ed(x[a_i:a_{i+1}-1], y[p_i: q_i]) & \\
			& \leq (1+\eps) \sum_{i = 1}^{T} (2c(u/s, s) + 1)\ed(x[a_i: a_{i+1}-1], y[p^*_i: q^*_i]) &\text{By \ref{ed:eq2}} \\
			& \leq (1+\eps)(2c(u/s, s) + 1)\ed(x[1:l], y[p^* : q^*]) & \text{By \ref{ed:eq3}}\\
	\end{align*}

	We set $c(u, s) = (1+\eps)(2c(u/s, s) + 1)$. Since we assume $c(u/s, s) = 2^{O(\log_s (u/s))}$, we know $c(u,s) = 2^{O(\log_s u)}$. This proves inequality~\ref{eq:findlongestsubstring}.

	
\end{proof}

\begin{lemma}
	\label{lem:findlongestsubstring_space}
	Given any $x, y\in \Sigma^n$, let $u, s\leq n$, $\mathsf{FindLongestSubstring}(x,y, u, s)$ runs in polynomial time. It queries $x$ from left to right in one pass and uses $O(s \log_s u\;\polylog(n))$ bits of extra space. 
\end{lemma}

\begin{proof}[Proof of Lemma~\ref{lem:findlongestsubstring_space}]
	If $u\leq s$, we need to store at most $u$ edit operations that transforms, $y[p: q]$ to $x[1:l]$ for current $(p,q)$ and $l$. This takes $O(u\log n) = O(s\log n)$ bits of space. Notice when $l \geq  u$, we do not need to remember $x[1:l]$. Instead, we query $x[1:l]$ by looking at $y[p:q]$ and the edit operations. 

	If $u \geq s $, the algorithm is recursive.  Let us consider the recursion tree. We assume the algoirthm starts from level 1 (root of the tree) and the depth of the recursion tree is $d$. At level $i$ for $i\leq d$, the upper bound of edit distance (third input to algorithm $\mathsf{FindLongestSubstring}$) is $u_i = \frac{u}{s^{i-1}}$. The recursion stops when $u_i\leq s $. Thus, the depth of recursion $d$ is $O(\log_s u)$ by our assumption on $u$ and $s$. The order of computation on the recursion tree is the same as depth-first search and we only need to query $x$ at the bottom level. There are at most $s^d = O(u)$ leaf nodes (nodes at the bottom level). For the $i$-th leaf nodes of the recursion tree, we are computing $\mathsf{FindLongestSubstring}(x[a_i:n], y, u_d, s)$ with $u_d\leq s$ where $a_i$ is the last position visited by the previous leaf node. Thus, we only need to access $x$ from left to right in one pass. 
	
	For each inner node, we need to remember $s$ pairs of indices $(p_i, q_i)$ and $s$ integers $d_i$ for $i\in [s]$, which takes $O(s\log n)$ space. For computing an $(1+\eps)$-approximation of $\tilde{d}(p',q')$, we can use the space-efficient approximation algorithm from~\cite{cheng2020space} that uses only $\polylog(n)$ space. Thus, each inner node takes $O(s \;\polylog(n))$ bits of space. For the leaf nodes, we have $u\leq s$. Thus, we can compute it with $O(s\log n)$ bits of extra memory. Since the order of computation is the same as depth-first search, we only need to maintain one node in each recursive level and we can reuse the space for those nodes we have already explored. Since the depth of recursion is $O(\log_s u)$, the total space required is $O(s\log_s u \;\polylog(n))$. 
	
	For the time complexity, notice that the space efficient algorithm for $1+\eps$ approximating $\ed$ takes polynomial time. The depth of recursion is $O(\log_s u)$ and at each recursive level, the number of nodes is polynomial in $n$ , we need to try $O(n^2)$ different $p',q'$ at each node except the leaf nodes. Thus, the running time is still polynomial.

\end{proof}

We now present our algorithm for approximating edit distance in the asymmetric streaming model. The pseudocode of our algorithm is given in algorithm~\ref{algo:ed_1}. It takes three input, an online string $x$, an offline string $y$ and a parametert $\delta\in (0,1/2]$.

\begin{algorithm}
	\SetAlgoLined
	\DontPrintSemicolon
	\KwIn{Two strings: $x\in \Sigma^*$ and $y\in \Sigma^n$, a constant $\delta \in(0,1/2]$}
	\KwOut{An estimation of $\ed(x,y)$ }
	initialize $a\leftarrow 1$, $i\leftarrow 1$. \;
	
	set $k$ to be a large constant such that $s = k^\delta$ is an integer. \;
	\While{$a\leq n$ \label{algo:ed_1_loop} }{
		$(p_i, q_i), l_i, d_i \leftarrow \mathsf{FindLongestSubstring}(x[a:n], y, k/s, s )$. \label{algo:ed2_findlongestsubstring}\;
		$a \leftarrow a + l_i $. \;
		$ i \leftarrow i+1$.  \;
		\If{$i\geq k^\delta$}{
			$k\leftarrow 2^{1/\delta} k.$ \;
			$s\leftarrow k^\delta$. \;
		}
		
	}
	$T \leftarrow i-1$. \;
	compute $\tilde{d}$, an $(1+\eps)$-approximation of $\ed(y, y[p_1,q_1]\circ y[p_2, q_2]\circ \cdots \circ y[p_T, q_T])$. \Comment*[f]{Using the algorithm guaranteed by Lemma~\ref{lem:smallspace_ed}}\;
	
	\Return{$\bar{d} = \tilde{d} + \sum_{i = 1}^{T}d_i$}\;
	\caption{Approximate edit distance in asymmetric streaming model}
	\label{algo:ed_1}	
\end{algorithm}

\begin{lemma}
	\label{lem:ed_time_space}
	Assume $d = \ed(x,y)$, Algorithm~\ref{algo:ed_1} can be run with $O(\frac{d^\delta}{\delta} \polylog(n))$ bits of space in polynomial time. 
\end{lemma}

\begin{proof}
	Notice that $k$ is initially set to be a constant $k_0$ such that $s_0  = k^\delta_0$ is an integer. $k$ is multiplied by $2^{1/\delta}$ whenever $i\geq k$. We assume that in total, $k$ is updated $u$ times and after the $j$-th update, $k = k_j$, where $k_j = 2^{1/\delta}k_{j-1}$. We let $s_j = k^\delta_j$. Thus, $k^\delta_{j+1} = 2k^\delta_j $ and $s_{j+1} = 2s_j$. We denote the $a$ before $i$-th while loop by $a_i$ so that $a_1 = 1$ and $a_i = 1+\sum_{j = 1}^{i-1}l_j$ for $1<i\leq T$.
	
	We first show the following claim.
	
	\begin{claim}
		\label{claim:ed_space}
		$k^\delta_u \leq 8d^\delta.$
	\end{claim}

	\begin{proof}
		Assume the contrary, we have $k^\delta_u > 8d^\delta$, and thus $k^\delta_{u-1} >4 d^\delta$. 
		
		Let $i_0 = k^\delta_{u-2} $. That is, after $i$ is updated to $i_0$, $k$ is updated from $k_{u-2}$ to $k_{u-1}$. For convenience, we denote $\bar{x} = x[a_{i_0}:n]$. Since $\ed(x,y)\leq d$, there is a substring of $y$, say $\bar{y}$, such that $\ed(\bar{x}, \bar{y})\leq d$. Let $\xi = \frac{k_{u-1}}{s_{u-1}}$, by Lemma~\ref{lem:ed_divide},  we can partition $\bar{x}$ and $\bar{y}$ each into $\lceil d/\xi\rceil  $ parts such that 
		\begin{align*}
			\bar{x} = \bar{x}^1\circ \bar{x}^2 \circ \cdots \circ \bar{x}^{\lceil d/\xi\rceil },\\
			\bar{y} = \bar{y}^1\circ \bar{y}^2 \circ \cdots \circ \bar{y}^{\lceil d/\xi\rceil },
		\end{align*}
		and $\ed(\bar{x}^j, \bar{y}^j) \leq \xi$ for $j\in [\lceil d/\xi \rceil ]$. We denote $\bar{x}^j = x[\beta_j: \gamma_j]$ for $j\in [\lceil d/\xi \rceil ]$ so that $\bar{x}^j$ starts with $\beta_j$-th symbol and ends with $\gamma_j$-th symbol of $x$. We have $\beta_1 = a_{i_0}$ and $\gamma_i+1 = \beta_{i+1}$.

		We first show that for $i_0\leq i \leq T $, $a_i\geq \beta_{i-i_0+1}$. Assume  there is some $i$ such that $a_i\geq \beta_{i-i_0+1}$ and $a_{i+1} < \beta_{i-i_0+2}$. By Lemma~\ref{lem:findlongestsubstring}, $l_i$ is at least the largest integer $l$ such that there is a substring of $y$ within edit distance $\xi$ to $x[a_i:a_i+l-1]$. Since $x[a_i:a_{i+1}-1] = x[a_i:a_i+l_i-1]$ is a substring of $x[\beta_{i-i_0+1}: \gamma_{i-i_0+1}] = \bar{x}^{i-i_0+1}$ and $\ed(\bar{x}^{i-i_0+1}, \bar{y}^{i-i_0+1})\leq \xi$. There is a substring of $y$ within edit distance $\xi$ to $x[a_i:\gamma_{i-i_0+1}]$.  Thus, we must have $l_i \geq \gamma_{i-i_0+1} - a_i + 1$. Thus $a_{i+1} = a_i + l_i \geq \gamma_{i-i_0+1}  + 1 = \beta_{i-i_0+2}$. This is contradictory to our assumption that $a_{i+1} < \beta_{i-i_0+2}$. 
		
		We now show that $T\leq i_0+ \lceil d/\xi \rceil -1$. If $T >  i_0+\lceil d/\xi\rceil $, we have $a_{i_0+\lceil d/\xi \rceil -1}\geq \beta_{\lceil d/\xi \rceil } $. By Lemma~\ref{lem:findlongestsubstring}, we must have $a_{i_0+\lceil d/\xi \rceil } = n+1$. Since $a_{i_0+\lceil d/\xi \rceil } > n$, we will terminate the while loop and set $T = i_0+\lceil d/\xi \rceil -1$. Thus, we have $T\leq i_0+ \lceil d/\xi \rceil -1$.
		

		Meanwhile, by the assumption that $k^\delta_{u-1} > 4d^\delta$, we have $k^\delta_{u-1} - k^\delta_{u-2} = k^\delta_{u-2}  > 2d^\delta > 2d/\xi$. If $k$ is updated $u$ times, $T$ is at least $k^\delta_{u-1} -1$. Thus, $T\geq   k^\delta_{u-1}-1  > i_0 +  2 d/\xi -1$. This is contradictory to $T\leq i_0+\lceil d/\xi \rceil -1$.  
	\end{proof}

	By the above claim, we have $T = O(d^\delta)$. Thus, we can remember all $(p_i, q_i), l_i, d_i$ for $i\in [T]$ with $O(d^\delta\log n)$ bits of space. For convenience, let $\tilde{y} =  y[p_1:q_1]\circ y[p_2: q_2]\circ \cdots \circ y[p_T: q_T]$. We can use the space efficient algorithm from \cite{cheng2020space} (Lemma~\ref{lem:smallspace_ed}) to compute a $(1+\eps)$-approximation of $\ed(y, \tilde{y})$ with $\polylog(n)$ bits of space in polynomial time. By Lemma~\ref{lem:findlongestsubstring_space}, we can run $\mathsf{FindLongestSubstring}$ with $O(\frac{d^\delta}{\delta} \polylog(n))$ space since $s = O(d^\delta)$. The total amount of space used is $O(\frac{d^\delta}{\delta} \polylog(n))$.
	
	We run $\mathsf{FindLongestSubstring}$ $ O(d^\delta)$ times and compute a $(1+\eps)$-approximation of $\ed(y, \tilde{y})$ with polynomial time. The total running time is still a polynomial. 
\end{proof}

\begin{lemma}
	\label{lem:3-approx_ed}
	Assume $\ed(x,y) = d$, there is a one pass deterministic algorithm that outputs a $(3+\eps)$-approximation of $\ed(x,y)$ in asymmetric streaming model, with $O(\sqrt{d}\;\polylog(n))$ bits of space in polynomial time. 
\end{lemma}

\begin{proof}[Proof of Lemma~\ref{lem:3-approx_ed}]

	We run algorithm~\ref{algo:ed_1} with parameter $\delta = 1/2$. The time and space complexity follows from Lemma~\ref{lem:ed_time_space}. 
	
	Notice that algorithm~\ref{algo:ed_1} executes $\mathsf{FindLongestSubstring}$ $T$ times and records $T$ outputs $(p_i, q_i), l_i, d_i$ for $i\in [T]$. We also denote $a_i = 1 + \sum_{j = 1}^{i-1}l_j$ with $a_1 = 1$. Thus, we can partition $x$ into $T$ parts such that 
	\begin{equation*}
		x = x^1\circ x^2 \circ \cdots \circ x^T
	\end{equation*}
	where $x^i = x[a_i:a_{i+1}-1]$. Since $s = k^{1/2} = k/s$, by Lemma~\ref{lem:findlongestsubstring}, we know $\ed(x^i, y[p_i:q_i]) = d_i$. 

	We now show that the output $\bar{d} = \tilde{d}+ \sum_{i = 1}^{T}d_i$ is a $3+\eps$ approximation of $d = \ed(x,y)$. Let $\tilde{y} =  y[p_1:q_1]\circ y[p_2: q_2]\circ \cdots \circ y[p_T: q_T]$. Notice that $\ed(x, \tilde{y})\leq \sum_{i = 1}^T \ed(x^i, y[p_i: q_i])$ and $\ed(\tilde{y}, y)\leq \tilde{d}$. We have
	 
	\begin{align*}
		\ed(x,y) & \leq \ed(x, \tilde{y})+ \ed(\tilde{y}, y) &\text{By triangle inequality}\\
		& \leq \tilde{d} +\sum_{i = 1}^Td_i&\\
		& = \bar{d} &\\
	\end{align*}
	On the other hand, we can divide $y$ into $T$ parts such that $y = \hat{y}^1\circ\hat{y}^2 \circ \cdots \circ \hat{y}^T$ and guarantee that 
	\begin{equation}
		\label{ed:eq4}
		\ed(x, y) = \sum_{i = 1}^T \ed(x^i, \hat{y}^i).
	\end{equation}
	Also, by Lemma~\ref{lem:findlongestsubstring},  $y[p_i:q_i]$ is the substring of $y$ that is closest to $x^i$ in edit distance, we know
	\begin{equation}
		\label{ed:eq5}
		\ed(x^i, \hat{y}^i) \geq \ed(x^i, y[p_i: q_i]).
	\end{equation}
	We have
	\begin{align*}
		\bar{d} & = \tilde{d} +\sum_{i = 1}^Td_i & \\ 
			&\leq (1+\eps) \ed(y, \tilde{y})  + \sum_{i = 1}^T\ed\big(x^i, y[p_i: q_i]\big)  & \\
			& \leq (1+\eps) \Big(\ed(y,x)+ \ed(\tilde{y}, x)\Big)+ \sum_{i = 1}^T\ed\big(x^i, y[p_i: q_i]\big)   & \\
			& \leq (1+\eps) \sum_{i = 1}^T \ed\big(x^i, \hat{y}^i\big) + (2+\eps)\sum_{i = 1}^T\ed\big(x^i, y[p_i: q_i]\big)  &\text{By \ref{ed:eq4}} \\
			&\leq (3+2\eps) \sum_{i = 1}^T \ed\big(x^i, \hat{y}^i\big) & \text{By \ref{ed:eq5}}\\
			&  = (3+2\eps)\ed(x,y).
	\end{align*}

	Since we can pick $\eps$ to be any constant, we pick $\eps' = 2 \eps$ and the output $d$ is indeed a $3+\eps'$ approximation of $\ed(x,y)$. 
\end{proof}

\begin{lemma}
	\label{lem:2^delta-approx_ed}
	Assume $\ed(x,y) = d$, for any constant $\delta\in (0,1/2)$, there is an algorithm that outputs a $2^{O(\frac{1}{\delta})}$-approximation of $\ed(x,y)$ with $O(\frac{d^\delta}{\delta} \polylog(n))$ bits of space in polynomial time in the asymmetric streaming model.  
\end{lemma}

\begin{proof}[Proof of Lemma~\ref{lem:2^delta-approx_ed}]
	We run algorithm~\ref{algo:ed_1} with parameter $\delta\in(0,1/2)$. Without loss of generality,we can assume $1/\delta$ is an integer. The time and space complexity follows from Lemma~\ref{lem:ed_time_space}.

	Again we can divide $x$ into $T$ parts such that $x^i =  x[a_i:a_{i+1}-1]$ where $a_i = 1+ \sum_{j=1}^{i-1}l_j$. By Lemma~\ref{lem:findlongestsubstring}, we have $\ed(x^i, y[p_i:q_i])\leq d_i$.

	Let $y = \hat{y}^1\circ\hat{y}^2 \circ \cdots \circ \hat{y}^T$ be a partition of $y$ such that 
	\begin{equation*}
		\ed(x, y) = \sum_{i = 1}^T \ed\big(x^i, \hat{y}^i\big).
	\end{equation*}

	We now show that $\bar{d}$ is a $2^{O(1/\delta)}$ approximation of $d = \ed(x,y)$. Similarly, we let $\tilde{y} =  y[p_1:q_1]\circ y[p_2: q_2]\circ \cdots \circ y[p_T: q_T]$.  We have
	\begin{align*}
		\ed(x,y) & \leq \ed(x, \tilde{y})+ \ed(\tilde{y}, y) &\text{By triangle inequality}\\
		& \leq \tilde{d} +\sum_{i = 1}^Td_i&\\
		& = \bar{d} &
	\end{align*}
	
	Let $y[p^*_i : q^*_i]$ be the substring of $y$ that is closest to $x^i$ in edit distance. By Lemma~\ref{lem:findlongestsubstring}, we know
	
	\begin{equation}
		\label{ed:eq6}
		\ed(x^i, y[p^*_i : q^*_i])\leq \ed(x^i, y[p_i : q_i])\leq c\ed(x^i, y[p^*_i : q^*_i])
	\end{equation}
	where $c = 2^{O(\log_s(k^{1-\delta}))} = 2^{O(1/\delta)}$. Thus, we have  
	\begin{equation}
		\label{ed:eq7}
		\ed\big(x^i, y[p_i: q_i]\big)\leq  c\ed(x^i, y[p^*_i : q^*_i])\leq c\ed\big(x^i, \hat{y}^i\big).
	\end{equation}
	Thus, we can get a similar upper bound of $\bar{d}$ by
	\begin{align*}
		\bar{d} & = \tilde{d} +\sum_{i = 1}^Td_i & \\ 
		&\leq (1+\eps) \ed(y, \tilde{y})  + \sum_{i = 1}^T\ed\big(x^i, y[p_i: q_i]\big)  & \\
		& \leq (1+\eps) \Big(\ed(y,x)+ \ed(\tilde{y}, x)\Big)+ \sum_{i = 1}^T\ed\big(x^i, y[p_i: q_i]\big)   & \\
		&\leq (1+\eps)(1+2c)\sum_{i = 1}^T \ed\big(x^i, \hat{y}^i\big) &  \text{By \ref{ed:eq7}}\\
		&  = 2^{O(1/\delta)}\ed(x,y).
	\end{align*}
	
	This finishes the proof. 
\end{proof}

\subsection{Algorithm for LCS}

We show that the algorithm presented in \cite{rubinstein2020reducing} for approximating $\lcs$ to $(1/2+\eps)$ factor can be slightly modified to work in the asymmetric streaming model. The following is a restatement of Theorem~\ref{thm:LCS_Algo}

\begin{theorem}[Restatement of Theorem~\ref{thm:LCS_Algo}]
	\label{lem:lcs_algo_2}
	Given two strings $x,y\in \{0,1\}^n$, for any constant $\delta\in(0,1/2)$, there exists a constant $\eps>0$, such that there is a one-pass deterministic algorithm that outputs a $2-\eps$ approximation of $\lcs(x,y)$ with $\tilde{O}(n^\delta/\delta)$ bits of space in the asymmetric streaming model in polynomial time.
\end{theorem} 

The algorithm from \cite{rubinstein2020reducing} uses four algorithms as ingredients $\mathsf{Match}$, $\mathsf{BestMatch}$, $\mathsf{Greedy}$, $\mathsf{ApproxED}$. We give a description here and show why they can be modified to work in the asymmetric streaming model.

Algorithm $\mathsf{Match}$ takes three inputs: two strings $x, y\in \Sigma^*$ and a symbol $\sigma\in \Sigma$. Let $\sigma(x)$ be the number of symbol $\sigma$ in string $x$ and we similarly define $\sigma(y)$. $\mathsf{Match}(x,y,\sigma)$ computes the length of longest common subsequence between $x$ and $y$ that consists of only $\sigma$ symbols. Thus, $\mathsf{Match}(x,y,\sigma) = \min(\sigma(x), \sigma(y))$. 

Algorithm $\mathsf{BestMatch}$ takes two strings $x, y\in \Sigma^*$ as inputs. It is defined as $\mathsf{BestMatch}(x,y) = \max_{\sigma\in\Sigma}\mathsf{Match}(x,y,\sigma)$. 

Algorithm $\mathsf{Greedy}$ takes three strings $x^1, x^2, y\in \Sigma^*$ as inputs. It finds the optimal partition $y = y^1\circ y^2$ that maximizes $\mathsf{BestMatch}(x^1,y^1) +\mathsf{BestMatch}(x^2,y^2)$. The output is $\mathsf{BestMatch}(x^1,y^1) +\mathsf{BestMatch}(x^2,y^2)$ and $y^1, y^2$. 

Algorithm $\mathsf{ApproxED}$ takes two strings $x, y\in \Sigma^n$ as inputs. Here we require the input strings have equal length. $\mathsf{ApproxED}$ first computes a constant approximation of $\ed(x,y)$, denoted by $\tilde{\ed}(x,y)$. The output is $n-\tilde{\ed}(x,y)$.

In the following, we assume $\Sigma = \{0,1\}$ and input strings to the main algorithm $x,y$ both have length $n$. All functions are normalized with respect to $n$. Thus, we have $1(x)$,$0(x)$,$\lcs(x,y)$, $\mathsf{ApproxED}(x,y)\in (0,1)$.

The algorithm first reduce the input to the perfectly unbalanced case. 

We first introduce a few parameters. 

$\alpha = \min\{1(x), 1(y), 0(x), 0(y)\}$.

$\beta = \theta(\alpha)$ is a constant. It can be smaller than $\alpha$ by an arbitrary constant factor.  

$\gamma\in(0,1)$ is a parameter that depends on the accuracy of the approximation algorithm for $\ed(x,y)$ we assume.

\begin{definition}[Perfectly unbalanced]
	We say two string $x, y\in \Sigma^n$ are perfectly unbalanced if
	
	\begin{equation}
		\lvert 1(x)-0(y)\rvert \leq \delta \alpha, 
	\end{equation}
	and 
	\begin{equation}
		0(x)\notin [1/2-\beta', 1/2+\beta'].
	\end{equation}
	
	Here, we require $\delta$ to be a sufficiently small constant such that $\delta \alpha\leq \beta$ and $\beta' = 10\beta$. 
\end{definition}

To see why we only need to consider the perfectly unbalanced case, \cite{rubinstein2020reducing} proved the following two Lemmas. 

\begin{lemma}
	\label{lem:perfectly_unbalanced_case}
	If $\lvert 1(x)-0(y)\rvert \delta \leq \delta \alpha$, then
	\begin{equation*}
		\mathsf{BestMatch}(x,y) \geq (1/2 + \delta/2)\lcs(x,y).
	\end{equation*}
\end{lemma}

\begin{lemma}
	\label{lem:perfectly_balanced_case}
	Let $\beta', \gamma>0$ be sufficiently small constants. If $0(x)\in [1/2-\beta', 1/2+\beta']$, then
	\begin{equation*}
		\max\{\mathsf{BestMatch}(x,y), \mathsf{ApproxED}(x,y)\}\leq (1/2+\gamma)\lcs(x,y).
	\end{equation*}
\end{lemma}
 
If the two input string are not perfectly unbalanced, we can compute $\mathsf{BestMatch}(x,y)$ and $\mathsf{ApproxED}$ to get a $1/2+\eps$ approximation fo $\lcs(x,y)$ for some small constant $\eps>0$. 

Given two strings in the perfectly unbalanced case, without loss of generality, we assume $1(y) = \alpha$. The algorithm first both strings $x, y$ into three parts such that $x = L_x\circ M_x\circ R_x$ and $y = L_y\circ M_y\circ R_y$ where $\lvert L_x\rvert = \lvert R_x\rvert = \lvert L_y\rvert = \lvert R_y\rvert = \alpha n$. Then, the inputs are divided into six cases according to the first order statistics (number of 0's and 1's) of $L_x$, $R_x$, $L_y$, $R_y$. For each case, we can use the four ingredient algorithms to get a $(1/2+\eps)$ approximation of $\lcs(x,y)$ for some small constant $\eps$. We refer readers to \cite{rubinstein2020reducing} for the pseudocode and analysis of the algorithm. We omit the details here. 

We now prove Theorem~\ref{lem:lcs_algo_2}. 

\begin{proof}[Proof of Theorem~\ref{lem:lcs_algo_2}]
	As usual, we assume $x $ is the online string and $y$ is the offline string. If the two input strings are not perfectly unbalanced, we can compute $\mathsf{BestMatch}(x,y)$ in $O(\log n)$ space in the asymmetric streaming model since we only need to compare the first order statistics of $x$ and $y$. Also, for any constant $\delta$ we can compute a constant approximation (dependent on $\delta$) of $\ed(x,y)$ using $\tilde{O}(n^\delta)$ space by Lemma~\ref{lem:2^delta-approx_ed}. Thus, we only need to consider the case where $x$ and $y$ are perfectly balanced.

	Notice that the algorithm from \cite{rubinstein2020reducing} needs to compute $\mathsf{Match}$, $\mathsf{BestMatch}$, $\mathsf{Greedy}$, $\mathsf{ApproxED}$ with input strings chosen from  $x$, $L_x$, $L_x\circ M_x$, $R_x$, $M_x\circ R_x$ and $y$, $L_y$, $L_y\circ M_y$, $R_y$, $M_y\circ R_y.$
	
	If we know the number of 1's and 0's in $L_x, M_x$, and $R_x$, then we can compute $\mathsf{Match}$ and $\mathsf{BestMatch}$ with any pair of input strings from $x$, $L_x$, $L_x\circ M_x$, $R_x$, $M_x\circ R_x$ and $y$, $L_y$, $L_y\circ M_y$, $R_y$, $M_y\circ R_y.$.
	
	For $\mathsf{ApproxED}$, according to the algorithm in \cite{rubinstein2020reducing},  we only need to compute $\mathsf{ApproxED}(x,y)$, $\mathsf{ApproxED}(L_x,L_y)$, and  $\mathsf{ApproxED}(R_x,R_y)$. For any constant $\delta>0$, we can get a constant approximation (dependent on $\delta$) of edit distance with $\tilde{O}(n^\delta)$ space in the asymmetric streaming model by Lemma~\ref{lem:2^delta-approx_ed}.
	
	For $\mathsf{Greedy}$, there are two cases. For the first case, the online string $x$ is divided into two parts and the input strings are $x^1, x^2, y$ where $x = x^1\circ x^2$.  Notice that $x^1$ can only be $L_x $ or $L_x\circ M_x$. In this case, we only need to remember $1(L_x)$, $1(M_x)$, and $1(R_x)$. Since the length of $L_x$, $M_x$, and $R_x$ are all fixed. We know $0(L_x) = \lvert L_x\rvert - 1(L_x)$, and similar for $M_x$ and $R_x$. We know for $l\in [n]$
	\begin{align*}
		& \mathsf{BestMatch}(y[1:l], x^1)+ \mathsf{BestMatch}(y[l+1:n], x^2) \\
		= & \max_{\sigma\in \{0,1\}}\mathsf{Match}(y[1:l], x^1, \sigma) + \max_{\sigma\in \{0,1\}}\mathsf{Match}(y[l+1:n], x^2, \sigma)\\
		= & \max_{\sigma\in \{0,1\}}\big(\min\{\sigma(y[1:l]),\sigma(x^1)\}\big) + \max_{\sigma\in \{0,1\}}\big(\min\{\sigma(y[l+1:n]),\sigma(x^2)\}\big)\\
	\end{align*}
	Given $\alpha = 1(y)$, we know $1(y[l+1:n]) = \alpha-1(y[1:l])$, $0(y[1:l]) = l/n-0(y[1:l]) $, and $0(y[l+1:n]) = (n-l)/n-0(y[l+1:n]) $. Thus we only need to read $y$ from left to right once and remember the index $l$ that maximizes $\mathsf{BestMatch}(y[1:l], x^1)+ \mathsf{BestMatch}(y[l+1:n], x^2)$. 
	
	For the case when the input strings are $y^1, y^2, x$, if we know $0(x)$, similarly, we can compute $\mathsf{Greedy}(y^1, y^1, x)$ by reading $x$ from left to right once with $O(\log n)$ bits of space. Here, $0(x)$ is not known to us before computation. However, in the perfectly unbalanced case, we assume $|1(y)-0(x) |< \delta$ is a sufficiently small constant. We can simply assume $0(x) = 1(y) = \alpha$ and run $\mathsf{BestMatch}(y^1, y^2, x)$ in the asymmetric streaming model. This will add an error of at most $\delta$. The algorithm still outputs a $(1/2+\eps)$ approximation of $\lcs(x,y)$ for some small constant $\eps>0$. 

\end{proof}

%

\bibliographystyle{alpha}
\bibliography{references}

\newcommand{\etalchar}[1]{$^{#1}$}
\begin{thebibliography}{CDG{\etalchar{+}}19}

\bibitem[ABW15]{ABW15}
Amir Abboud, Arturs Backurs, and Virginia~Vassilevska Williams.
\newblock Tight hardness results for lcs and other sequence similarity
  measures.
\newblock In {\em Foundations of Computer Science (FOCS), 2015 IEEE 56th Annual
  Symposium on}. IEEE, 2015.

\bibitem[AJP10]{AndoniJP2010}
Alexandr Andoni, T.S. Jayram, and Mihai Patrascu.
\newblock Lower bounds for edit distance and product metrics via poincare type
  inequalities.
\newblock In {\em Proceedings of the twenty first annual ACM-SIAM symposium on
  Discrete algorithms}, pages 184--192, 2010.

\bibitem[AK10]{AndoniK2010}
Alexandr Andoni and Robert Krauthgamer.
\newblock The computational hardness of estimating edit distance.
\newblock {\em SIAM Journal on Discrete Mathematics}, 39(6):2398--2429, 2010.

\bibitem[AKO10]{AKO10}
Alexandr Andoni, Robert Krauthgamer, and Krzysztof Onak.
\newblock Polylogarithmic approximation for edit distance and the asymmetric
  query complexity.
\newblock In {\em Foundations of Computer Science (FOCS), 2010 IEEE 51st Annual
  Symposium on}. IEEE, 2010.

\bibitem[AN20]{AndoniN20}
Alexandr Andoni and Negev~Shekel Nosatzki.
\newblock Edit distance in near-linear time: it's a constant factor.
\newblock In {\em Proceedings of the 61st Annual Symposium on Foundations of
  Computer Science (FOCS)}, 2020.

\bibitem[BI15]{BI15}
Arturs Backurs and Piotr Indyk.
\newblock Edit distance cannot be computed in strongly subquadratic time
  (unless seth is false).
\newblock In {\em Proceedings of the forty-seventh annual ACM symposium on
  Theory of computing (STOC)}. IEEE, 2015.

\bibitem[BR20]{brakensiek2019constant}
Joshua Brakensiek and Aviad Rubinstein.
\newblock Constant-factor approximation of near-linear edit distance in
  near-linear time.
\newblock In {\em Proceedings of the 52nd annual ACM symposium on Theory of
  computing (STOC)}, 2020.

\bibitem[BZ16]{BelazzouguiZ16}
Djamal Belazzougui and Qin Zhang.
\newblock Edit distance: Sketching, streaming, and document exchange.
\newblock In {\em Proceedings of the 57th IEEE Annual Symposium on Foundations
  of Computer Science}, pages 51--60. IEEE, 2016.

\bibitem[CDG{\etalchar{+}}19]{CDGKS18}
Diptarka Chakraborty, Debarati Das, Elazar Goldenberg, Michal Koucky, and
  Michael Saks.
\newblock Approximating edit distance within constant factor in truly
  sub-quadratic time.
\newblock In {\em Foundations of Computer Science (FOCS), 2018 IEEE 59th Annual
  Symposium on}. IEEE, 2019.

\bibitem[CFH{\etalchar{+}}21]{ChengFHJLRSZ}
Kuan Cheng, Alireza Farhadi, MohammadTaghi Hajiaghayi, Zhengzhong Jin, Xin Li,
  Aviad Rubinstein, Saeed Seddighin, and Yu~Zheng.
\newblock Streaming and small space approximation algorithms for edit distance
  and longest common subsequence.
\newblock In {\em International Colloquium on Automata, Languages, and
  Programming}. Springer, 2021.

\bibitem[CGK16a]{Chakraborty2015LowDE}
Diptarka Chakraborty, Elazar Goldenberg, and Michal Kouck{\'y}.
\newblock Low distortion embedding from edit to hamming distance using
  coupling.
\newblock In {\em Proceedings of the 48th IEEE Annual Annual ACM SIGACT
  Symposium on Theory of Computing}. ACM, 2016.

\bibitem[CGK16b]{ChakrabortyGK2016}
Diptarka Chakraborty, Elazar Goldenberg, and Michal Kouck{\'y}.
\newblock Streaming algorithms for computing edit distance without exploiting
  suffix trees.
\newblock {\em arXiv preprint arXiv:1607.03718}, 2016.

\bibitem[CJLZ20]{cheng2020space}
Kuan Cheng, Zhengzhong Jin, Xin Li, and Yu~Zheng.
\newblock Space efficient deterministic approximation of string measures.
\newblock {\em arXiv preprint arXiv:2002.08498}, 2020.

\bibitem[EJ08]{ergun2008distance}
Funda Ergun and Hossein Jowhari.
\newblock On distance to monotonicity and longest increasing subsequence of a
  data stream.
\newblock In {\em Proceedings of the nineteenth annual ACM-SIAM symposium on
  Discrete algorithms}, pages 730--736, 2008.

\bibitem[FHRS20]{farhadi2020streaming}
Alireza Farhadi, MohammadTaghi Hajiaghayi, Aviad Rubinstein, and Saeed
  Seddighin.
\newblock Streaming with oracle: New streaming algorithms for edit distance and
  lcs.
\newblock {\em arXiv preprint arXiv:2002.11342}, 2020.

\bibitem[GG10]{gal2010lower}
Anna G{\'a}l and Parikshit Gopalan.
\newblock Lower bounds on streaming algorithms for approximating the length of
  the longest increasing subsequence.
\newblock {\em SIAM Journal on Computing}, 39(8):3463--3479, 2010.

\bibitem[GJKK07]{gopalan2007estimating}
Parikshit Gopalan, TS~Jayram, Robert Krauthgamer, and Ravi Kumar.
\newblock Estimating the sortedness of a data stream.
\newblock In {\em Proceedings of the eighteenth annual ACM-SIAM symposium on
  Discrete algorithms}, pages 318--327. Society for Industrial and Applied
  Mathematics, 2007.

\bibitem[HSSS19]{hajiaghayi2019lcs}
MohammadTaghi Hajiaghayi, Masoud Seddighin, Saeed Seddighin, and Xiaorui Sun.
\newblock Approximating lcs in linear time: beating the $\sqrt{n}$ barrier.
\newblock In {\em Proceedings of the Thirtieth Annual ACM-SIAM Symposium on
  Discrete Algorithms}, pages 1181--1200. Society for Industrial and Applied
  Mathematics, 2019.

\bibitem[IPZ01]{IPZ01}
Russell Impagliazzo, Ramamohan Paturi, and Francis Zane.
\newblock Which problems have strongly exponential complexity.
\newblock {\em Journal of Computer and System Sciences}, 63(4):512--530, 2001.

\bibitem[KN97]{communication_complexity_book}
Eyal Kushilevitz and Noam Nisan.
\newblock {\em Communication Complexity}.
\newblock Cambridge Press, 1997.

\bibitem[KS92]{kalyanasundaram1992probabilistic}
Bala Kalyanasundaram and Georg Schintger.
\newblock The probabilistic communication complexity of set intersection.
\newblock {\em SIAM Journal on Discrete Mathematics}, 5(4):545--557, 1992.

\bibitem[KS20]{koucky2019constant}
Michal Kouck{\`y} and Michael~E Saks.
\newblock Constant factor approximations to edit distance on far input pairs in
  nearly linear time.
\newblock In {\em Proceedings of the 52nd annual ACM symposium on Theory of
  computing (STOC)}, 2020.

\bibitem[LNVZ05]{liben2006finding}
David Liben-Nowell, Erik Vee, and An~Zhu.
\newblock Finding longest increasing and common subsequences in streaming data.
\newblock In {\em COCOON}, 2005.

\bibitem[Raz90]{razborov1990distributional}
Alexander~A Razborov.
\newblock On the distributional complexity of disjointness.
\newblock In {\em International Colloquium on Automata, Languages, and
  Programming}, pages 249--253. Springer, 1990.

\bibitem[RS20]{rubinstein2020reducing}
Aviad Rubinstein and Zhao Song.
\newblock Reducing approximate longest common subsequence to approximate edit
  distance.
\newblock In {\em Proceedings of the Fourteenth Annual ACM-SIAM Symposium on
  Discrete Algorithms}, pages 1591--1600. SIAM, 2020.

\bibitem[RSSS19]{RSSS19}
Aviad Rubinstein, Saeed Seddighin, Zhao Song, and Xiaorui Sun.
\newblock Approximation algorithms for lcs and lis with truly improved running
  times.
\newblock In {\em Foundations of Computer Science (FOCS), 2019 IEEE 60th Annual
  Symposium on}. IEEE, 2019.

\bibitem[Sah17]{saha17}
Barna Saha.
\newblock Fast \& space-efficient approximations of language edit distance and
  {RNA} folding: An amnesic dynamic programming approach.
\newblock In {\em FOCS}, 2017.

\bibitem[SS13]{saks2013space}
Michael Saks and C~Seshadhri.
\newblock Space efficient streaming algorithms for the distance to monotonicity
  and asymmetric edit distance.
\newblock In {\em Proceedings of the twenty-fourth annual ACM-SIAM symposium on
  Discrete algorithms}, pages 1698--1709. SIAM, 2013.

\bibitem[SW07]{sun2007communication}
Xiaoming Sun and David~P Woodruff.
\newblock The communication and streaming complexity of computing the longest
  common and increasing subsequences.
\newblock In {\em Proceedings of the eighteenth annual ACM-SIAM symposium on
  Discrete algorithms}, pages 336--345, 2007.

\bibitem[SZ99]{schulman1999asymptotically}
Leonard~J Schulman and David Zuckerman.
\newblock Asymptotically good codes correcting insertions, deletions, and
  transpositions.
\newblock {\em IEEE transactions on information theory}, 45(7):2552--2557,
  1999.

\end{thebibliography}

\appendix

\section{Lower Bound for ED in the Standard Streaming Model}

\begin{theorem}
	There exists a constant $\eps>0$ such that for strings $x,y\in \{0,1\}^n$, any deterministic $R$ pass streaming algorithm achieving an $\eps n$ additive approximation of $\ed(x,y)$ needs $\Omega(n/R)$ space. 
\end{theorem}

\begin{proof}
	Consider an asymptotically good insertion-deletion code $C\subseteq \{0,1\}^n$ over a binary alphabet (See \cite{schulman1999asymptotically} for example). Assume $C$ has rate $\alpha$ and distance $\beta$. Both $\alpha$ and $\beta$ are some constants larger than 0, and we have $|C| = 2^{\alpha n}$. Also, for any $x, y\in C$ with $x \neq y$, we have $\ed(x,y)\geq \beta n$. Let $\eps =\beta/2$ and consider the two party communication problem where player 1 holds $x\in C$ and player 2 holds $y\in C$. The goal is to decide whether $x = y$. Any deterministic protocol has communication complexity at least $\log |C| = \Omega(n)$. Note that any algorithm that approximates $\ed(x,y)$ within an $\eps n$ additive error can decide whether $x=y$. Thus the theorem follows.
\end{proof}

We note that the same bound holds for Hamming distance by the same argument.

\end{document}